\documentclass[10pt,reqno]{amsart}

\usepackage[T1]{fontenc}
\usepackage{amsmath,amssymb,amsthm}
\usepackage{mathtools}
\usepackage{mathrsfs}
\usepackage{microtype}
\usepackage{graphicx}
\usepackage{hyperref}
\usepackage[nameinlink,capitalize,noabbrev]{cleveref}

\hypersetup{
  hidelinks,
  pdftitle={Distance Energies and Negative Type of Flat Tori},
  pdfauthor={Ye Zhou}
}

\numberwithin{equation}{section}

\newtheorem{theorem}{Theorem}[section]
\newtheorem{proposition}[theorem]{Proposition}
\newtheorem{lemma}[theorem]{Lemma}
\newtheorem{corollary}[theorem]{Corollary}

\theoremstyle{definition}

\theoremstyle{remark}
\newtheorem{remark}[theorem]{Remark}

\title[Distance energies and negative type]
{Distance Energies and Negative Type of Flat Tori}

\author[Y. Zhou]{Ye Zhou}
\address{Independent Researcher, Kunshan, Jiangsu, China}
\email{ye.zhou.horizon@gmail.com}

\subjclass[2020]{Primary 51K05; Secondary 43A25, 52C07}
\keywords{flat tori, distance energies, negative type, generalized roundness,
Fourier coefficients, Voronoi cells}

\date{}

\begin{document}

\begin{abstract}
Let \(T_\Lambda=\mathbb R^d/\Lambda\), \(d\geq2\), be a flat torus with quotient metric \(\rho_\Lambda\), and consider the distance energies \(I_\alpha(\mu)=\iint \rho_\Lambda(x,y)^\alpha\,d\mu(x)\,d\mu(y)\) of Borel probability measures \(\mu\). We prove a quantitative Fourier signature of the cut locus: for every Voronoi facet and every \(\alpha>0\), there is a sequence of dual-lattice frequencies approaching the facet normal along which the Fourier coefficients of \(\rho_\Lambda^\alpha\) are positive, with an explicit leading asymptotic determined by the facet. As direct consequences, Haar measure is not a local maximizer for any positive distance power, even among smooth densities, and every flat torus of dimension at least two has supremal negative type and generalized roundness zero. We then solve the global maximization problem for two classes of flat tori. On an orthogonal rectangular torus, the maximizers undergo a transition at \(\alpha=2\): for \(1\leq\alpha<2\) they are the translated uniform measures on the two-torsion subgroup; at \(\alpha=2\) all balanced couplings on translates of that subgroup are extremal; and for \(\alpha>2\) only equally weighted diametral pairs remain. On the regular hexagonal torus, the maximizers are precisely the uniform measures on translates of a distinguished cyclic subgroup of order three for every \(\alpha\geq1\).
\end{abstract}

\maketitle

\section{Introduction}
\label{sec:introduction}

Let \(\Lambda\subset\mathbb R^d\) be a full-rank lattice and let
\[
    T_\Lambda=\mathbb R^d/\Lambda
\]
be the associated flat torus, equipped with the quotient metric
\[
    \rho_\Lambda(x,y)
    =
    \min_{\lambda\in\Lambda}|x-y-\lambda|.
\]
For \(\alpha>0\) and \(\mu\in\mathcal P(T_\Lambda)\), we consider the distance energy
\begin{equation}
    I_\alpha(\mu)
    =
    \iint_{T_\Lambda\times T_\Lambda}
    \rho_\Lambda(x,y)^\alpha\,
    d\mu(x)\,d\mu(y).
    \label{eq:intro-distance-energy}
\end{equation}
We pursue two related questions. The first is spectral: how are the Fourier signs of \(\rho_\Lambda^\alpha\) controlled by the cut locus, and what do they imply for Haar measure and negative type? The second is global: in which geometries can the maximum of \(I_\alpha\) and all maximizing measures be determined exactly?

Translation invariance makes Fourier analysis natural. If \(k\) is a real even translation-invariant kernel and \(\xi\in\Lambda^*\setminus\{0\}\), then a trigonometric perturbation of normalized Haar measure satisfies
\[
\begin{aligned}
    &\mathcal E_k\!\left(
        \bigl(1+t\cos(2\pi\xi\cdot x)\bigr)m_\Lambda
    \right)
    -
    \mathcal E_k(m_\Lambda)\\
    &\hspace{35mm}
    =
    \frac{t^2}{2}\,\widehat k(\xi).
\end{aligned}
\]
Thus a single positive Fourier coefficient gives a direction in which Haar measure fails to maximize the energy.

The same sign has a metric consequence. Recall that a metric space \((X,\rho)\) has \(p\)-negative type if
\[
    \sum_{i,j=1}^N
    c_ic_j\,\rho(x_i,x_j)^p
    \leq0
\]
whenever
\[
    \sum_{i=1}^N c_i=0.
\]
A positive Fourier coefficient of \(\rho_\Lambda^p\) produces a zero-mass signed measure with positive quadratic form and hence, after atomic approximation, a finite violation of \(p\)-negative type. The Fourier signs of the distance kernel therefore govern both Haar instability and negative type.

The source of these signs is the cut locus. Inside the Voronoi cell of the origin, the distance is Euclidean; across a facet its gradient jumps. The second distributional derivative of a radial distance kernel therefore contains a singular measure supported on the facets. By choosing dual-lattice frequencies asymptotically normal to one facet and locking the Fourier phase on its affine span, this singular contribution can be isolated from the remaining facets.

Let \(V=V(\Lambda)\) be the Voronoi cell of the origin. A nonzero \(v\in\Lambda\) is Voronoi relevant if
\[
    F_v
    =
    \left\{
        x\in V:
        x\cdot v=\frac{|v|^2}{2}
    \right\}
\]
is a facet of \(V\). Our first result gives a precise Fourier signature of such a facet.

\begin{theorem}[Fourier signature of the cut locus]
\label{thm:intro-fourier-signature}
Let \(\Lambda\subset\mathbb R^d\) be a full-rank lattice with \(d\geq2\), let \(\alpha>0\), and let \(v\) be Voronoi relevant. There exist \(t_j\to\infty\) and \(\xi_j\in2\Lambda^*\) such that
\[
    |\xi_j-t_jv|\longrightarrow0
\]
and
\begin{equation}
    (2\pi|\xi_j|)^2
    \widehat{\rho_\Lambda^\alpha}(\xi_j)
    \longrightarrow
    \frac{\alpha|v|}{|V|}
    \int_{F_v}|x|^{\alpha-2}\,dS(x)
    >0.
    \label{eq:intro-fourier-asymptotic}
\end{equation}
In particular,
\[
    \widehat{\rho_\Lambda^\alpha}(\xi_j)>0
\]
for all sufficiently large \(j\).
\end{theorem}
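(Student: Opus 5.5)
The plan is to integrate by parts twice against the character on the Voronoi cell, so that the Fourier coefficient splits into an interior term and facet terms. The interior term will vanish by Riemann--Lebesgue, and the facet terms will be separated by choosing the frequencies carefully. Write \(f(x)=|x|^\alpha\) on \(V\), so that \(\rho_\Lambda^\alpha(x,0)=f(x)\) there, and
\[
    \widehat{\rho_\Lambda^\alpha}(\xi)=\frac1{|V|}\int_V f(x)\,e_\xi(x)\,dx,
    \qquad e_\xi(x)=e^{-2\pi i\xi\cdot x},\quad \xi\in\Lambda^*.
\]
Since \(-\Delta e_\xi=(2\pi|\xi|)^2e_\xi\), Green's second identity on \(V\setminus B_\varepsilon(0)\) gives
\[
    (2\pi|\xi|)^2\int_V f e_\xi
    =
    -\int_V (\Delta f)\, e_\xi
    +\int_{\partial V}\bigl(e_\xi\,\partial_nf-f\,\partial_ne_\xi\bigr)\,dS .
\]
Here \(\Delta f=\alpha(\alpha+d-2)|x|^{\alpha-2}\) is integrable near \(0\), because \(\alpha-2>-d\) when \(d\ge2\). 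The sphere terms at \(\partial B_\varepsilon\) are \(O(\varepsilon^{\alpha+d-2}+|\xi|\varepsilon^{\alpha+d-1})\to0\).

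Next I simplify the boundary using the pairing of facets. The facets come in pairs \(F_w\), \(F_{-w}=F_w-w\), with opposite outward normals. Both \(f\) (since \(|x-w|=|x|\) on \(F_w\)) and \(e_\xi\) (since \(\xi\in\Lambda^*\)) are \(w\)-periodic across the pair. Consequently the terms \(f\,\partial_ne_\xi\) cancel in pairs. On the other hand, \(\partial_nf=\alpha|x|^{\alpha-2}\,x\cdot w/|w|=\alpha|w|\,|x|^{\alpha-2}/2\) takes the same value at corresponding points of \(F_w\) and \(F_{-w}\), so those terms add. The result is the exact identity
\[
    (2\pi|\xi|)^2\widehat{\rho_\Lambda^\alpha}(\xi)
    =
    -\frac{1}{|V|}\int_V(\Delta f)e_\xi
    +\frac{\alpha}{|V|}\sum_{\{w,-w\}}|w|\int_{F_w}|x|^{\alpha-2}e_\xi(x)\,dS(x),
\]
where the sum runs over pairs of relevant vectors. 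The first term tends to \(0\) as \(|\xi|\to\infty\) by Riemann--Lebesgue.

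To choose the frequencies, I look at the orbit of the ray \(t\mapsto tv\) in the compact group \(\mathbb R^d/2\Lambda^*\). By recurrence (Kronecker/Dirichlet), this ray returns arbitrarily close to \(0\). Hence there are \(t_j\to\infty\) and \(\xi_j\in2\Lambda^*\) with \(\eta_j:=\xi_j-t_jv\to0\). The phase then locks on \(F_v\). Since \(v\in\Lambda\) and \(\xi_j\in2\Lambda^*\), we have \(\xi_j\cdot v\in2\mathbb Z\), so \(t_j|v|^2/2\) lies within \(o(1)\) of an integer. For \(x\in F_v\) this gives \(\xi_j\cdot x=t_j|v|^2/2+\eta_j\cdot x\), which is within \(o(1)\) of an integer uniformly on the bounded facet. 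Therefore \(e_{\xi_j}\to1\) uniformly on \(F_v\), and the \(v\)-term converges to \(\alpha|v|\int_{F_v}|x|^{\alpha-2}dS\). This integrand is continuous and positive, since \(0\notin F_v\), so the limit is positive.

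The main obstacle is showing that the facets \(F_w\) with \(w\neq\pm v\) contribute \(o(1)\). Since \(w\) is not parallel to \(v\), the projection of \(\xi_j\) onto the affine span of \(F_w\) has length \(\ge c\,t_j\to\infty\). The factor \(e^{-2\pi i\eta_j\cdot x}\) converges to \(1\) uniformly, so it can be removed at cost \(o(1)\). What remains is a Fourier transform, at a tangential frequency tending to infinity, of the \(L^1\) function \(|x|^{\alpha-2}\mathbf 1_{F_w}\) on the \((d-1)\)-dimensional hyperplane, and this tends to \(0\) by Riemann--Lebesgue. This step is exactly where \(d\ge2\) is used: it guarantees that the facets have positive dimension. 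The coefficient is real because \(f\) is even. Combining the three pieces gives \eqref{eq:intro-fourier-asymptotic}, and positivity of \(\widehat{\rho_\Lambda^\alpha}(\xi_j)\) for large \(j\) follows immediately.
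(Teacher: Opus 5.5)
Your proposal is correct and is essentially the paper's proof: integrate by parts twice on \(V\), pair opposite facets, choose \(\xi_j\in2\Lambda^*\) near \(t_jv\) so the phase locks on \(F_v\), and remove the regular term and the non-parallel facets by Riemann--Lebesgue. The one variation is that you use the Laplacian (via Green's identity) instead of the paper's directional derivative \(D_{e_j}^2\) with \(e_j=\xi_j/|\xi_j|\). This makes the regular term \(\Delta f\) and the facet weights \(|w|\) independent of \(j\), so the paper's dominated-convergence steps for \(H^{\mathrm{reg}}_{e_j}\) and \(A_j\to A\) disappear. Two small fixes: add the one-line check (given in the paper) that a relevant \(w\neq\pm v\) cannot be parallel to \(v\); and note that \(d\geq2\) is really used at the origin, where it makes \(\Delta f\) integrable and kills the \(O(\varepsilon^{\alpha+d-2})\) sphere term, rather than in the tangential Riemann--Lebesgue step.
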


The conclusion is directional: it produces positive subsequences near the normal of each Voronoi facet, rather than eventual positivity in all high-frequency directions. The leading term retains explicit geometric information about the selected facet. The argument is formulated more generally in \cref{thm:positive-fourier-tail} for kernels of the form
\(K_\Phi=\Phi(\rho_\Lambda)\), under the regularity hypotheses stated there and with \(\Phi'>0\); distance powers are the principal specialization used here.

The Fourier theorem has two immediate consequences at every positive exponent: a metric obstruction and an energy instability.

\begin{theorem}[Negative type of flat tori]
\label{thm:intro-negative-type}
Let \(\Lambda\subset\mathbb R^d\) be a full-rank lattice with \(d\geq2\). Then
\begin{equation}
    \wp(T_\Lambda)
    =
    \operatorname{gr}(T_\Lambda)
    =
    0.
    \label{eq:intro-negative-type-zero}
\end{equation}
Equivalently, \(T_\Lambda\) fails to have \(p\)-negative type for every \(p>0\).
\end{theorem}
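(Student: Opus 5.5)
The plan is to derive the statement from \cref{thm:intro-fourier-signature} through the mechanism sketched in the introduction: a positive Fourier coefficient of \(\rho_\Lambda^p\) produces a zero-mass signed measure with positive energy, and atomic approximation turns this into a finite violation of \(p\)-negative type. First I fix conventions. By definition, \(\wp(T_\Lambda)\) is the supremum of those \(p\geq0\) for which \(T_\Lambda\) has \(p\)-negative type. The set of such \(p\) is an interval \([0,\wp]\) (closed because the defining inequalities are preserved under limits in \(p\)). By the classical Lennard--Tonge--Weston theorem, \(p\)-negative type is equivalent to generalized roundness \(\geq p\), so \(\operatorname{gr}=\wp\). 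Zero-negative type holds trivially, since \(\rho^0\) is constant off the diagonal; the convention at \(p=0\) should be checked against the paper's definitions. It therefore suffices to show that for each \(p>0\) there exist points \(x_1,\dots,x_N\) and weights \(c_i\) with \(\sum c_i=0\) and \(\sum_{i,j} c_ic_j\rho_\Lambda(x_i,x_j)^p>0\).

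Fix \(p>0\). Choose any Voronoi relevant \(v\); one exists since \(V\) is a polytope with facets. Apply \cref{thm:intro-fourier-signature} with \(\alpha=p\) to obtain \(\xi\in2\Lambda^*\setminus\{0\}\) with \(\widehat{\rho_\Lambda^p}(\xi)>0\). Here \(\xi\neq0\) because \(|\xi_j|\to\infty\). Set \(f(x)=\cos(2\pi\xi\cdot x)\), which is well defined on \(T_\Lambda\) since \(\xi\in\Lambda^*\), and let \(\sigma=f\,m_\Lambda\). This measure has total mass zero. Because \(k=\rho_\Lambda^p\) is even, expanding the convolution gives
\[
    \iint \rho_\Lambda(x,y)^p\,d\sigma(x)\,d\sigma(y)=\tfrac12\widehat{\rho_\Lambda^p}(\xi)>0 ,
\]
which is the same computation as the Haar-perturbation identity in the introduction.

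Next I discretize. Take the grid \(G_n=\frac1n\Lambda/\Lambda\), which has \(n^d\) points, and set \(c_i=n^{-d}f(x_i)\) for \(x_i\in G_n\). Since \(\xi\cdot x_i\in\frac1n\mathbb Z\), we have \(\sum_i c_i=0\) exactly whenever \(n\nmid\xi\cdot\lambda\) for some \(\lambda\in\Lambda\); this holds for all large \(n\), because \(\xi\neq0\) forces some \(\xi\cdot\lambda\neq0\). The kernel \(\rho_\Lambda^p\) is continuous on the compact space \(T_\Lambda\times T_\Lambda\), and \(f\) is continuous. Hence the double Riemann sums \(\sum_{i,j}c_ic_j\rho_\Lambda(x_i,x_j)^p\) converge to the double integral above, which is strictly positive. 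So for \(n\) large the finite configuration violates \(p\)-negative type.

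The main obstacle is not here but in \cref{thm:intro-fourier-signature} itself, which I take as given. Within this deduction, the only points needing care are exact zero total mass (handled by the grid choice rather than by correcting weights) and Riemann-sum convergence for a merely continuous, possibly non-Lipschitz kernel when \(p<1\). The latter follows from uniform continuity.
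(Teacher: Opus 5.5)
Your proposal is correct and follows the paper's proof of \cref{thm:generalized-roundness-zero}: a positive coefficient $\widehat{\rho_\Lambda^p}(\xi)$ yields the zero-mass witness $\cos(2\pi\xi\cdot x)\,dm_\Lambda$ with energy $\tfrac12\widehat{\rho_\Lambda^p}(\xi)$, this witness is discretized, and Lennard--Tonge--Weston gives the statement for $\operatorname{gr}$. The only difference is the discretization step: you sample on the grid $\tfrac1n\Lambda/\Lambda$ and get exact zero mass because a nontrivial character sums to zero over that group (valid once $\xi\notin n\Lambda^*$), whereas \cref{lem:atomic-negative-type-approximation} uses any fine Borel partition with weights $c_i=\nu(Q_i)$, which makes zero mass automatic for an arbitrary zero-mass signed measure.
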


Here \(\wp\) denotes supremal negative type and \(\operatorname{gr}\) generalized roundness. Their equality is due to Lennard, Tonge, and Weston \cite{LennardTongeWeston1997}. In the probability literature, the same supremum is called the \emph{fractional index}; see Istas \cite[\S2.6]{Istas2012}.

The value in \cref{eq:intro-negative-type-zero} is also accessible through geometric obstruction theory. Kokkendorff's classification rules out \(1\)-negative type for flat tori of dimension at least two \cite{Kokkendorff2007}. Venet developed closed-geodesic obstructions for manifold-indexed fractional Brownian fields \cite{Venet2016} and proved that every product cylinder \(S^1\times(0,\varepsilon)\) has fractional index zero, with standard product flat tori as an application \cite[Theorem~1 and Example~4.1]{Venet2019}. A shortest-vector reduction gives the corresponding cylinder obstruction for an arbitrary flat torus; see Remark~\ref{rem:venet-alternative-proof}. The Fourier approach complements these geometric obstructions by producing facet-wise positive high-frequency subsequences together with explicit leading constants. Related recent work of Doust and Weston gives upper bounds on supremal negative type for nontrivial \(\ell_2\)-products with a circle factor \cite[Theorem~4.6]{DoustWeston2026}.

The same positive Fourier modes yield a local energy consequence.

\begin{corollary}[Haar instability]
\label{cor:intro-haar-instability}
Under the hypotheses of \cref{thm:intro-negative-type}, for every \(\alpha>0\) and every \(\varepsilon>0\), there exists a smooth probability density \(r\) such that
\[
    \|r-1\|_{L^\infty(m_\Lambda)}<\varepsilon
\]
and
\[
    I_\alpha(r\,m_\Lambda)>I_\alpha(m_\Lambda).
\]
\end{corollary}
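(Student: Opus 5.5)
The plan is to take one positive Fourier mode from \cref{thm:intro-fourier-signature} and use it as a small trigonometric perturbation of Haar measure, as in the identity displayed in the introduction. Since \(d\geq2\), the lattice \(\Lambda\) has at least one Voronoi relevant vector \(v\); for instance, any shortest nonzero vector of \(\Lambda\) is Voronoi relevant. Applying \cref{thm:intro-fourier-signature} with this \(v\) and the given \(\alpha>0\), we obtain some \(\xi=\xi_j\in2\Lambda^*\subset\Lambda^*\) with \(\xi\neq0\) and \(\widehat{\rho_\Lambda^\alpha}(\xi)>0\). Here \(\xi\neq0\) holds because \(|\xi_j|\to\infty\), given \(t_j\to\infty\) and \(|\xi_j-t_jv|\to0\). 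The function \(x\mapsto\cos(2\pi\xi\cdot x)\) is then well defined and smooth on \(T_\Lambda\), since \(\xi\cdot\lambda\in\mathbb Z\) for \(\lambda\in\Lambda\). It also has zero mean against \(m_\Lambda\), because \(\xi\neq0\).

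Next, for \(0<t<\min(\varepsilon,1)\), set \(r=1+t\cos(2\pi\xi\cdot x)\). This \(r\) is smooth and strictly positive, it integrates to \(1\), and it satisfies \(\|r-1\|_{L^\infty}=t<\varepsilon\). So \(r\,m_\Lambda\) is a smooth probability measure.

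It remains to compute the energy difference by expanding the bilinear form:
\[
I_\alpha(r\,m_\Lambda)-I_\alpha(m_\Lambda)=2t\iint k(x-y)\cos(2\pi\xi\cdot x)\,dm\,dm+t^2\iint k(x-y)\cos(2\pi\xi\cdot x)\cos(2\pi\xi\cdot y)\,dm\,dm,
\]
where \(k=\rho_\Lambda^\alpha(\cdot,0)\). The first term vanishes: integrating in \(y\) first gives the constant \(\int k\,dm\), and \(\cos(2\pi\xi\cdot x)\) has zero mean. For the second term, write the cosines as sums of exponentials and use that \(k\) is real and even, so \(\widehat k(\xi)=\widehat k(-\xi)\) is real. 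The term then equals \(\tfrac{t^2}{2}\widehat k(\xi)\), which is the identity stated in the introduction, and this is strictly positive.

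No step is a serious obstacle, since the analytic work is contained in \cref{thm:intro-fourier-signature}. The points that need care are:
\begin{itemize}
\item the Fourier normalization of \(\widehat k\) (coefficient with respect to normalized Haar measure), so that the constant \(\tfrac12\) and its sign are correct;
\item checking that \(k\) is bounded and continuous, so that Fubini applies;
\item noting that \(2\Lambda^*\subset\Lambda^*\), so the selected frequency is a legitimate character of \(T_\Lambda\).
\end{itemize}
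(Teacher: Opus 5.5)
Your proposal is correct and follows the paper's route. The paper likewise takes a nonzero \(\xi\in\Lambda^*\) with \(\widehat{\rho_\Lambda^\alpha}(\xi)>0\) from the facet-normal Fourier theorem and perturbs Haar measure by \(1+t\cos(2\pi\xi\cdot x)\), obtaining the gain \(\tfrac{t^2}{2}\widehat{\rho_\Lambda^\alpha}(\xi)\). It gets that gain from its second-variation formula, whereas you verify it by expanding the bilinear form directly; your remark that a shortest lattice vector is Voronoi relevant fills in a point the paper leaves implicit.
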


The general Fourier argument does not determine the global maximizers. For that problem the geometry of the lattice enters more strongly. We give complete answers for orthogonal rectangular tori and for the regular hexagonal torus.

Let
\[
    T_{\mathbf L}^d
    =
    \prod_{i=1}^d
    \mathbb R/(2L_i\mathbb Z),
    \qquad
    L_i>0,
\]
with \(d\geq2\), and put
\[
    D^2=\sum_{i=1}^dL_i^2,
    \qquad
    H_2=\prod_{i=1}^d\{0,L_i\}.
\]
For \(a\in T_{\mathbf L}^d\), let
\[
    \mu_a
    =
    2^{-d}
    \sum_{\varepsilon\in\{0,1\}^d}
    \delta_{a+(\varepsilon_1L_1,\ldots,\varepsilon_dL_d)}.
\]

\begin{theorem}[Rectangular tori]
\label{thm:intro-rectangular}
Let \(d\geq2\).

\begin{enumerate}
\item[\textup{(i)}]
If \(1\leq\alpha<2\), then
\[
    \max_{\mu\in\mathcal P(T_{\mathbf L}^d)}
    I_\alpha(\mu)
    =
    2^{-d}
    \sum_{\varepsilon\in\{0,1\}^d}
    \left(
        \sum_{i=1}^d\varepsilon_iL_i^2
    \right)^{\alpha/2},
\]
and the maximizers are precisely the measures \(\mu_a\).

\item[\textup{(ii)}]
If \(\alpha=2\), then
\[
    \max_{\mu\in\mathcal P(T_{\mathbf L}^d)}
    I_2(\mu)
    =
    \frac{D^2}{2}.
\]
The maximizers are exactly the probability measures supported on a translate of \(H_2\) whose coordinate marginals are balanced between the two antipodal coordinate values.

\item[\textup{(iii)}]
If \(\alpha>2\), then
\[
    \max_{\mu\in\mathcal P(T_{\mathbf L}^d)}
    I_\alpha(\mu)
    =
    \frac{D^\alpha}{2},
\]
and the maximizers are precisely the equally weighted diametral pairs.
\end{enumerate}
\end{theorem}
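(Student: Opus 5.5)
The plan is to reduce everything to coordinatewise circle distances. Write \(r_i(x,y)\in[0,L_i]\) for the distance in \(\mathbb R/(2L_i\mathbb Z)\), so that \(\rho^2=\sum_i r_i^2\). With \(X,Y\) independent with law \(\mu\), we have \(I_\alpha(\mu)=\mathbb E\bigl(\sum_i r_i(X_i,Y_i)^2\bigr)^{\alpha/2}\). Throughout, the tool is a \emph{majorant kernel}: a function \(K\geq\rho^\alpha\) whose nonzero Fourier coefficients are all \(\leq0\), and which coincides with \(\rho^\alpha\) on pairs in a translate of \(H_2\). For such a \(K\), \(I_\alpha(\mu)\leq\widehat K(0)\). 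The equality cases are then read off from two requirements: every nonzero Fourier mode with \(\widehat K<0\) must be killed by \(\mu\), and \(\mu\otimes\mu\) must be supported where \(K=\rho^\alpha\).

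\emph{Case \(\alpha=2\).} Here the energy splits as \(I_2=\sum_i\mathbb E\,r_i^2\). On one circle I would prove the elementary inequality
\[
  r^2\leq\tfrac{L^2}{2}\bigl(1-\cos(\pi r/L)\bigr),\qquad 0\leq r\leq L,
\]
which is equivalent to \(r/L\leq\sin(\pi r/2L)\), i.e. concavity of the sine; equality holds only at \(r\in\{0,L\}\). Integrating gives
\[
  \mathbb E\,r_i^2\leq \tfrac{L_i^2}{2}-\tfrac{L_i^2}{2}\bigl|\mathbb E e^{i\pi X_i/L_i}\bigr|^2\leq \tfrac{L_i^2}{2}.
\]
Equality forces two things. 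First, each marginal is supported on an antipodal pair \(\{a_i,a_i+L_i\}\). Second, the first Fourier coefficient of the marginal vanishes, which on such a pair means the two weights are equal. Summing over \(i\) gives (ii): \(\mu\) lives on \(a+H_2\) with balanced coordinate marginals, and conversely every such measure attains \(D^2/2\).

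\emph{Case \(\alpha>2\).} Since \(s\mapsto s^{\alpha/2}\) is convex with value \(0\) at \(0\) and \(D^\alpha\) at \(D^2\), we have \(s^{\alpha/2}\leq D^{\alpha-2}s\) on \([0,D^2]\), with equality only at \(s\in\{0,D^2\}\). Hence \(I_\alpha\leq D^{\alpha-2}I_2\leq D^\alpha/2\). Equality requires a maximizer of (ii) for which \(\rho(X,Y)\in\{0,D\}\) almost surely. A balanced measure on \(a+H_2\) with this property must be supported on a single diametral pair, since two non-antipodal points of \(a+H_2\) are at distance strictly between \(0\) and \(D\). This gives (iii).

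\emph{Case \(1\leq\alpha<2\).} Put \(\beta=\alpha/2\in[\tfrac12,1)\), \(u_i=\tfrac12\bigl(1-\cos(\pi\theta_i/L_i)\bigr)\in[0,1]\), and \(f(\varepsilon)=\bigl(\sum_i\varepsilon_iL_i^2\bigr)^\beta\) for \(\varepsilon\in\{0,1\}^d\). The candidate majorant is the multilinear interpolant
\[
  M(\theta)=\sum_{\varepsilon}f(\varepsilon)\prod_i u_i^{\varepsilon_i}(1-u_i)^{1-\varepsilon_i}=\sum_{S\subseteq\{1,\dots,d\}}a_S\prod_{i\in S}\cos(\pi\theta_i/L_i).
\]
Its constant term is \(a_\varnothing=2^{-d}\sum_\varepsilon f(\varepsilon)\), which is exactly the claimed maximum. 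Its remaining coefficients are \(a_S=2^{-d}(-1)^{|S|}\sum\Delta_Sf\), where \(\Delta_S\) is the iterated forward difference in the directions of \(S\). Because \(s\mapsto s^\beta\) is a Bernstein-type function (increasing, with derivatives of alternating sign), \(\Delta_Sf\) has sign \((-1)^{|S|+1}\). Hence \(a_S\leq0\) for every \(S\neq\varnothing\), strictly so when \(\beta<1\), and so \(\iint M\,d\mu\,d\mu\leq a_\varnothing\). Equality then forces \(\mu\) to annihilate every character \(\prod_{i\in S}e^{\pm i\pi\theta_i/L_i}\) with \(S\neq\varnothing\), together with support on \(\{M=\rho^\alpha\}\). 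These two conditions should pin \(\mu\) down to the uniform measure \(\mu_a\).

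\emph{Main obstacle.} The hard step is the pointwise inequality \(\rho^\alpha\leq M\). It cannot be proved one coordinate at a time. For \(d=1\) and \(\alpha=1\) it fails near \(\theta=0\), because \(r/L\geq\sin^2(\pi r/2L)\) there; this is consistent with the uniform measure on the circle also being a maximizer, and it shows that \(d\geq2\) must be used. I would first attempt to prove \(\rho^\alpha\leq M\) directly by exploiting joint concavity of \(s^\beta\) in \(\sum_iL_i^2\varphi(u_i)\), where \(\varphi(u)=(\arccos(1-2u)/\pi)^2\) is convex and lies below its chord \(u\). The defect near the origin would be absorbed by the nonzero mixed terms \(a_S\) with \(|S|\geq1\). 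If this fails, the fallback is to enlarge the majorant by adding a kernel with nonpositive nonzero Fourier coefficients that vanishes on \(H_2\); the natural choice uses higher harmonics \(\cos(k\pi\theta_i/L_i)\) with nonpositive weights, chosen to dominate the small-\(\theta\) defect. The equality analysis would then be redone with this corrected kernel.
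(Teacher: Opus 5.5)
Your arguments for (ii) and (iii) are correct. The one-mode bound $r^2\le\tfrac{L^2}{2}(1-\cos(\pi r/L))$ is a valid substitute for the paper's $A_i^2\le L_iA_i$ combined with the circle mean-distance identity, and both equality analyses go through.

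The gap is in (i). The inequality $\rho^\alpha\le M$ is not merely hard to prove; it is false in every dimension. Your $M$ is a trigonometric polynomial, even in each $\theta_i$, with $M(0)=0$. Hence $M(\theta)=O(|\theta|^2)$ near the origin, while $\rho^\alpha=|\theta|^\alpha$ there with $\alpha<2$. Concretely, on an axis ($\theta_j=0$ for all $j\neq i$), every $u_j$ with $j\neq i$ vanishes. Then $M$ collapses to $L_i^\alpha\sin^2(\pi\theta_i/2L_i)$, which lies below $\theta_i^\alpha$ for small $\theta_i>0$. This is exactly your $d=1$ defect, and no mixed term survives on the axis to absorb it.

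The fallback cannot repair this while keeping the sharp constant. Suppose $G$ vanishes at $0\in H_2$ and $\widehat G(\xi)\le0$ for $\xi\neq0$ (with, say, absolutely convergent Fourier series). Then $\widehat G(0)=-\sum_{\xi\neq0}\widehat G(\xi)\ge0$, with equality only when $G\equiv0$. So any nontrivial correction pushes the constant term above $a_\varnothing=B_\alpha$. More fundamentally, a sharp majorant must touch $|\theta|^\alpha$ at the origin, so it cannot be smooth there.

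The missing idea is to interpolate in the circle distance itself, $u_i=d_i/L_i$, instead of in $\sin^2(\pi\theta_i/2L_i)$. With this variable, $\rho^\alpha\le M_\alpha(\mathbf d)$ does follow one coordinate at a time, from separate convexity of $(\sum_ja_j^2)^{\alpha/2}$ in each $a_i\in[0,L_i]$. Positivity then comes from the Bernstein representation $s^p=c_p\int_0^\infty(1-e^{-ts})t^{-p-1}\,dt$, which is the same mechanism behind your sign computation for $a_S$.

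\begin{itemize}
\item This representation rewrites the interpolant as $c_p\int_0^\infty\bigl[1-\prod_i(1-\beta_i(t)d_i)\bigr]t^{-p-1}\,dt$, where $\beta_i(t)=(1-e^{-tL_i^2})/L_i$.
\item Each factor $1-\beta_i(t)d_i$ is positive definite, because the circle distance has Fourier coefficients $-2L_i/(\pi^2n^2)$ at odd $n$ and $0$ at even $n\neq0$.
\item The kink of $d_i$ at $0$ is precisely what lets this kernel touch $\rho^\alpha$ at the origin.
\end{itemize}

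Finally, even with a valid majorant, ``these two conditions should pin $\mu$ down'' is not yet an argument at $\alpha=1$. There the equality set of the corner interpolation also contains points with exactly one interior coordinate and all others zero. The paper needs a separate two-pair rigidity step to exclude these. That step uses half-period invariance of the marginals, which follows from the Fourier conditions, and only afterwards does the Walsh-character argument on $a+H_2$ apply.
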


Thus \(\alpha=2\) is the only transition point in the range \(\alpha\geq1\). At the transition exponent itself the equality class is larger than in either adjacent regime: arbitrary correlations are allowed as long as the coordinate marginals are balanced.

For the equal-side standard product torus, the cases \(\alpha=2\) and \(\alpha>2\) are contained, up to rescaling, in the classification of Damelin and Nathe for the Riesz kernels
\[
    K_s(x,y)=\operatorname{sign}(s)\rho(x,y)^{-s}.
\]
Their cases \(s=-2\) and \(s<-2\) correspond to maximizing \(\rho^2\) and \(\rho^\alpha\) with \(\alpha>2\), respectively \cite[Theorem~5.1]{DamelinNathe2024}. Parts~\textup{(ii)} and~\textup{(iii)} of \cref{thm:intro-rectangular} extend these classifications to arbitrary rectangular side lengths, while part~\textup{(i)} treats the subquadratic regime \(1\leq\alpha<2\), which is not covered by that result.

For the standard square torus \(\mathbb T^2=\mathbb R^2/\mathbb Z^2\), part~\textup{(i)} gives an \(\alpha=1\) endpoint behavior different from that recorded in Dmitriy Bilyk's Problem~4.2 in \cite{Manskova2025}, where Haar measure is stated to be maximizing. If \(m\) denotes Haar measure and \(\nu_{H_2}\) normalized counting measure on \(\{0,\tfrac12\}^2\), then
\[
    I_1(m)
    =
    \frac{\sqrt2+\log(1+\sqrt2)}6
    <
    \frac{1+1/\sqrt2}{4}
    =
    I_1(\nu_{H_2}).
\]
(The first value is obtained by integrating \(\sqrt{x^2+y^2}\) over
\([-\tfrac12,\tfrac12]^2\).)
At \(\alpha=2\), the complete balanced-coupling equality class is already present in \cite[Theorem~5.1]{DamelinNathe2024}; in particular, diametral pairs form only part of that class.

The proof starts from a multilinear majorization of
\[
    (a_1^2+\cdots+a_d^2)^{\alpha/2}
\]
by its values at the corners of \(\prod_i[0,L_i]\). A Bernstein representation converts this majorant into an integral of positive-definite product kernels. The resulting Fourier equality conditions force the support onto a translate of \(H_2\). At the endpoint \(\alpha=1\), the equality set of the convexity step is larger, and a separate rigidity argument is needed to eliminate the additional one-dimensional directions.

Our second exact model is the regular hexagonal torus. Let
\[
    \Lambda_{\mathrm{hex}}
    =
    \mathbb Z b_1+\mathbb Z b_2,
    \qquad
    |b_1|=|b_2|=1,
    \qquad
    b_1\cdot b_2=\frac12,
\]
and set
\[
    T_{\mathrm{hex}}
    =
    \mathbb R^2/\Lambda_{\mathrm{hex}}.
\]
Let
\[
    q=\frac{b_1+b_2}{3},
    \qquad
    H_3=\{0,q,2q\},
\]
and let \(\nu_{H_3}\) be normalized counting measure on \(H_3\).

\begin{theorem}[Regular hexagonal torus]
\label{thm:intro-hexagonal}
For every \(\alpha\geq1\),
\[
    \max_{\mu\in\mathcal P(T_{\mathrm{hex}})}
    I_\alpha(\mu)
    =
    \frac23\,3^{-\alpha/2},
\]
and the maximizers are precisely the translates of \(\nu_{H_3}\).
\end{theorem}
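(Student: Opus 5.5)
The plan is to reduce every \(\alpha\ge1\) to the single exponent \(\alpha=1\), and then to settle \(\alpha=1\) by a majorization-plus-Fourier argument in the spirit of the proof of \cref{thm:intro-rectangular}.

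\textbf{Reduction to \(\alpha=1\).} The Voronoi cell of \(\Lambda_{\mathrm{hex}}\) is a regular hexagon with inradius \(\tfrac12\). Its circumradius is \(D=3^{-1/2}=|q|\), so \(D\) is the diameter of \(T_{\mathrm{hex}}\). The points at distance \(D\) from \(x\) are the images of the hexagon vertices, namely \(x+q\) and \(x+2q\). For \(\alpha\ge1\) we have \(\rho^\alpha\le D^{\alpha-1}\rho\), hence
\[
I_\alpha(\mu)\le D^{\alpha-1}I_1(\mu).
\]
For \(\alpha>1\), equality forces \(\rho(x,y)\in\{0,D\}\) for \(\mu\otimes\mu\)-a.e.\ pair. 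So if the \(\alpha=1\) statement holds, with value \(\tfrac23 D\) and equality class the translates of \(\nu_{H_3}\), then every \(\alpha\ge1\) follows. The value is attained: \(I_\alpha(\nu_{H_3})=\tfrac69 D^\alpha=\tfrac23 3^{-\alpha/2}\).

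\textbf{The case \(\alpha=1\).} Let \(\xi_1,\xi_2\in\Lambda^*\) be the dual basis and \(\xi_3=-\xi_1-\xi_2\). Put \(d_i(x)=\operatorname{dist}(\xi_i\cdot x,\mathbb Z)\); each \(d_i\) is a circle distance pulled back along a character, and \(\xi_i\cdot q\equiv\pm\tfrac13\) for every \(i\). I will look for a one-variable even \(1\)-periodic \(\psi\) satisfying three conditions:
\begin{enumerate}
\item[(a)] the majorization \(\rho_{\mathrm{hex}}(x)\le\sum_{i=1}^3\psi(\xi_i\cdot x)\) holds on the hexagon;
\item[(b)] all nonzero Fourier coefficients of \(\psi\) are \(\le0\);
\item[(c)] \(\psi(0)=0\) and \(\psi(\pm\tfrac13)=D/3\), so that (a) is an equality on \(H_3\).
\end{enumerate}
Given such a \(\psi\), integrate (a) against \(\mu\otimes\mu\) to get
\[
I_1(\mu)\le\sum_i\Bigl(\widehat\psi(0)+\sum_{k\ne0}\widehat\psi(k)\,|\widehat\mu(k\xi_i)|^2\Bigr)\le3\widehat\psi(0).
\]
We need \(3\widehat\psi(0)=\tfrac23 D\). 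By (c), \(\nu_{H_3}\) turns both inequalities into equalities.

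\textbf{Equality analysis.} Equality forces \(\widehat\mu(k\xi_i)=0\) whenever \(\widehat\psi(k)<0\). It also forces \(\rho=\sum_i\psi(\xi_i\cdot\,\cdot)\) on all differences of points of \(\operatorname{supp}\mu\). If the contact set of (a) is exactly \(\{0,q,2q\}\), then the support lies in a single coset \(a+H_3\). The vanishing \(\widehat\mu(\xi_1)=0\) on a three-point coset then forces equal weights, so \(\mu\) is a translate of \(\nu_{H_3}\).

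\textbf{Main obstacle.} The hard step is constructing \(\psi\). The naive choice \(\psi=c\,d\), a circle distance, satisfies (a)–(c) qualitatively but gives \(3\widehat\psi(0)=\tfrac34 D\), which is too large. A pure cosine choice \(\psi(t)=c(1-\cos2\pi t)\) fails (a) near \(0\), where \(\rho\) is linear but \(\psi\) is only quadratic. So \(\psi\) must have a \(|t|\)-type cusp at \(0\), and it must be flatter than the distance on \([\tfrac13,\tfrac12]\), so as to make the three-point configuration optimal rather than the antipodal pair. I would first try
\[
\psi(t)=a\,d(t)+b\bigl(1-\cos 2\pi t\bigr)+c\bigl(1-\cos 4\pi t\bigr),\qquad a,b,c\ge0,
\]
chosen by the linear constraints in (c) and the mean condition. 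Condition (b) is then automatic. I would verify (a) by reducing, via the \(D_6\)-symmetry, to the fundamental triangle with vertices \(0\), the edge midpoint \(\tfrac12 b_1\), and the vertex \(q\). There (a) becomes an explicit one- or two-variable inequality that I would check by calculus, including the claim that the contact set is only \(\{0,q\}\). If this family is too rigid, I would solve the corresponding linear program over the cosine coefficients of \(\psi\) numerically to identify the correct \(\psi\), and then certify it.
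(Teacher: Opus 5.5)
Your reduction from $\alpha\ge1$ to $\alpha=1$ and your equality-analysis template are fine and match the paper. The gap is that the $\alpha=1$ certificate you are looking for does not exist. No $\psi$ at all, not only your three-parameter family, satisfies (a), (b), (c) together with $3\widehat\psi(0)=\tfrac23D$, so the fallback linear program is infeasible. Two necessary conditions clash.

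First, the segment $x=ub_1$, $|u|\le\tfrac12$, lies in the hexagon. There $\rho(x)=|u|$ and $(\xi_1\cdot x,\xi_2\cdot x,\xi_3\cdot x)=(u,0,-u)$. Hence (a) and $\psi(0)=0$ give $\psi(t)\ge\tfrac12d(t)$, where $d(t)=\operatorname{dist}(t,\mathbb Z)$.

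Second, $\widehat{\nu_{H_3}}(k\xi_i)=1$ whenever $3\mid k$. Since $I_1(\nu_{H_3})=\tfrac23D=3\widehat\psi(0)$, your chain is an equality at $\nu_{H_3}$. By (b) this forces $\widehat\psi(3j)=0$ for every $j\ne0$, that is,
\[
\psi(t)+\psi\bigl(t+\tfrac13\bigr)+\psi\bigl(t+\tfrac23\bigr)=3\widehat\psi(0)=\tfrac{2D}{3}\qquad\text{for all }t.
\]
At $t=\tfrac16$ the left side is at least $\tfrac12\bigl(\tfrac16+\tfrac12+\tfrac16\bigr)=\tfrac5{12}\approx0.417$, whereas $\tfrac{2D}{3}=\tfrac{2}{3\sqrt3}\approx0.385$.

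For example, $\psi(t)=D\bigl(d(t)-\tfrac19d(3t)\bigr)$ satisfies (b), (c) and the mean condition. Yet at $x=\tfrac16b_1$ one has $\sum_i\psi(\xi_i\cdot x)=\tfrac{2D}{9}<\tfrac16=\rho(x)$. In your ansatz, (c) and the mean condition already force the coefficient of $d(t)$ to vanish, which removes the cusp you need. Averaging over the dihedral group shows that allowing a different $\psi_i$ in each direction does not help. Equivalently, no sharp majorant with nonpositive nonzero Fourier coefficients can have its Fourier support inside $\mathbb Z\xi_1\cup\mathbb Z\xi_2\cup\mathbb Z\xi_3$.

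The missing ingredient is a genuinely two-dimensional certificate with Fourier support off those three lines. The paper obtains one in two steps. It first majorizes $r=\rho/D$ by $1-h$, where $h$ is the non-separable function $a+\tfrac12bc$ in the chamber coordinates $a=1-2u-v$, $b=u-v$, $c=u+2v$ (not your coefficients $a,b,c$). It then constructs a positive-definite minorant $P=\tfrac13+\sum_{n\ge0}w_nC_n\le h$, where $C_n$ averages cosines over the dihedral orbit of the frequency $(n+1,-n)$. Only $C_0$ lives on your three lines; for $n\ge1$ the terms are odd harmonics in $u-v$ modulated by $\cos\pi(u+v)$. The real work is the pointwise inequality $P\le h$ with contact set exactly $H_3$, proved by elementary calculus on three subregions of the chamber. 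After that, $I_1(\mu)=D\,\mathcal E_r(\mu)\le D\bigl(1-\mathcal E_P(\mu)\bigr)\le\tfrac23D$. The equality analysis then runs exactly as in your outline, with $w_0>0$ forcing $\widehat\mu=0$ on the orbit $\mathcal O_0$ of minimal dual vectors.
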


There is no transition in the hexagonal case for \(\alpha\geq1\). The proof is not based on product structure. After reducing to a symmetry chamber, we construct a positive-definite function \(P\) below a geometric majorant \(h\) for the normalized distance, with contact set
\[
    \{P=h\}=H_3.
\]
The spectral lower bound for \(P\), together with the sharp contact set, forces the extremal measure to be uniform on a translate of \(H_3\).

The Fourier and negative-type results above hold for every \(\alpha>0\). The exact global classifications in this paper are restricted to \(\alpha\geq1\). The range \(0<\alpha<1\) is left open here; the convexity and spectral-minorant arguments used for the exact classifications above do not by themselves give a sharp global description in that regime.

Extremal problems for positive powers of distance go back at least to Björck \cite{Bjorck1956}. For the linear distance kernel, Wolf studied the maximal-distance functional and its relation to invariant measures and rendezvous numbers \cite{Wolf1997}; related potential-theoretic and quasihypermetric approaches include \cite{MorrisNickolas1983,FarkasRevesz2006Arch,NickolasWolf2011}. On spheres and compact homogeneous spaces, distance energies are closely connected with the Stolarsky invariance principle and later extremal problems for geodesic and Riesz-type kernels \cite{Stolarsky1973,BilykDaiMatzke2018,BilykMatzkeNathe2024}.

For flat tori, translation invariance makes Fourier analysis natural while the intrinsic distance retains the singular geometry of the cut locus. Fourier and periodic-energy methods on compact manifolds and tori appear in \cite{Randol2014,NechayevaRandol2019,HardinSaffSimanek2014,DamelinNathe2024}; recent torus results using Fourier-positive or linear-programming methods include \cite{Almog2026,Nagel2025,BilykNagelRuohoniemi2026}.

Negative type originates in Schoenberg's work on metric embeddings and positive definite functions \cite{Schoenberg1935,Schoenberg1937,Schoenberg1938}. Generalized roundness was introduced by Enflo and identified with supremal negative type by Lennard, Tonge, and Weston \cite{Enflo1970,LennardTongeWeston1997}; the same exponent appears as the fractional index in manifold-indexed fractional fields \cite{Istas2012}.

The paper is organized as follows. In \cref{sec:preliminaries} we fix the Voronoi and Fourier conventions and record the energy identities used throughout. In \cref{sec:cut-locus} we compute the distributional Hessian of radial distance kernels and extract positive Fourier coefficients along facet normals. In \cref{sec:negative-type} these coefficients are converted into finite negative-type witnesses and destabilizing perturbations of Haar measure. The exact global problems are treated in \cref{sec:rectangular-tori,sec:hexagonal-torus}: the rectangular case uses corner majorization and positive-definite product kernels, while the hexagonal case uses a sharp positive spectral minorant. We conclude with several questions suggested by the Fourier asymptotics and the exact extremizers.

\section{Flat Tori and Distance Energies}
\label{sec:preliminaries}

This section fixes the geometric and Fourier-analytic conventions used throughout the paper. We also record two spectral identities that will be used repeatedly: the Fourier representation of translation-invariant energies for \(L^2\) densities, and the exact second variation at Haar measure.

\subsection{Voronoi geometry and Fourier conventions}
\label{subsec:voronoi-fourier}

Let \(\Lambda\subset\mathbb R^d\) be a full-rank lattice and let
\[
    T_\Lambda=\mathbb R^d/\Lambda
\]
be the associated flat torus. We write \(\pi:\mathbb R^d\to T_\Lambda\) for the quotient map and, when no confusion can arise, use the same notation for a point of \(\mathbb R^d\) and its image in \(T_\Lambda\). The quotient metric is
\[
    \rho_\Lambda(x,y)
    =
    \min_{\lambda\in\Lambda}|x-y-\lambda|.
\]
The minimum is attained because \(\Lambda\) is discrete.

Let
\[
    V=V(\Lambda)
    =
    \bigl\{
        x\in\mathbb R^d:
        |x|\leq |x-\lambda|
        \text{ for every }\lambda\in\Lambda
    \bigr\}
\]
be the Voronoi cell of the origin. It is a centrally symmetric convex polytope whose translates by \(\Lambda\) tile \(\mathbb R^d\) with disjoint interiors. Every point of \(T_\Lambda\) therefore has a representative in \(V\), unique away from \(\pi(\partial V)\), and
\[
    \rho_\Lambda(x,0)=|x|,
    \qquad x\in V.
\]

We denote the Euclidean volume of \(V\) by \(|V|\). If \(m_\Lambda\) is normalized Haar measure on \(T_\Lambda\), then for every integrable \(\Lambda\)-periodic function \(F\),
\begin{equation}
    \int_{T_\Lambda}F(x)\,dm_\Lambda(x)
    =
    \frac{1}{|V|}
    \int_V F(x)\,dx.
    \label{eq:haar-voronoi-normalization}
\end{equation}
The normalization in \eqref{eq:haar-voronoi-normalization} will be important when the singular part of the distributional Hessian is computed in \cref{sec:cut-locus}. Standard background on lattices and Voronoi cells may be found in \cite{ConwaySloane1999}.

A nonzero vector \(v\in\Lambda\) is called \emph{Voronoi relevant} if
\[
    F_v
    =
    \left\{
        x\in V:
        x\cdot v=\frac{|v|^2}{2}
    \right\}
\]
is a \((d-1)\)-dimensional face of \(V\). The identity
\[
    |x|=|x-v|
    \quad\Longleftrightarrow\quad
    2x\cdot v=|v|^2
\]
shows that a point in the relative interior of \(F_v\) has precisely two nearest lattice points, namely \(0\) and \(v\). The opposite facets \(F_v\) and \(F_{-v}\) are identified in the quotient: translation by \(-v\) maps \(F_v\) onto \(F_{-v}\). The cut locus of the origin may accordingly be identified with \(\pi(\partial V)\); intersections of two or more facets form its lower-dimensional strata.

We write
\[
    D_\Lambda
    =
    \operatorname{diam}(T_\Lambda)
    =
    \max_{x\in V}|x|.
\]

The dual lattice is
\[
    \Lambda^*
    =
    \left\{
        \xi\in\mathbb R^d:
        \xi\cdot\lambda\in\mathbb Z
        \text{ for every }\lambda\in\Lambda
    \right\}.
\]
For \(\xi\in\Lambda^*\), set
\[
    e_\xi(x)=e^{2\pi i\xi\cdot x}.
\]
Then \(\{e_\xi:\xi\in\Lambda^*\}\) is an orthonormal basis of \(L^2(T_\Lambda,m_\Lambda)\). For \(f\in L^1(m_\Lambda)\), we use the Fourier convention
\begin{equation}
    \widehat f(\xi)
    =
    \int_{T_\Lambda}
    f(x)e^{-2\pi i\xi\cdot x}\,dm_\Lambda(x),
    \qquad \xi\in\Lambda^*.
    \label{eq:fourier-convention}
\end{equation}
For a finite Borel measure \(\mu\) on \(T_\Lambda\), its Fourier--Stieltjes coefficients are
\[
    \widehat\mu(\xi)
    =
    \int_{T_\Lambda}
    e^{-2\pi i\xi\cdot x}\,d\mu(x).
\]
Thus
\[
    \widehat\mu(0)=\mu(T_\Lambda),
\]
and, for a finite positive measure,
\[
    \widehat\mu(-\xi)
    =
    \overline{\widehat\mu(\xi)}.
\]

We use the convolution convention
\[
    (f*g)(x)
    =
    \int_{T_\Lambda}f(x-y)g(y)\,dm_\Lambda(y),
\]
for which
\[
    \widehat{f*g}(\xi)
    =
    \widehat f(\xi)\widehat g(\xi).
\]

\subsection{Energy functionals and Haar measure}
\label{subsec:energy-haar}

For \(\alpha>0\), define
\[
    K_\alpha(x)
    =
    \rho_\Lambda(x,0)^\alpha
\]
and, for \(\mu\in\mathcal P(T_\Lambda)\),
\begin{equation}
    I_\alpha(\mu)
    =
    \iint_{T_\Lambda\times T_\Lambda}
    \rho_\Lambda(x,y)^\alpha\,
    d\mu(x)\,d\mu(y).
    \label{eq:distance-energy}
\end{equation}
Here \(\mathcal P(T_\Lambda)\) denotes the space of Borel probability measures on \(T_\Lambda\).

More generally, for a real even kernel \(k\in L^1(m_\Lambda)\), write
\[
    \mathcal E_k(\mu)
    =
    \iint_{T_\Lambda\times T_\Lambda}
    k(x-y)\,d\mu(x)\,d\mu(y)
\]
whenever the integral is defined. Thus
\[
    I_\alpha(\mu)=\mathcal E_{K_\alpha}(\mu).
\]

Compactness of the torus immediately gives existence of maximizers.

\begin{proposition}
\label{prop:existence-maximizer}
For every \(\alpha>0\), there exists \(\mu_\alpha\in\mathcal P(T_\Lambda)\) such that
\[
    I_\alpha(\mu_\alpha)
    =
    \max_{\mu\in\mathcal P(T_\Lambda)} I_\alpha(\mu).
\]
\end{proposition}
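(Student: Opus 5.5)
The plan is the direct method: use weak-* compactness of \(\mathcal P(T_\Lambda)\) together with weak-* continuity of \(I_\alpha\).

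First, \(T_\Lambda\) is a compact metric space. It is the continuous image under \(\pi\) of the compact set \(V\), and \(\rho_\Lambda\) is a metric inducing the quotient topology. By Prokhorov's theorem, or equivalently Riesz representation and Banach--Alaoglu on \(C(T_\Lambda)^*\), the set \(\mathcal P(T_\Lambda)\) is compact and metrizable in the weak-* topology. Positivity and total mass one are preserved under weak-* limits because the constant function \(1\) lies in \(C(T_\Lambda)\).

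Next I would show that \(\mu\mapsto I_\alpha(\mu)\) is weak-* continuous. The kernel \((x,y)\mapsto\rho_\Lambda(x,y)^\alpha\) is continuous on the compact space \(T_\Lambda\times T_\Lambda\), since \(\rho_\Lambda\) is continuous (indeed \(1\)-Lipschitz in each variable) and \(t\mapsto t^\alpha\) is continuous on \([0,\infty)\) for \(\alpha>0\). If \(\mu_n\to\mu\) weak-*, then \(\mu_n\otimes\mu_n\to\mu\otimes\mu\) weak-* on the product. This is the only step needing a small argument. On products \(f(x)g(y)\) it follows immediately. The linear span of such products is dense in \(C(T_\Lambda\times T_\Lambda)\) by Stone--Weierstrass. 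Since all the measures involved have total mass one, a standard \(\varepsilon/3\) argument extends the convergence to every continuous function. Hence \(I_\alpha(\mu_n)\to I_\alpha(\mu)\).

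Finally, let \(M=\sup_{\mu}I_\alpha(\mu)\). This is finite, since \(M\le D_\Lambda^\alpha\). Choose a maximizing sequence \(\mu_n\) and extract a weak-* convergent subsequence with limit \(\mu_\alpha\in\mathcal P(T_\Lambda)\). Continuity gives \(I_\alpha(\mu_\alpha)=M\), so the supremum is attained. The main (mild) point is the product-measure convergence in the second step. Alternatively, one can avoid it by uniformly approximating the kernel by finite sums \(\sum f_i(x)g_i(y)\).
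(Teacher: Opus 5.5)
Your proposal is correct and follows the same route as the paper: weak compactness of \(\mathcal P(T_\Lambda)\), the convergence \(\mu_n\otimes\mu_n\rightharpoonup\mu\otimes\mu\), and continuity of the bounded kernel \(\rho_\Lambda^\alpha\) give continuity of \(I_\alpha\), so the supremum is attained. Your two extra justifications fill in details the paper leaves implicit: metrizability, which lets you extract a convergent subsequence from a maximizing sequence, and the Stone--Weierstrass argument for convergence of the product measures.
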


\begin{proof}
The space \(\mathcal P(T_\Lambda)\) is compact in the weak topology. If \(\mu_n\rightharpoonup\mu\), then \(\mu_n\otimes\mu_n\rightharpoonup\mu\otimes\mu\) on \(T_\Lambda\times T_\Lambda\). Since \((x,y)\mapsto\rho_\Lambda(x,y)^\alpha\) is continuous and bounded,
\[
    I_\alpha(\mu_n)\longrightarrow I_\alpha(\mu).
\]
Hence \(I_\alpha\) attains its maximum.
\end{proof}

The energy is translation invariant. If \(\tau_a(x)=x+a\), then
\[
    I_\alpha((\tau_a)_\#\mu)=I_\alpha(\mu).
\]
Thus uniqueness of a maximizer can only be expected up to translation. Equivalently, if \(X\) and \(Y\) are independent random variables with law \(\mu\), then
\[
    I_\alpha(\mu)
    =
    \mathbb E\bigl[\rho_\Lambda(X,Y)^\alpha\bigr].
\]
This probabilistic form will be convenient in the rectangular case.

We first diagonalize translation-invariant energies at the level of \(L^2\) densities.

\begin{proposition}[Energy diagonalization for \(L^2\) densities]
\label{prop:l2-energy-diagonalization}
Let \(k\in L^1(m_\Lambda)\) be real and even, and let \(r\in L^2(m_\Lambda)\) be real. Then
\begin{equation}
\begin{aligned}
    &\iint_{T_\Lambda\times T_\Lambda}
    k(x-y)r(x)r(y)\,
    dm_\Lambda(x)\,dm_\Lambda(y) \\
    &\hspace{35mm}
    =
    \sum_{\xi\in\Lambda^*}
    \widehat k(\xi)
    |\widehat r(\xi)|^2,
\end{aligned}
    \label{eq:l2-energy-diagonalization}
\end{equation}
and the series on the right converges absolutely.
\end{proposition}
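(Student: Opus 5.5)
The plan is to use the convolution identity together with Parseval, working entirely on the torus. Since \(k\in L^1(m_\Lambda)\) and \(r\in L^2(m_\Lambda)\), Young's inequality shows that \(k*r\in L^2(m_\Lambda)\) with \(\|k*r\|_2\leq\|k\|_1\|r\|_2\). Fubini--Tonelli then justifies rewriting the double integral as an iterated one. Indeed, \(\iint|k(x-y)||r(x)||r(y)|\) equals \(\int (|k|*|r|)\,|r|\,dm_\Lambda\), which is at most \(\|k\|_1\|r\|_2^2<\infty\) by Young and Cauchy--Schwarz. Hence
\[
    \iint k(x-y)r(x)r(y)\,dm_\Lambda(x)\,dm_\Lambda(y)
    =
    \int_{T_\Lambda}(k*r)(x)\,r(x)\,dm_\Lambda(x)
    =
    \langle k*r,\,r\rangle_{L^2(m_\Lambda)},
\]
where the last equality uses that \(r\) is real.

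Next I apply Parseval for the orthonormal basis \(\{e_\xi\}_{\xi\in\Lambda^*}\). This gives
\[
    \langle k*r,r\rangle=\sum_{\xi}\widehat{k*r}(\xi)\overline{\widehat r(\xi)}.
\]
The convolution rule recorded above gives \(\widehat{k*r}(\xi)=\widehat k(\xi)\widehat r(\xi)\), so the sum becomes \(\sum_\xi\widehat k(\xi)|\widehat r(\xi)|^2\). Because \(k\) is real and even, each \(\widehat k(\xi)\) is real, so the sum is real, as it must be. The convolution rule for \(k\in L^1\), \(r\in L^2\subset L^1\) follows from a single Fubini computation, which is legitimate on the compact torus.

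For absolute convergence I use \(|\widehat k(\xi)|\leq\|k\|_1\). Then \(\sum_\xi|\widehat k(\xi)||\widehat r(\xi)|^2\leq\|k\|_1\sum_\xi|\widehat r(\xi)|^2=\|k\|_1\|r\|_2^2\) by Plancherel. Equivalently, the Parseval series for two \(L^2\) functions converges absolutely by Cauchy--Schwarz in \(\ell^2(\Lambda^*)\).

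The only step requiring care is the integrability justification for Fubini, since \(k\) is merely \(L^1\) and the integrand is not bounded. I handle it through the Young/Cauchy--Schwarz bound above, applied to \(|k|\) and \(|r|\). Everything else is standard Hilbert-space bookkeeping.
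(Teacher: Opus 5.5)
Your proof is correct, but it groups the triple integral differently from the paper.

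\textbf{Your route.} You convolve the kernel with the density: Young's inequality places \(k*r\) in \(L^2(m_\Lambda)\). You then evaluate \(\langle k*r,r\rangle\) by the polarized Parseval identity together with \(\widehat{k*r}=\widehat k\,\widehat r\).

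\textbf{The paper's route.} The paper instead forms the autocorrelation \(h=r*\widetilde r\), with \(\widetilde r(x)=r(-x)\). It observes that \(h\) is continuous with \(\widehat h=|\widehat r|^2\in\ell^1(\Lambda^*)\), and invokes uniqueness of Fourier coefficients to identify \(h\) with its uniformly convergent Fourier series. It then integrates that series term by term against \(k\in L^1\), using evenness of \(k\) to replace \(\widehat k(-\xi)\) by \(\widehat k(\xi)\).

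\textbf{Comparison.} Your argument is a little leaner. It needs neither the continuity of \(h\) nor the uniqueness theorem. It also shows that evenness of \(k\) plays no role in the identity: you use it only for the reality check. In the paper's route evenness can likewise be avoided, since \(|\widehat r(-\xi)|=|\widehat r(\xi)|\) for real \(r\). The paper's route has the merit of paralleling \cref{prop:measure-energy-diagonalization}. In both, one side of the pairing has an absolutely convergent Fourier series and is integrated term by term against the other. As a byproduct it gives the pointwise expansion \(h(z)=\sum_{\xi}|\widehat r(\xi)|^2e^{2\pi i\xi\cdot z}\).
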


\begin{proof}
Set
\[
    \widetilde r(x)=r(-x),
    \qquad
    h=r*\widetilde r.
\]
Since \(r\in L^2(m_\Lambda)\), Parseval gives
\[
    \sum_{\xi\in\Lambda^*}
    |\widehat r(\xi)|^2
    =
    \|r\|_{L^2(m_\Lambda)}^2<\infty.
\]
Moreover,
\[
    \widehat h(\xi)
    =
    \widehat r(\xi)
    \widehat{\widetilde r}(\xi)
    =
    |\widehat r(\xi)|^2.
\]
The convolution \(h\) is continuous. The series
\[
    \sum_{\xi\in\Lambda^*}
    |\widehat r(\xi)|^2
    e^{2\pi i\xi\cdot z}
\]
converges absolutely and uniformly to a continuous function with the same Fourier coefficients as \(h\). By uniqueness of Fourier coefficients for \(L^1(T_\Lambda)\) functions,
\[
    h(z)
    =
    \sum_{\xi\in\Lambda^*}
    |\widehat r(\xi)|^2
    e^{2\pi i\xi\cdot z}.
\]

By Fubini's theorem,
\[
\begin{aligned}
    &\iint
    k(x-y)r(x)r(y)\,
    dm_\Lambda(x)\,dm_\Lambda(y) \\
    &\hspace{25mm}
    =
    \int_{T_\Lambda}k(z)h(z)\,dm_\Lambda(z).
\end{aligned}
\]
The double integral is absolutely convergent, since Cauchy--Schwarz gives
\[
    \|r*\widetilde r\|_{L^\infty}
    \leq
    \|r\|_{L^2}^2,
\]
and hence
\[
    \int |k(z)|\,|h(z)|\,dm_\Lambda(z)
    \leq
    \|k\|_{L^1}\|r\|_{L^2}^2.
\]
We may therefore integrate the Fourier series of \(h\) term by term. Using the reality and evenness of \(k\),
\[
    \widehat k(-\xi)=\widehat k(\xi),
\]
which yields \eqref{eq:l2-energy-diagonalization}. Finally,
\[
    \sum_{\xi\in\Lambda^*}
    |\widehat k(\xi)|
    |\widehat r(\xi)|^2
    \leq
    \|k\|_{L^1}
    \|r\|_{L^2}^2,
\]
so the series converges absolutely.
\end{proof}

For the exact extremal arguments later in the paper, we shall also need a measure-level identity for kernels with absolutely convergent Fourier series.

\begin{proposition}[Absolutely convergent kernels]
\label{prop:measure-energy-diagonalization}
Let \(k\) be a real even kernel with
\[
    \sum_{\xi\in\Lambda^*}
    |\widehat k(\xi)|<\infty
\]
and
\[
    k(x)
    =
    \sum_{\xi\in\Lambda^*}
    \widehat k(\xi)e^{2\pi i\xi\cdot x}.
\]
Then, for every finite positive Borel measure \(\mu\) on \(T_\Lambda\),
\begin{equation}
    \mathcal E_k(\mu)
    =
    \sum_{\xi\in\Lambda^*}
    \widehat k(\xi)
    |\widehat\mu(\xi)|^2,
    \label{eq:measure-energy-diagonalization}
\end{equation}
and the series converges absolutely.
\end{proposition}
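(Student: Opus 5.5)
The plan is to substitute the absolutely convergent Fourier expansion of \(k\) directly into the double integral defining \(\mathcal E_k(\mu)\), and then interchange sum and integral.

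First, since \(\sum_{\xi}|\widehat k(\xi)|<\infty\), the partial sums of the series \(\sum_\xi \widehat k(\xi)e^{2\pi i\xi\cdot(x-y)}\) are continuous functions on \(T_\Lambda\times T_\Lambda\). By the Weierstrass M-test they converge uniformly to \(k(x-y)\). In particular \(k\) is continuous and bounded, by \(\sum|\widehat k(\xi)|\), so \(\mathcal E_k(\mu)\) is a well-defined finite integral for every finite positive Borel measure \(\mu\).

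Next, because \(\mu\otimes\mu\) is a finite measure and the convergence is uniform, we may integrate term by term. Alternatively, dominated convergence applies with the constant dominating function \(\sum|\widehat k(\xi)|\). This gives
\[
    \mathcal E_k(\mu)
    =
    \sum_{\xi\in\Lambda^*}\widehat k(\xi)
    \iint e^{2\pi i\xi\cdot x}e^{-2\pi i\xi\cdot y}\,d\mu(x)\,d\mu(y).
\]
By Fubini, each inner integral factors as \(\overline{\widehat\mu(\xi)}\,\widehat\mu(\xi)=|\widehat\mu(\xi)|^2\). Here we use that \(\mu\) is a positive (hence real) measure, so \(\int e^{2\pi i\xi\cdot x}\,d\mu=\overline{\widehat\mu(\xi)}\). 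This yields \eqref{eq:measure-energy-diagonalization}.

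Finally, for absolute convergence we use \(|\widehat\mu(\xi)|\le\mu(T_\Lambda)\), so that
\[
    \sum_\xi|\widehat k(\xi)|\,|\widehat\mu(\xi)|^2\le \mu(T_\Lambda)^2\sum_\xi|\widehat k(\xi)|<\infty.
\]
Evenness and reality of \(k\) are not even needed for the identity itself. They only ensure that \(\widehat k\) is real and even, so the right-hand side is visibly real. There is no serious obstacle. The only point requiring care is justifying the interchange, and this is handled by uniform convergence against a finite measure. This is simpler than \cref{prop:l2-energy-diagonalization}, where \(k\) was merely \(L^1\).
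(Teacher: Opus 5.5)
Your proposal is correct and follows the paper's proof essentially verbatim: uniform convergence of the Fourier series of \(k\) justifies term-by-term integration against \(\mu\otimes\mu\), each character integral factors as \(|\widehat\mu(\xi)|^2\), and \(|\widehat\mu(\xi)|\le\mu(T_\Lambda)\) gives absolute convergence. Your side remark that reality and evenness of \(k\) are not needed for the identity itself is also accurate.
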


\begin{proof}
The Fourier series of \(k\) converges uniformly, so it may be integrated term by term against \(\mu\otimes\mu\). Hence
\[
\begin{aligned}
    \mathcal E_k(\mu)
    &=
    \sum_{\xi\in\Lambda^*}
    \widehat k(\xi)
    \left(
        \int e^{2\pi i\xi\cdot x}\,d\mu(x)
    \right)
    \left(
        \int e^{-2\pi i\xi\cdot y}\,d\mu(y)
    \right) \\
    &=
    \sum_{\xi\in\Lambda^*}
    \widehat k(\xi)
    |\widehat\mu(\xi)|^2.
\end{aligned}
\]
Since
\[
    |\widehat\mu(\xi)|
    \leq
    \mu(T_\Lambda),
\]
the series is absolutely convergent.
\end{proof}

The preceding identity gives the familiar positive-definite energy bound.

\begin{corollary}
\label{cor:positive-definite-energy}
Under the hypotheses of \cref{prop:measure-energy-diagonalization}, suppose in addition that
\[
    \widehat k(\xi)\geq0
    \qquad
    \text{for every }\xi\in\Lambda^*.
\]
Then every \(\mu\in\mathcal P(T_\Lambda)\) satisfies
\[
    \mathcal E_k(\mu)\geq\widehat k(0).
\]
Equality holds if and only if
\[
    \widehat\mu(\xi)=0
\]
for every nonzero \(\xi\) with \(\widehat k(\xi)>0\).
\end{corollary}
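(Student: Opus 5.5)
The statement to prove is \cref{cor:positive-definite-energy}. My plan is to read it off directly from the diagonalization in \cref{prop:measure-energy-diagonalization}. Fix \(\mu\in\mathcal P(T_\Lambda)\). Since \(\mu\) is a finite positive Borel measure, \eqref{eq:measure-energy-diagonalization} applies and gives an absolutely convergent expansion
\[
    \mathcal E_k(\mu)
    =
    \widehat k(0)\,|\widehat\mu(0)|^2
    +
    \sum_{\xi\in\Lambda^*\setminus\{0\}}
    \widehat k(\xi)\,|\widehat\mu(\xi)|^2 .
\]
Because \(\mu\) is a probability measure, \(\widehat\mu(0)=\mu(T_\Lambda)=1\), so the zero-frequency term is exactly \(\widehat k(0)\). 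Every remaining term is a product of \(\widehat k(\xi)\geq0\) and \(|\widehat\mu(\xi)|^2\geq0\), hence nonnegative. Absolute convergence allows the separation of the \(\xi=0\) term and the termwise comparison, and this gives \(\mathcal E_k(\mu)\geq\widehat k(0)\).

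For the equality statement, I note that \(\mathcal E_k(\mu)=\widehat k(0)\) holds exactly when the series over nonzero frequencies vanishes. That series consists of nonnegative terms, so it vanishes if and only if every term does, that is, \(\widehat k(\xi)|\widehat\mu(\xi)|^2=0\) for all \(\xi\neq0\). When \(\widehat k(\xi)=0\) this is automatic. When \(\widehat k(\xi)>0\) it is equivalent to \(\widehat\mu(\xi)=0\). This is precisely the stated characterization.

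I do not expect a real obstacle. The only point needing care is to justify the manipulation of the series by citing the absolute convergence from \cref{prop:measure-energy-diagonalization}, which relies on \(\sum_\xi|\widehat k(\xi)|<\infty\) and \(|\widehat\mu(\xi)|\leq1\). It is also worth noting that \(\widehat k(\xi)\) is real: \(k\) is real and even, so \(\widehat k(-\xi)=\widehat k(\xi)=\overline{\widehat k(\xi)}\). This makes the sign hypothesis meaningful and ensures each term of the series is real.
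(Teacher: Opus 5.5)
Your proposal is correct and matches the paper's proof: substitute \(\widehat\mu(0)=1\) into the diagonalization \eqref{eq:measure-energy-diagonalization}, isolate the zero-frequency term, and compare the remaining nonnegative terms one at a time, both for the inequality and for the equality case. Your remark that \(\widehat k(\xi)\) is real because \(k\) is real and even is a useful detail that the paper leaves implicit.
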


\begin{proof}
Since \(\widehat\mu(0)=1\), \eqref{eq:measure-energy-diagonalization} becomes
\[
    \mathcal E_k(\mu)
    =
    \widehat k(0)
    +
    \sum_{\xi\neq0}
    \widehat k(\xi)|\widehat\mu(\xi)|^2.
\]
The conclusion follows term by term.
\end{proof}

The use of positive- and negative-definite kernels in metric geometry goes back to the classical work of Schoenberg and von Neumann--Schoenberg \cite{Schoenberg1938,vonNeumannSchoenberg1941}; see also \cite{BergChristensenRessel1984} for a systematic treatment, \cite{GuellaMenegatto2017} for strict positive definiteness on tori, and \cite{DamelinHickernellRagozinZeng2010} for the energy/discrepancy and invariant-measure viewpoint.

We finally record the variation of a translation-invariant energy about Haar measure. For brevity, write \(m=m_\Lambda\).

\begin{proposition}[Exact second variation at Haar measure]
\label{prop:haar-second-variation}
Let \(k\in L^1(m)\) be real and even, and let \(f\in L^\infty(m)\) be real with
\[
    \int_{T_\Lambda}f\,dm=0.
\]
For
\[
    d\mu_t=(1+tf)\,dm,
    \qquad
    |t|\,\|f\|_{L^\infty}\leq1,
\]
one has
\begin{equation}
    \mathcal E_k(\mu_t)-\mathcal E_k(m)
    =
    t^2
    \sum_{\xi\in\Lambda^*\setminus\{0\}}
    \widehat k(\xi)
    |\widehat f(\xi)|^2.
    \label{eq:haar-second-variation}
\end{equation}
\end{proposition}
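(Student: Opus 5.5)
The plan is to expand the energy bilinearly and then apply \cref{prop:l2-energy-diagonalization} to the quadratic term. Since \(|t|\,\|f\|_{L^\infty}\leq1\), the density \(1+tf\) is nonnegative. Because \(\int f\,dm=0\), its total mass is one, so \(\mu_t\in\mathcal P(T_\Lambda)\). Bilinearity then gives
\[
    \mathcal E_k(\mu_t)
    =
    \mathcal E_k(m)
    +2t\iint k(x-y)f(y)\,dm(x)\,dm(y)
    +t^2\iint k(x-y)f(x)f(y)\,dm(x)\,dm(y).
\]
Each integral is absolutely convergent: \(k\in L^1(m)\) and \(f\) is bounded, so Fubini applies on the compact torus.

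For the cross term, I integrate first in \(x\). Translation invariance of Haar measure gives \(\int k(x-y)\,dm(x)=\widehat k(0)\) for every \(y\). The cross term is therefore \(2t\,\widehat k(0)\int f\,dm=0\).

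For the quadratic term, I note that \(f\in L^\infty(m)\subset L^2(m)\), because \(m\) is a probability measure. \Cref{prop:l2-energy-diagonalization} then applies with \(r=f\), giving \(\sum_{\xi\in\Lambda^*}\widehat k(\xi)|\widehat f(\xi)|^2\) with absolute convergence. The \(\xi=0\) term vanishes because \(\widehat f(0)=\int f\,dm=0\). This yields \eqref{eq:haar-second-variation}.

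None of these steps is a real obstacle. The only points needing care are the justification of Fubini for the cross term and the observation that the mean-zero condition removes both the linear term and the zero frequency.
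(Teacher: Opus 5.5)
Your proof is correct and essentially the paper's argument, since both reduce to \cref{prop:l2-energy-diagonalization}. The only difference is that the paper applies that proposition directly to \(r=1+tf\), using \(\widehat{(1+tf)}(0)=1\) and \(\widehat{(1+tf)}(\xi)=t\widehat f(\xi)\) for \(\xi\neq0\); this absorbs automatically the cross term that you eliminate by hand via translation invariance and Fubini.
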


\begin{proof}
The density \(1+tf\) belongs to \(L^2(m)\), so \cref{prop:l2-energy-diagonalization} applies. Since \(f\) has mean zero,
\[
    \widehat{(1+tf)}(0)=1,
\]
while
\[
    \widehat{(1+tf)}(\xi)
    =
    t\widehat f(\xi),
    \qquad \xi\neq0.
\]
Consequently,
\[
    \mathcal E_k(\mu_t)
    =
    \widehat k(0)
    +
    t^2
    \sum_{\xi\neq0}
    \widehat k(\xi)|\widehat f(\xi)|^2,
\]
whereas
\[
    \mathcal E_k(m)=\widehat k(0).
\]
Subtracting gives \eqref{eq:haar-second-variation}.
\end{proof}

A single positive Fourier coefficient is therefore enough to destabilize Haar measure.

\begin{corollary}
\label{cor:positive-fourier-mode}
Let \(k\in L^1(m_\Lambda)\) be real and even. If
\[
    \widehat k(\xi)>0
\]
for some nonzero \(\xi\in\Lambda^*\), then Haar measure is not a local maximizer of \(\mathcal E_k\), even among probability measures whose densities are arbitrarily close to \(1\) in \(L^\infty(m_\Lambda)\).

More precisely, for \(0<|t|\leq1\),
\[
    d\mu_t(x)
    =
    \bigl(
        1+t\cos(2\pi\xi\cdot x)
    \bigr)\,dm_\Lambda(x)
\]
is a probability measure and
\[
    \mathcal E_k(\mu_t)-\mathcal E_k(m_\Lambda)
    =
    \frac{t^2}{2}\,\widehat k(\xi)>0.
\]
\end{corollary}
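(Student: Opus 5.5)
This follows directly from \cref{prop:haar-second-variation} with the choice \(f(x)=\cos(2\pi\xi\cdot x)\); we then only need to compute the Fourier coefficients of \(f\) and check that \(\mu_t\) is a probability measure.

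The first step is admissibility. Because \(\xi\in\Lambda^*\setminus\{0\}\), the function \(f\) is \(\Lambda\)-periodic, continuous and real, and \(\|f\|_{L^\infty}=1\). Moreover \(\int f\,dm_\Lambda=\widehat f(0)=0\), since \(e_{\pm\xi}\) is orthogonal to the constant \(e_0\). So for \(0<|t|\leq1\) the density \(1+tf\) is nonnegative and has total mass \(1\), which makes \(\mu_t\) a probability measure. The condition \(|t|\,\|f\|_{L^\infty}\leq1\) in \cref{prop:haar-second-variation} is then exactly \(|t|\leq1\).

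The second step computes the spectrum of \(f\). Writing \(f=\tfrac12(e_\xi+e_{-\xi})\) and using orthonormality of \(\{e_\eta\}\), we get \(\widehat f(\pm\xi)=\tfrac12\), and \(\widehat f(\eta)=0\) for every other \(\eta\). The points \(\xi\) and \(-\xi\) are distinct because \(\xi\neq0\). Substituting into \eqref{eq:haar-second-variation} gives
\[
\mathcal E_k(\mu_t)-\mathcal E_k(m_\Lambda)=t^2\bigl(\tfrac14\widehat k(\xi)+\tfrac14\widehat k(-\xi)\bigr).
\]
Since \(k\) is real and even, \(\widehat k(-\xi)=\widehat k(\xi)\), and the right-hand side equals \(\tfrac{t^2}{2}\widehat k(\xi)\). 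This is positive by hypothesis.

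The last step is the instability statement. We have \(\|(1+tf)-1\|_{L^\infty}=|t|\). Letting \(t\to0\) therefore produces probability densities arbitrarily close to \(1\) in \(L^\infty(m_\Lambda)\), all with strictly larger energy, so Haar measure is not a local maximizer in that topology. The densities are smooth trigonometric polynomials, so they also satisfy any smoothness requirement. None of these steps is a real obstacle. The only point needing care is keeping the distinct frequencies \(\xi\) and \(-\xi\) separate so that the factor \(\tfrac12\) comes out correctly.
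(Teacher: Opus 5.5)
Your proof is correct and follows the paper's argument: apply the second-variation formula of \cref{prop:haar-second-variation} to \(f(x)=\cos(2\pi\xi\cdot x)\), whose only nonzero Fourier coefficients are \(\tfrac12\) at \(\pm\xi\), and use the evenness of \(k\) to combine the two terms into \(\tfrac{t^2}{2}\widehat k(\xi)\). Your additional checks (that \(|t|\,\|f\|_{L^\infty}\le1\), that \(\xi\neq-\xi\), and that \(\|(1+tf)-1\|_{L^\infty}=|t|\)) spell out details the paper leaves implicit.
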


\begin{proof}
The density is nonnegative and has integral one. The only nonzero Fourier coefficients of
\[
    x\longmapsto\cos(2\pi\xi\cdot x)
\]
are \(1/2\) at \(\xi\) and \(-\xi\). The conclusion follows directly from \eqref{eq:haar-second-variation}.
\end{proof}

We shall use the measure-level identity only for kernels with absolutely convergent Fourier series. For the distance kernels studied in \cref{sec:cut-locus}, the relevant Fourier coefficients enter instead through the \(L^2\) variation formula \eqref{eq:haar-second-variation}.

\section{Cut Loci and Fourier Coefficients}
\label{sec:cut-locus}

By \cref{cor:positive-fourier-mode}, to destabilize Haar measure it is enough to find a positive nonzero Fourier coefficient of the distance kernel. We now show that such coefficients are forced by the geometry of the cut locus. More precisely, each Voronoi facet gives rise to a sequence of high frequencies approaching its normal direction along which the Fourier coefficients are positive.

Throughout this section we assume that \(d\geq2\).

\subsection{The distributional Hessian}
\label{subsec:distributional-hessian}

The distributional Hessian of the distance from a point on a general Riemannian manifold, including its singular contribution on the cut locus, was described by Mantegazza, Mascellani, and Uraltsev \cite{MantegazzaMascellaniUraltsev2014}. On a flat torus, the Voronoi structure makes the singular contribution along each facet explicit. The next proposition gives the facet coefficient and Haar normalization used in the Fourier asymptotics below. For related face-sensitive Fourier asymptotics associated with polytopes, see \cite{BrandoliniColzaniGariboldiGiganteMonguzzi2023}.

Let \(D=D_\Lambda\), and let
\[
    \Phi:[0,D]\longrightarrow\mathbb R.
\]
We consider the radial distance kernel
\[
    K_\Phi(x)=\Phi\bigl(\rho_\Lambda(x,0)\bigr).
\]
We impose the following assumptions:
\begin{enumerate}
\item[\textup{(H1)}]
\(\Phi\in C([0,D])\cap C^1((0,D])\), and \(\Phi'\) is locally absolutely continuous on \((0,D]\);

\item[\textup{(H2)}]
for some \(\delta>0\),
    \[
        \int_0^\delta
        \left(
            |\Phi''(r)|+\frac{|\Phi'(r)|}{r}
        \right)r^{d-1}\,dr<\infty;
    \]

\item[\textup{(H3)}]
    \[
        \lim_{r\downarrow0}r^{d-1}\Phi'(r)=0.
    \]
\end{enumerate}

For a unit vector \(e\in S^{d-1}\), write \(D_e=e\cdot\nabla\). Inside the Voronoi cell, away from the origin, \(K_\Phi(x)=\Phi(|x|)\). Thus, with \(r=|x|\),
\begin{equation}
    D_e^2K_\Phi(x)
    =
    \Phi''(r)\left(\frac{e\cdot x}{r}\right)^2
    +
    \frac{\Phi'(r)}{r}
    \left[
        1-\left(\frac{e\cdot x}{r}\right)^2
    \right]
    =:H_e^{\mathrm{reg}}(x).
    \label{eq:regular-directional-hessian}
\end{equation}
By \textup{(H2)}, \(H_e^{\mathrm{reg}}\in L^1(m_\Lambda)\). The same condition makes the classical gradient locally integrable: in radial coordinates,
\[
    \int_0^\delta |\Phi'(r)|r^{d-1}\,dr
    \leq
    \delta\int_0^\delta |\Phi'(r)|r^{d-2}\,dr<\infty.
\]
Since \(K_\Phi\) is continuous across paired Voronoi facets, its first distributional derivative carries no facet measure; the singular contribution appears only after differentiating once more.

Let \(v\in\Lambda\setminus\{0\}\) be Voronoi relevant and let \(F_v\) be the corresponding facet. At a point \(x\in\operatorname{relint}F_v\), the two nearest lattice points are \(0\) and \(v\). Since \(|x|=|x-v|\), the two one-sided gradients are
\[
    \frac{\Phi'(|x|)}{|x|}\,x
    \qquad\text{and}\qquad
    \frac{\Phi'(|x|)}{|x|}\,(x-v).
\]
Their difference is therefore
\[
    \frac{\Phi'(|x|)}{|x|}\,v.
\]
If \(n_v=v/|v|\), the corresponding jump of the normal derivative has magnitude
\[
    \Phi'(|x|)\frac{|v|}{|x|}.
\]

Let \(\mathcal R\) be the set of Voronoi relevant vectors, and choose \(\mathcal R_+\subset\mathcal R\) containing exactly one vector from each pair \(\{w,-w\}\). Whenever \(\mathcal H^{d-1}\!\restriction F_w\) is regarded below as a measure on \(T_\Lambda\), we mean its pushforward \(\pi_\#(\mathcal H^{d-1}\!\restriction F_w)\) under the quotient map \(\pi:\mathbb R^d\to T_\Lambda\); equivalently, facet integrals of torus functions are evaluated using their \(\Lambda\)-periodic lifts.

\begin{lemma}[Distributional Hessian]
\label{lem:distributional-hessian}
Assume \textup{(H1)--(H3)}. For every unit vector \(e\in S^{d-1}\) and every \(\psi\in C^\infty(T_\Lambda)\),
\begin{align}
    \langle D_e^2K_\Phi,\psi\rangle
    ={}&
    \int_{T_\Lambda}
    H_e^{\mathrm{reg}}(x)\psi(x)\,dm_\Lambda(x)
    \notag\\
    &-
    \frac{1}{|V|}
    \sum_{w\in\mathcal R_+}
    \int_{F_w}
    \frac{\Phi'(|x|)}{|x|}
    \frac{(e\cdot w)^2}{|w|}
    \psi(x)\,dS(x).
    \label{eq:distributional-hessian}
\end{align}
Equivalently,
\[
    D_e^2K_\Phi
    =
    H_e^{\mathrm{reg}}\,m_\Lambda
    -
    \frac{1}{|V|}
    \sum_{w\in\mathcal R_+}
    \frac{\Phi'(|x|)}{|x|}
    \frac{(e\cdot w)^2}{|w|}
    \,\mathcal H^{d-1}\!\restriction F_w
\]
as a Radon measure on \(T_\Lambda\).
\end{lemma}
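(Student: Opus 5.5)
We compute in the fundamental domain \(V\). By \eqref{eq:haar-voronoi-normalization} and the definition of distributional derivative,
\[
    \langle D_e^2K_\Phi,\psi\rangle
    =
    \frac{1}{|V|}\int_V \Phi(|x|)\,D_e^2\psi(x)\,dx,
\]
where \(\psi\) is identified with its \(\Lambda\)-periodic lift. We excise a small ball \(B_\varepsilon\) around the origin, set \(V_\varepsilon=V\setminus B_\varepsilon\), and integrate by parts twice on \(V_\varepsilon\). This is legitimate because \(K_\Phi=\Phi(|x|)\) is \(C^1\) on \(V_\varepsilon\) with gradient \(\Phi'(r)x/r\). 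Its second derivatives exist a.e. and equal \(H_e^{\mathrm{reg}}\), and they are integrable on \(V_\varepsilon\) by (H1). We can either apply the divergence theorem to the Lipschitz vector fields \(K_\Phi\,e\,D_e\psi\) and \(\psi\,e\,D_eK_\Phi\), or do a one-dimensional integration by parts along lines parallel to \(e\) (Fubini), where absolute continuity of \(\Phi'\) is exactly what is needed. The result is
\[
    \int_{V_\varepsilon}\Phi D_e^2\psi
    =
    \int_{V_\varepsilon}H_e^{\mathrm{reg}}\psi
    +\int_{\partial V_\varepsilon}\bigl[\Phi\,(e\cdot\nu)D_e\psi-\psi\,(e\cdot\nu)D_e\Phi\bigr]\,dS,
\]
with \(\nu\) the outward normal.

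\emph{Boundary at the origin.} On \(\partial B_\varepsilon\) the first term is bounded by \(C\varepsilon^{d-1}\). It vanishes as \(\varepsilon\to0\), since \(\Phi\) is continuous at \(0\), hence bounded. The second term is bounded by \(C\varepsilon^{d-1}|\Phi'(\varepsilon)|\), which tends to zero by (H3). Letting \(\varepsilon\to0\) and using \(H_e^{\mathrm{reg}}\in L^1(V)\) from (H2) (dominated convergence) recovers the regular term.

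\emph{Boundary at the facets.} The boundary \(\partial V\) is, up to an \(\mathcal H^{d-1}\)-null set of lower-dimensional faces, the union of the relative interiors of the facets \(F_w\), \(w\in\mathcal R\). We group them in pairs \(F_w, F_{-w}=F_w-w\). The functions \(\psi\), \(D_e\psi\) and \(\Phi(\rho_\Lambda)\) are periodic and continuous, and the outward normals on \(F_w\) and \(F_{-w}\) are opposite. Hence, after the translation by \(-w\), the terms \(\Phi\,(e\cdot\nu)D_e\psi\) on the two facets cancel.

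For the \(\psi\,(e\cdot\nu)D_e\Phi\) terms we translate the contribution of \(F_{-w}\) back to \(F_w\) via \(y=x-w\). On \(F_{-w}\) the one-sided gradient from inside \(V\) is \(\Phi'(|y|)y/|y|\), with \(|y|=|x|\) and \(\nu=-n_w\). On \(F_w\) it is \(\Phi'(|x|)x/|x|\), with \(\nu=n_w\). The sum is therefore
\[
    -\int_{F_w}\psi\,(e\cdot n_w)\,\frac{\Phi'(|x|)}{|x|}\,e\cdot\bigl(x-(x-w)\bigr)\,dS
    =
    -\int_{F_w}\psi\,\frac{\Phi'(|x|)}{|x|}\,\frac{(e\cdot w)^2}{|w|}\,dS,
\]
which is precisely the jump computed before the lemma. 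Summing over \(w\in\mathcal R_+\) and dividing by \(|V|\) gives \eqref{eq:distributional-hessian}. The facet densities are bounded, because \(|x|\geq\min_w|w|/2>0\) on \(\partial V\), so the right-hand side is a Radon measure.

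The main obstacle is justifying the integration by parts on the polytope \(V_\varepsilon\) when \(\Phi\) is only \(C^1\) with absolutely continuous derivative. The cleanest route is to mollify. Replace \(\Phi\) by smooth \(\Phi_n\) agreeing with the hypotheses uniformly on \([\varepsilon/2,D]\), apply the classical Gauss–Green formula on the Lipschitz domain \(V_\varepsilon\), and pass to the limit. This uses \(\Phi_n'\to\Phi'\) uniformly and \(\Phi_n''\to\Phi''\) in \(L^1([\varepsilon,D])\), the latter available since \(\Phi'\) is absolutely continuous there.
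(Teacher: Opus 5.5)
Your proposal is correct and follows essentially the same route as the paper. Both arguments excise a small ball and integrate by parts on \(V_\varepsilon\), kill the sphere terms using boundedness of \(\Phi\) and (H3), cancel the \(\Phi\,D_e\psi\) facet terms by periodicity, and pair \(F_w\) with \(F_{-w}\) to extract the gradient jump \(\Phi'(|x|)(e\cdot w)/|x|\). The differences are cosmetic: the paper does the two integrations by parts separately (first identifying \(D_eK_\Phi\) as the classical gradient with no facet measure), and your mollification remark supplies a justification of the Gauss--Green step that the paper leaves implicit.
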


\begin{proof}
Identify functions on the torus with their \(\Lambda\)-periodic lifts. Choose \(\varepsilon>0\) so that \(\overline{B(0,\varepsilon)}\subset V^\circ\), and set
\[
    V_\varepsilon
    =
    V^\circ\setminus\overline{B(0,\varepsilon)}.
\]

We first identify the distributional first derivative. Integration by parts on \(V_\varepsilon\) produces boundary terms on the facets of \(V\) and on \(\partial B(0,\varepsilon)\). The facet terms cancel in opposite pairs. Indeed, if \(x\in F_w\), then \(2x\cdot w=|w|^2\), so \(|x|=|x-w|\) and hence \(K_\Phi(x)=K_\Phi(x-w)\); moreover the test function has the same values at \(x\) and \(x-w\) by \(\Lambda\)-periodicity, while the corresponding outer normals are opposite. The contribution from the small sphere is
\[
    O\!\left(
        \varepsilon^{d-1}|\Phi(\varepsilon)|
    \right),
\]
and hence tends to zero. It follows that
\[
    D_eK_\Phi(x)
    =
    \Phi'(|x|)\frac{e\cdot x}{|x|}
\]
in the distributional sense.

We differentiate once more. Using \eqref{eq:haar-voronoi-normalization} and integrating by parts on \(V_\varepsilon\),
\[
\begin{aligned}
    -\frac{1}{|V|}
    \int_{V_\varepsilon}
    D_eK_\Phi\,D_e\psi\,dx
    ={}&
    \frac{1}{|V|}
    \int_{V_\varepsilon}
    H_e^{\mathrm{reg}}\psi\,dx \\
    &-
    \frac{1}{|V|}
    \int_{\partial V_\varepsilon}
    D_eK_\Phi\,(e\cdot n)\psi\,dS.
\end{aligned}
\]
The contribution of \(\partial B(0,\varepsilon)\) is
\[
    O\!\left(
        \varepsilon^{d-1}|\Phi'(\varepsilon)|
    \right),
\]
which tends to zero by \textup{(H3)}.

It remains to pair opposite facets. Fix \(w\in\mathcal R_+\). For \(x\in F_w\), the point \(x-w\) lies in \(F_{-w}\) and represents the same point of the torus. Hence \(\psi(x-w)=\psi(x)\). The outer normals to \(F_w\) and \(F_{-w}\) are \(w/|w|\) and \(-w/|w|\), respectively, while
\[
    D_eK_\Phi(x)-D_eK_\Phi(x-w)
    =
    \frac{\Phi'(|x|)}{|x|}\,e\cdot w.
\]
After translating the integral over \(F_{-w}\) to \(F_w\), the two boundary contributions combine to
\[
    \int_{F_w}
    \frac{\Phi'(|x|)}{|x|}
    \frac{(e\cdot w)^2}{|w|}
    \psi(x)\,dS(x).
\]
The boundary term enters with a minus sign, giving \eqref{eq:distributional-hessian}.

Intersections of two or more facets have \(\mathcal H^{d-1}\)-measure zero, so no additional term occurs on the lower-dimensional strata. Letting \(\varepsilon\downarrow0\) completes the proof.
\end{proof}

\subsection{Frequencies aligned with Voronoi facets}
\label{subsec:aligned-frequencies}

Fix a Voronoi relevant vector \(v\). We shall use frequencies in the even dual lattice \(2\Lambda^*\) whose directions approach the normal direction of \(F_v\). The factor \(2\) locks the phase on the affine hyperplane containing the facet.

\begin{lemma}[Aligned even dual frequencies]
\label{lem:aligned-even-frequencies}
For every nonzero \(v\in\mathbb R^d\), there exist \(t_j\to\infty\) and \(\xi_j\in2\Lambda^*\) such that
\[
    |\xi_j-t_jv|\longrightarrow0.
\]
Consequently,
\[
    |\xi_j|\longrightarrow\infty,
    \qquad
    \operatorname{dist}(\xi_j,\mathbb Rv)\longrightarrow0,
    \qquad
    \frac{\xi_j}{|\xi_j|}
    \longrightarrow
    \frac{v}{|v|}.
\]
\end{lemma}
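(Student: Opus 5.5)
The plan is to treat this as a simultaneous Diophantine approximation statement for the lattice \(M=2\Lambda^*\), which is a full-rank lattice in \(\mathbb R^d\). The key reduction is to look at the torus \(\mathbb R^d/M\) and the linear flow \(t\mapsto tv \bmod M\). We need times \(t_j\to\infty\) at which \(t_jv\) comes arbitrarily close to \(0\) modulo \(M\). If \(\xi_j\in M\) is the nearest lattice point to \(t_jv\), then \(|\xi_j-t_jv|=\operatorname{dist}(t_jv,M)\to0\), and this is exactly the first conclusion.

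The existence of such return times follows from a pigeonhole (Dirichlet) argument on the compact quotient \(\mathbb R^d/M\). Fix \(\varepsilon>0\) and cover a fundamental domain of \(M\) by finitely many, say \(N(\varepsilon)\), sets of diameter less than \(\varepsilon\). Among the \(N(\varepsilon)+1\) points \(kv \bmod M\), \(k=0,1,\ldots,N(\varepsilon)\), two fall in the same set. Say these are \(k_1<k_2\). Then \(t=k_2-k_1\) is a positive integer with \(\operatorname{dist}(tv,M)<\varepsilon\).

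To force \(t_j\to\infty\), repeat the argument with \(v\) replaced by \(Tv\) for large integers \(T\), which gives integer multiples \(t\ge T\). Alternatively, observe the following. If only bounded \(t\) occurred as \(\varepsilon\to0\), some fixed positive integer \(t_0\) would satisfy \(t_0v\in M\). Then \(kt_0v\in M\) for every \(k\), so \(t_j=jt_0\) works with exact equality \(\xi_j=t_jv\). Either route yields \(t_j\to\infty\) and \(\xi_j\in M\) with \(|\xi_j-t_jv|\to0\). I will use the \(Tv\) version to avoid splitting into cases.

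The consequences are then immediate. First, \(|\xi_j|\ge t_j|v|-|\xi_j-t_jv|\to\infty\). Second, \(\operatorname{dist}(\xi_j,\mathbb Rv)\le|\xi_j-t_jv|\to0\). Third, \(\xi_j/|\xi_j|-v/|v|\to0\), because \(\xi_j/t_j\to v\) and normalization is continuous away from \(0\).

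I do not expect a real obstacle here. The only point needing care is ensuring \(t_j\) is unbounded rather than just obtaining small distances, and the rescaling by \(T\) handles that.
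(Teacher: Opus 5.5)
Your proposal is correct and is essentially the paper's argument. Your pigeonhole on \(\mathbb R^d/2\Lambda^*\) is the coordinate-free form of the simultaneous Dirichlet approximation that the paper applies to the coordinates of \(v\) in a basis of \(2\Lambda^*\); the only variation is that you force \(t_j\to\infty\) by applying the argument to \(Tv\), whereas the paper splits into the unbounded-\(q\) case and the case \(qv\in2\Lambda^*\), which is the alternative you also sketch.
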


\begin{proof}
Choose a basis \(\gamma_1,\ldots,\gamma_d\) of \(2\Lambda^*\) and write
\[
    v=\sum_{i=1}^d\theta_i\gamma_i.
\]
By simultaneous Dirichlet approximation, for every integer \(Q\geq1\) there exist
\[
    1\leq q\leq Q^d,
    \qquad
    p_1,\ldots,p_d\in\mathbb Z,
\]
such that
\[
    \max_{1\leq i\leq d}|q\theta_i-p_i|
    \leq\frac1Q.
\]
Set
\[
    t=q,
    \qquad
    \xi=\sum_{i=1}^d p_i\gamma_i\in2\Lambda^*.
\]
Since all norms on a finite-dimensional vector space are equivalent,
\[
    |\xi-tv|
    \leq
    \frac{C_\Lambda}{Q}
\]
for a constant \(C_\Lambda\) independent of \(Q\).

If the resulting integers \(q\) are unbounded along a sequence \(Q\to\infty\), an appropriate subsequence gives the assertion. Otherwise, after passing to a subsequence, some fixed \(q\) occurs for arbitrarily large \(Q\). It follows that
\[
    q\theta_i\in\mathbb Z
    \qquad
    \text{for every }i,
\]
and hence \(qv\in2\Lambda^*\). We may then take
\[
    t_j=jq,
    \qquad
    \xi_j=jqv.
\]
The remaining assertions follow immediately.
\end{proof}

For \(v\in\Lambda\) and \(\xi\in2\Lambda^*\), we also have
\[
    \frac{\xi\cdot v}{2}\in\mathbb Z.
\]
Thus the normal component of the Fourier phase is constant on \(F_v\) modulo integers.

\subsection{Positive high-frequency coefficients}
\label{subsec:positive-fourier-coefficients}

We now combine the distributional formula with the aligned frequencies. The regular part disappears at high frequency, while the contribution of the selected facet survives.

\begin{theorem}[Positive Fourier coefficients along facet normals]
\label{thm:positive-fourier-tail}
Assume \textup{(H1)--(H3)} and suppose that
\[
    \Phi'(r)>0,
    \qquad 0<r\leq D_\Lambda.
\]
Let \(v\) be Voronoi relevant. Then there exist \(t_j\to\infty\) and \(\xi_j\in2\Lambda^*\) such that
\[
    |\xi_j-t_jv|\longrightarrow0
\]
and
\begin{equation}
    (2\pi|\xi_j|)^2\widehat K_\Phi(\xi_j)
    \longrightarrow
    \frac{|v|}{|V|}
    \int_{F_v}
    \frac{\Phi'(|x|)}{|x|}\,dS(x)
    >0.
    \label{eq:positive-fourier-tail}
\end{equation}
In particular,
\[
    \widehat K_\Phi(\xi_j)>0
\]
for all sufficiently large \(j\).
\end{theorem}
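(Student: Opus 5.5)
Take \(\xi_j\in2\Lambda^*\) and \(t_j\to\infty\) from \cref{lem:aligned-even-frequencies} applied to \(v\), and set \(e_j=\xi_j/|\xi_j|\), so \(e_j\to n_v=v/|v|\). Test the distributional identity of \cref{lem:distributional-hessian} against \(\psi(x)=e^{-2\pi i\xi_j\cdot x}\). Since \(D_{e_j}^2\psi=-(2\pi|\xi_j|)^2\psi\), the left side is \(\langle K_\Phi,D_{e_j}^2\psi\rangle=-(2\pi|\xi_j|)^2\widehat K_\Phi(\xi_j)\). This gives
\[
(2\pi|\xi_j|)^2\widehat K_\Phi(\xi_j)
=-\widehat{H^{\mathrm{reg}}_{e_j}}(\xi_j)
+\frac{1}{|V|}\sum_{w\in\mathcal R_+}\frac{(e_j\cdot w)^2}{|w|}\int_{F_w}\frac{\Phi'(|x|)}{|x|}e^{-2\pi i\xi_j\cdot x}\,dS(x).
\]
The task is then to show three things: the regular term tends to zero, the \(F_v\)-term converges to the claimed constant, and every other facet term tends to zero.

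\textbf{Regular term.} By \eqref{eq:regular-directional-hessian}, \(H^{\mathrm{reg}}_{e}\) is a quadratic form in \(e\) whose coefficients are the \(L^1\) functions \(\Phi''(r)x_ax_b/r^2\) and \((\Phi'(r)/r)(\delta_{ab}-x_ax_b/r^2)\), all integrable by (H2). Hence \(\widehat{H^{\mathrm{reg}}_{e_j}}(\xi_j)\) is a bounded combination \(\sum_{a,b}e_{j,a}e_{j,b}\widehat{g_{ab}}(\xi_j)\). Each \(\widehat{g_{ab}}(\xi_j)\to0\) by Riemann--Lebesgue on the torus, since \(|\xi_j|\to\infty\) in \(\Lambda^*\).

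\textbf{Selected facet.} The contributions of \(v\) and \(-v\) coincide (only one lies in \(\mathcal R_+\)), and \((e_j\cdot v)^2/|v|\to|v|\). On \(F_v\), write \(x\cdot\xi_j=t_j\,x\cdot v+x\cdot(\xi_j-t_jv)\).
\begin{itemize}
\item The second piece is bounded by \(D_\Lambda|\xi_j-t_jv|\to0\).
\item For the first piece, \(x\cdot v=|v|^2/2\) is constant, so \(e^{-2\pi i t_jx\cdot v}=e^{-\pi i t_j|v|^2}\). This constant equals \(1\) once the phase is locked. The phase lock uses \(\xi_j\cdot v/2\in\mathbb Z\) and \(|\xi_j\cdot v-t_j|v|^2|\to0\), hence \(e^{-2\pi i\xi_j\cdot x}\to1\) uniformly on \(F_v\). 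I would phrase this cleanly by writing \(\xi_j\cdot x=\tfrac12\xi_j\cdot v+\xi_j\cdot(x-\tfrac v2)\) and noting \(x-\tfrac v2\perp v\), so \(|\xi_j\cdot(x-\tfrac v2)|\le|\xi_j-t_jv|\,D_\Lambda\cdot O(1)\to0\).
\item The weight \(\Phi'(|x|)/|x|\) is continuous and bounded on \(F_v\), because \(|x|\ge|v|/2\).
\end{itemize}
Dominated convergence then gives \(\frac{|v|}{|V|}\int_{F_v}\Phi'(|x|)/|x|\,dS\), which is positive since \(\Phi'>0\).

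\textbf{Other facets (main obstacle).} For \(w\neq\pm v\), the normal \(n_w\) is not parallel to \(e_j\) in the limit, so the restriction of the phase to the hyperplane of \(F_w\) has tangential frequency \(\xi_j-(\xi_j\cdot n_w)n_w\). Its size is \(\asymp|\xi_j|\sin\angle(n_v,n_w)\to\infty\). The density \(\Phi'(|x|)/|x|\,\mathbf 1_{F_w}\) is an \(L^1\) function on a \((d-1)\)-dimensional hyperplane; this needs \(d\ge2\) so that the hyperplane has a tangential direction. Its Euclidean Fourier transform at the tangential frequency therefore tends to zero by the Riemann--Lebesgue lemma in \(\mathbb R^{d-1}\), uniformly enough because the frequencies diverge. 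The coefficients \((e_j\cdot w)^2/|w|\) stay bounded, and there are finitely many facets, so these terms vanish. Summing the three limits gives \eqref{eq:positive-fourier-tail}, and positivity of \(\widehat K_\Phi(\xi_j)\) for large \(j\) follows.

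The delicate points are justifying the test against complex exponentials (apply the lemma to real and imaginary parts) and checking that the tangential frequencies truly diverge for every \(w\neq\pm v\). The latter holds because distinct relevant vectors \(w\neq\pm v\) have non-parallel normals.
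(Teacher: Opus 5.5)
Your proposal is correct and follows the paper's proof step for step. You test the distributional Hessian against $e^{-2\pi i\xi_j\cdot x}$ in the direction $e_j=\xi_j/|\xi_j|$, remove the regular part by Riemann--Lebesgue, use the phase lock $\tfrac12\xi_j\cdot v\in\mathbb Z$ on $F_v$, and remove the non-parallel facets by Riemann--Lebesgue along $w^\perp$. Your deviations are minor streamlinings: writing $H^{\mathrm{reg}}_e$ as a quadratic form in $e$ with fixed $L^1$ coefficients, and pulling the scalar $(e_j\cdot w)^2/|w|$ out of the facet integral. These replace the paper's $L^1$-continuity arguments in $e_j$. One small slip: $e^{-\pi i t_j|v|^2}$ only tends to $1$, it does not equal $1$, as your cleaner reformulation already shows.
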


\begin{proof}
Choose \(\mathcal R_+\) so that \(v\in\mathcal R_+\), and let \(\xi_j,t_j\) be as in \cref{lem:aligned-even-frequencies}. Set
\[
    e_j=\frac{\xi_j}{|\xi_j|},
    \qquad
    e=\frac{v}{|v|}.
\]
Then \(e_j\to e\).

Taking the \(\xi_j\)-th Fourier coefficient of \eqref{eq:distributional-hessian}, with direction \(e_j\), gives
\begin{align}
    -(2\pi|\xi_j|)^2\widehat K_\Phi(\xi_j)
    ={}&
    \widehat{H_{e_j}^{\mathrm{reg}}}(\xi_j)
    \notag\\
    &-
    \frac{1}{|V|}
    \sum_{w\in\mathcal R_+}
    \int_{F_w}
    \frac{\Phi'(|x|)}{|x|}
    \frac{(e_j\cdot w)^2}{|w|}
    e^{-2\pi i\xi_j\cdot x}\,dS(x).
    \label{eq:fourier-transformed-hessian}
\end{align}

We first consider the regular part. For every unit vector \(u\), \eqref{eq:regular-directional-hessian} gives the pointwise bound
\[
    |H_u^{\mathrm{reg}}(x)|
    \leq
    |\Phi''(|x|)|
    +
    \frac{|\Phi'(|x|)|}{|x|}
    \qquad\text{for a.e. }x\in V.
\]
The right-hand side is integrable near the origin by \textup{(H2)}, and is integrable on compact subsets of \(V\setminus\{0\}\) by \textup{(H1)}. Since \(e_j\to e\), we also have \(H_{e_j}^{\mathrm{reg}}(x)\to H_e^{\mathrm{reg}}(x)\) for almost every \(x\). Dominated convergence therefore yields
\[
    \|H_{e_j}^{\mathrm{reg}}-H_e^{\mathrm{reg}}\|_{L^1(m_\Lambda)}
    \longrightarrow0.
\]
Hence
\[
\begin{aligned}
    \left|
        \widehat{H_{e_j}^{\mathrm{reg}}}(\xi_j)
    \right|
    \leq{}&
    \|H_{e_j}^{\mathrm{reg}}-H_e^{\mathrm{reg}}\|_{L^1}
    +
    \left|
        \widehat{H_e^{\mathrm{reg}}}(\xi_j)
    \right|
    \longrightarrow0
\end{aligned}
\]
by the Riemann--Lebesgue lemma.

Now consider \(F_v\). Write
\[
    x=\frac{v}{2}+y,
    \qquad
    y\cdot v=0.
\]
Since \(\xi_j\in2\Lambda^*\),
\[
    e^{-2\pi i\xi_j\cdot v/2}=1.
\]
Writing
\[
    \xi_j=t_jv+\delta_j,
    \qquad
    \delta_j\longrightarrow0,
\]
we have
\[
    \xi_j\cdot y=\delta_j\cdot y\longrightarrow0
\]
uniformly on \(F_v\). Therefore
\[
    e^{-2\pi i\xi_j\cdot x}\longrightarrow1
\]
uniformly on \(F_v\), while
\[
    \frac{(e_j\cdot v)^2}{|v|}
    \longrightarrow |v|.
\]
It follows that the contribution of \(F_v\) tends to
\[
    |v|
    \int_{F_v}
    \frac{\Phi'(|x|)}{|x|}\,dS(x).
\]

It remains to consider \(F_w\) with \(w\in\mathcal R_+\) and \(w\neq v\). Distinct vectors in \(\mathcal R_+\) are not parallel. Indeed, if \(u\) and \(cu\), with \(c>1\), are nonzero lattice vectors, then every point on the bisector for \(cu\) satisfies
\[
    x\cdot u=\frac{c|u|^2}{2}>\frac{|u|^2}{2},
\]
so \(cu\) cannot be Voronoi relevant. Hence two relevant vectors on the same line differ only by sign. If \(P_w\) denotes orthogonal projection onto \(w^\perp\), then
\[
    P_wv\neq0,
\]
and therefore
\[
    |P_w\xi_j|
    =
    t_j|P_wv|+o(t_j)
    \longrightarrow\infty.
\]
Set, on \(F_w\),
\[
    A_j(x)
    =
    \frac{\Phi'(|x|)}{|x|}
    \frac{(e_j\cdot w)^2}{|w|},
    \qquad
    A(x)
    =
    \frac{\Phi'(|x|)}{|x|}
    \frac{(e\cdot w)^2}{|w|}.
\]
Because \(F_w\) is compact and separated from the origin and \(e_j\to e\), we have
\[
    \|A_j-A\|_{L^1(F_w)}\longrightarrow0.
\]
Write \(x=w/2+y\) on \(F_w\), with \(y\in w^\perp\). Then
\[
    e^{-2\pi i\xi_j\cdot x}
    =
    e^{-\pi i\xi_j\cdot w}
    e^{-2\pi i(P_w\xi_j)\cdot y}.
\]
The first factor is constant on the facet and has modulus one (indeed it equals \(1\), since \(\xi_j\in2\Lambda^*\) and \(w\in\Lambda\)); all oscillation is therefore tangential. Since \( |P_w\xi_j|\to\infty\), the Riemann--Lebesgue lemma on \(w^\perp\), applied to the fixed amplitude \(A\mathbf 1_{F_w}\in L^1(w^\perp)\), gives
\[
    \int_{F_w}A(x)e^{-2\pi i\xi_j\cdot x}\,dS(x)
    \longrightarrow0.
\]
Therefore
\[
\begin{aligned}
    \left|
        \int_{F_w}A_j(x)e^{-2\pi i\xi_j\cdot x}\,dS(x)
    \right|
    &\leq
    \|A_j-A\|_{L^1(F_w)}
    +
    \left|
        \int_{F_w}A(x)e^{-2\pi i\xi_j\cdot x}\,dS(x)
    \right| \\
    &\longrightarrow0.
\end{aligned}
\]

Passing to the limit in \eqref{eq:fourier-transformed-hessian} yields \eqref{eq:positive-fourier-tail}. The limiting integral is strictly positive because \(F_v\) has positive \((d-1)\)-dimensional measure, is separated from the origin, and \(\Phi'>0\).
\end{proof}

For distance powers, the theorem specializes as follows.

\begin{corollary}[Distance powers]
\label{cor:positive-fourier-distance-powers}
Let \(\alpha>0\). For every Voronoi relevant vector \(v\), there exist \(\xi_j\in2\Lambda^*\) such that
\[
    |\xi_j|\longrightarrow\infty,
    \qquad
    \operatorname{dist}(\xi_j,\mathbb Rv)\longrightarrow0,
\]
and
\begin{equation}
    (2\pi|\xi_j|)^2
    \widehat{\rho_\Lambda^\alpha}(\xi_j)
    \longrightarrow
    \frac{\alpha|v|}{|V|}
    \int_{F_v}|x|^{\alpha-2}\,dS(x)
    >0.
    \label{eq:positive-fourier-distance-powers}
\end{equation}
\end{corollary}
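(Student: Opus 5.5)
The plan is to specialize \cref{thm:positive-fourier-tail} to \(\Phi(r)=r^\alpha\) on \([0,D_\Lambda]\). Three things need checking: the hypotheses \textup{(H1)--(H3)}, the positivity \(\Phi'>0\), and the form of the limiting constant.

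\textbf{Hypotheses.} We have \(\Phi\in C([0,D])\cap C^\infty((0,D])\), so \textup{(H1)} holds. Next, \(\Phi'(r)=\alpha r^{\alpha-1}>0\) for \(r>0\), which gives the positivity assumption. For \textup{(H2)}, compute \(|\Phi''(r)|+|\Phi'(r)|/r=(\alpha|\alpha-1|+\alpha)r^{\alpha-2}\). The integrand is then a constant times \(r^{\alpha+d-3}\). Since \(\alpha>0\) and \(d\geq2\), the exponent exceeds \(-1\), so the integral over \((0,\delta)\) is finite. For \textup{(H3)}, \(r^{d-1}\Phi'(r)=\alpha r^{\alpha+d-2}\to0\) because \(\alpha+d-2>0\). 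The assumption \(d\geq2\) is exactly what makes the borderline cases (small \(\alpha\)) work. I expect this exponent check to be the only delicate point, and it is routine.

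\textbf{Applying the theorem.} \Cref{thm:positive-fourier-tail} now supplies \(t_j\to\infty\) and \(\xi_j\in2\Lambda^*\) with \(|\xi_j-t_jv|\to0\). \Cref{lem:aligned-even-frequencies} then gives \(|\xi_j|\to\infty\) and \(\operatorname{dist}(\xi_j,\mathbb Rv)\to0\). Since \(\widehat K_\Phi=\widehat{\rho_\Lambda^\alpha}\) and \(\Phi'(|x|)/|x|=\alpha|x|^{\alpha-2}\), the limit \eqref{eq:positive-fourier-tail} becomes \eqref{eq:positive-fourier-distance-powers}. The limiting constant is positive because \(F_v\) has positive surface measure and is bounded away from the origin: every \(x\in F_v\) satisfies \(|x|\geq|v|/2\). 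This same argument also yields \cref{thm:intro-fourier-signature}.
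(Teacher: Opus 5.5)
Your proposal is correct and matches the paper's proof: specialize \cref{thm:positive-fourier-tail} to \(\Phi(r)=r^\alpha\), check \textup{(H1)--(H3)} via the integrand bound \(r^{\alpha+d-3}\) and \(r^{d-1}\Phi'(r)=\alpha r^{\alpha+d-2}\to0\) (both using \(d\geq2\) and \(\alpha>0\)), and substitute \(\Phi'(r)/r=\alpha r^{\alpha-2}\) into the limiting constant. Your extra remarks are also correct and only make explicit what the paper leaves implicit: that \(|\xi_j|\to\infty\) and \(\operatorname{dist}(\xi_j,\mathbb Rv)\to0\) follow from \(|\xi_j-t_jv|\to0\), and that \(|x|\geq|v|/2\) on \(F_v\).
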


\begin{proof}
Take \(\Phi(r)=r^\alpha\). Near the origin,
\[
    \left(
        |\Phi''(r)|+\frac{|\Phi'(r)|}{r}
    \right)r^{d-1}
    \lesssim
    r^{\alpha+d-3},
\]
which is integrable because \(d\geq2\) and \(\alpha>0\). Moreover,
\[
    r^{d-1}\Phi'(r)
    =
    \alpha r^{\alpha+d-2}
    \longrightarrow0.
\]
Thus \textup{(H1)--(H3)} hold, and
\[
    \frac{\Phi'(r)}{r}
    =
    \alpha r^{\alpha-2}.
\]
The result follows from \cref{thm:positive-fourier-tail}.
\end{proof}

\section{Negative Type and Haar Instability}
\label{sec:negative-type}

The positive Fourier coefficients obtained in \cref{cor:positive-fourier-distance-powers} have two consequences. They produce finite witnesses to the failure of negative type, and they yield arbitrarily small perturbations of Haar measure that increase the distance energy.

\subsection{Negative type and generalized roundness}
\label{subsec:negative-type-roundness}

Let \((X,\rho)\) be a metric space and let \(p>0\). We say that \(X\) has \emph{\(p\)-negative type} if, for every finite family \(x_1,\ldots,x_N\in X\) and every choice of real numbers \(c_1,\ldots,c_N\) satisfying
\[
    \sum_{i=1}^N c_i=0,
\]
one has
\begin{equation}
    \sum_{i,j=1}^N
    c_ic_j\,\rho(x_i,x_j)^p
    \leq0.
    \label{eq:p-negative-type}
\end{equation}
The \emph{supremal negative type} of \(X\) is
\[
    \wp(X)
    =
    \sup\Bigl(
        \{p>0:X\text{ has \(p\)-negative type}\}\cup\{0\}
    \Bigr).
\]
In the probability literature this same quantity is called the \emph{fractional index} \(\beta_X\); see Istas \cite[\S2.6]{Istas2012}. Thus \(\beta_X=\wp(X)\) with the present conventions.

This notion goes back to Schoenberg's work on metric embeddings and positive definite functions \cite{Schoenberg1935,Schoenberg1938}. It is equivalent to generalized roundness in the following sense. A metric space \(X\) has \emph{generalized roundness \(q\)} if, for every \(n\geq2\) and every choice of
\[
    a_1,\ldots,a_n,b_1,\ldots,b_n\in X,
\]
one has
\[
\begin{aligned}
    &\sum_{1\leq i<j\leq n}
    \bigl(
        \rho(a_i,a_j)^q+\rho(b_i,b_j)^q
    \bigr) \\
    &\hspace{35mm}
    \leq
    \sum_{i,j=1}^n
    \rho(a_i,b_j)^q.
\end{aligned}
\]
Its generalized roundness is
\[
    \operatorname{gr}(X)
    =
    \sup\Bigl(
        \{q>0:X\text{ has generalized roundness \(q\)}\}
        \cup\{0\}
    \Bigr).
\]
By the theorem of Lennard, Tonge, and Weston \cite{LennardTongeWeston1997},
\begin{equation}
    \operatorname{gr}(X)=\wp(X).
    \label{eq:roundness-negative-type}
\end{equation}

For a finite signed Borel measure \(\nu\) on a compact metric space \((X,\rho)\), define
\[
    \mathcal Q_p(\nu)
    =
    \iint_{X\times X}
    \rho(x,y)^p\,d\nu(x)\,d\nu(y).
\]
If
\[
    \nu=\sum_{i=1}^N c_i\delta_{x_i},
    \qquad
    \nu(X)=0,
\]
then
\[
    \mathcal Q_p(\nu)
    =
    \sum_{i,j=1}^N
    c_ic_j\,\rho(x_i,x_j)^p.
\]
Thus a finitely supported zero-mass measure with \(\mathcal Q_p(\nu)>0\) witnesses the failure of \(p\)-negative type. This signed-measure formulation is also natural in the distance geometry of compact metric spaces; see, for example, \cite{NickolasWolf2009,NickolasWolf2011,NickolasWolf2011III}.

\subsection{From positive Fourier modes to finite witnesses}
\label{subsec:finite-negative-type-witnesses}

We first pass directly from a positive Fourier coefficient to a signed zero-mass measure.

\begin{proposition}[Fourier witness]
\label{prop:fourier-negative-type-witness}
Let \(p>0\), and suppose that
\[
    \widehat{\rho_\Lambda^p}(\xi)>0
\]
for some nonzero \(\xi\in\Lambda^*\). Define
\[
    d\nu_\xi(x)
    =
    \cos(2\pi\xi\cdot x)\,dm_\Lambda(x).
\]
Then
\[
    \nu_\xi(T_\Lambda)=0
\]
and
\begin{equation}
    \mathcal Q_p(\nu_\xi)
    =
    \frac12\,\widehat{\rho_\Lambda^p}(\xi)
    >0.
    \label{eq:continuous-negative-type-witness}
\end{equation}
\end{proposition}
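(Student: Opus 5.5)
The proof reduces to one application of \cref{prop:haar-second-variation} and one direct Fourier computation. I will argue directly rather than through the perturbation formulation.

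First, the zero mass. Since \(\xi\neq0\) lies in \(\Lambda^*\), the characters \(e_{\pm\xi}\) are orthogonal to the constant function in \(L^2(m_\Lambda)\). Hence
\[
    \nu_\xi(T_\Lambda)=\int_{T_\Lambda}\cos(2\pi\xi\cdot x)\,dm_\Lambda(x)=\tfrac12\bigl(\widehat{1}(-\xi)+\widehat{1}(\xi)\bigr)=0.
\]

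Second, the quadratic form. Write \(f(x)=\cos(2\pi\xi\cdot x)\). Since \(\nu_\xi=f\,m_\Lambda\) and \(\rho_\Lambda(x,y)^p=K_p(x-y)\) with \(K_p=\rho_\Lambda(\cdot,0)^p\), we have
\[
    \mathcal Q_p(\nu_\xi)=\iint K_p(x-y)f(x)f(y)\,dm_\Lambda(x)\,dm_\Lambda(y).
\]
The kernel \(K_p\) is real, even (because \(\rho_\Lambda(-x,0)=\rho_\Lambda(x,0)\)), bounded and continuous, so \(K_p\in L^1(m_\Lambda)\). The function \(f\) is real and lies in \(L^2\). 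Thus \cref{prop:l2-energy-diagonalization} applies and gives \(\mathcal Q_p(\nu_\xi)=\sum_\eta \widehat{K_p}(\eta)|\widehat f(\eta)|^2\).

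The only nonzero coefficients of \(f\) are \(\widehat f(\pm\xi)=\tfrac12\). Because \(\xi\neq0\), the frequencies \(\xi\) and \(-\xi\) are distinct, so
\[
    \mathcal Q_p(\nu_\xi)=\tfrac14\bigl(\widehat{K_p}(\xi)+\widehat{K_p}(-\xi)\bigr).
\]
Evenness and reality of \(K_p\) give \(\widehat{K_p}(-\xi)=\widehat{K_p}(\xi)\), and this common value is real. Therefore \(\mathcal Q_p(\nu_\xi)=\tfrac12\widehat{\rho_\Lambda^p}(\xi)\), which is positive by hypothesis.

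The same identity can be read off from \cref{cor:positive-fourier-mode}. With \(\mu_t=(1+tf)m_\Lambda\), bilinearity gives \(\mathcal E(\mu_t)-\mathcal E(m)=2t\iint K_p(x-y)f(x)\,dm\,dm+t^2\mathcal Q_p(\nu_\xi)\). The cross term vanishes because \(K_p*1\) is constant and \(f\) has mean zero. Comparing with \cref{cor:positive-fourier-mode} again yields \(\tfrac12\widehat{\rho_\Lambda^p}(\xi)\).

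None of these steps is a real obstacle. The only points needing care are that \(\pm\xi\) are distinct frequencies, and the evenness and reality of \(\widehat{K_p}(\xi)\).
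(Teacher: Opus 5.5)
Your proof is correct and matches the paper's argument: orthogonality of characters gives the zero mass, then \cref{prop:l2-energy-diagonalization} with \(k=\rho_\Lambda^p\) and \(r=\cos(2\pi\xi\cdot x)\), the coefficients \(\widehat f(\pm\xi)=\tfrac12\), and the evenness and reality of the kernel give \(\tfrac12\widehat{\rho_\Lambda^p}(\xi)\). Your opening sentence cites \cref{prop:haar-second-variation}, but the argument you actually give uses \cref{prop:l2-energy-diagonalization}, as the paper does; your extra consistency check via \cref{cor:positive-fourier-mode} is also correct.
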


\begin{proof}
Since \(\xi\neq0\), orthogonality of characters gives
\[
    \int_{T_\Lambda}
    \cos(2\pi\xi\cdot x)\,dm_\Lambda(x)=0.
\]
The only nonzero Fourier coefficients of
\[
    f_\xi(x)=\cos(2\pi\xi\cdot x)
\]
are
\[
    \widehat f_\xi(\xi)
    =
    \widehat f_\xi(-\xi)
    =
    \frac12.
\]
Applying \cref{prop:l2-energy-diagonalization} with \(k=\rho_\Lambda^p\) and \(r=f_\xi\), we obtain
\[
\begin{aligned}
    \mathcal Q_p(\nu_\xi)
    &=
    \frac14\widehat{\rho_\Lambda^p}(\xi)
    +
    \frac14\widehat{\rho_\Lambda^p}(-\xi) \\
    &=
    \frac12\widehat{\rho_\Lambda^p}(\xi),
\end{aligned}
\]
since the distance kernel is real and even.
\end{proof}

To return to the finite formulation \eqref{eq:p-negative-type}, we use the following approximation.

\begin{lemma}[Atomic approximation]
\label{lem:atomic-negative-type-approximation}
Let \((X,\rho)\) be a compact metric space, let \(p>0\), and let \(\nu\) be a finite signed Borel measure on \(X\) such that
\[
    \nu(X)=0.
\]
If
\[
    \mathcal Q_p(\nu)>0,
\]
then there exist points \(x_1,\ldots,x_N\in X\) and real numbers \(c_1,\ldots,c_N\) satisfying
\[
    \sum_{i=1}^N c_i=0
\]
and
\[
    \sum_{i,j=1}^N
    c_ic_j\,\rho(x_i,x_j)^p>0.
\]
\end{lemma}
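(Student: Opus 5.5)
We approximate \(\nu\) weakly by finitely supported zero-mass signed measures and pass the quadratic form to the limit, using that the kernel \(k(x,y)=\rho(x,y)^p\) is continuous on the compact space \(X\times X\).

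Write the Jordan decomposition \(\nu=\nu^+-\nu^-\), where \(\nu^\pm\) are finite positive measures of equal total mass \(M=\nu^+(X)=\nu^-(X)\). We may assume \(M>0\), since otherwise \(\mathcal Q_p(\nu)=0\). Fix \(\eta>0\). By compactness, cover \(X\) by finitely many Borel sets of diameter less than \(\eta\), and refine them into a finite partition \(A_1,\ldots,A_N\) of \(X\) into nonempty Borel sets of diameter less than \(\eta\). Choose a point \(x_i\in A_i\) and set
\[
    c_i=\nu(A_i),
    \qquad
    \nu_\eta=\sum_{i=1}^N c_i\delta_{x_i}.
\]
Then \(\sum_i c_i=\nu(X)=0\). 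By the identity recorded before \cref{prop:fourier-negative-type-witness}, \(\mathcal Q_p(\nu_\eta)=\sum_{i,j}c_ic_j\rho(x_i,x_j)^p\).

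It remains to show \(\mathcal Q_p(\nu_\eta)\to\mathcal Q_p(\nu)\) as \(\eta\to0\). Since \(k\) is uniformly continuous on \(X\times X\), let \(\omega\) be its modulus of continuity. Observe that
\[
    \mathcal Q_p(\nu)-\mathcal Q_p(\nu_\eta)
    =
    \sum_{i,j}\iint_{A_i\times A_j}
    \bigl(k(x,y)-k(x_i,x_j)\bigr)\,d\nu(x)\,d\nu(y).
\]
On \(A_i\times A_j\) the integrand is bounded by \(\omega(2\eta)\). The total variation of \(\nu\otimes\nu\) is at most \((2M)^2\). Hence
\[
    |\mathcal Q_p(\nu)-\mathcal Q_p(\nu_\eta)|\leq 4M^2\,\omega(2\eta)\to0.
\]
In particular, \(\mathcal Q_p(\nu_\eta)>0\) for \(\eta\) small, which gives the required points \(x_i\) and coefficients \(c_i\).

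The only delicate point is that \(\nu\) is a signed measure, so the estimate must be made against the total variation \(|\nu|\) rather than \(\nu\) itself. This is handled by the bound \(|\nu\otimes\nu|\leq|\nu|\otimes|\nu|\), whose total mass is finite, and it is also what makes the weak-convergence argument work. Measurability of the partition is immediate, because the pieces are finite Boolean combinations of open balls.
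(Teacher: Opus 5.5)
Your proposal is correct and follows essentially the same route as the paper. Both arguments partition \(X\) into finitely many Borel pieces of small diameter, put atoms \(c_i=\nu(A_i)\) at chosen points, and use uniform continuity of \(\rho^p\) to bound \(|\mathcal Q_p(\nu)-\mathcal Q_p(\nu_\eta)|\) by \(\omega(2\eta)\,|\nu|(X)^2\); your constant \(4M^2\) is exactly \(|\nu|(X)^2\).
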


\begin{proof}
Set
\[
    K(x,y)=\rho(x,y)^p.
\]
Since \(X\times X\) is compact, \(K\) is uniformly continuous. Let
\[
    \omega_K(s)
    =
    \sup\bigl\{
        |K(x,y)-K(x',y')|:
        \rho(x,x')+\rho(y,y')\leq s
    \bigr\}.
\]
Then \(\omega_K(s)\to0\) as \(s\downarrow0\).

For \(\delta>0\), choose a finite Borel partition
\[
    X=Q_1\sqcup\cdots\sqcup Q_N
\]
with \(\operatorname{diam}(Q_i)\leq\delta\), and choose \(x_i\in Q_i\). Define
\[
    c_i=\nu(Q_i),
    \qquad
    \nu_\delta=\sum_{i=1}^N c_i\delta_{x_i}.
\]
Since the \(Q_i\) form a partition,
\[
    \sum_{i=1}^N c_i=\nu(X)=0.
\]

For \(x\in Q_i\) and \(y\in Q_j\),
\[
    |K(x,y)-K(x_i,x_j)|
    \leq
    \omega_K(2\delta).
\]
Consequently,
\[
\begin{aligned}
    \bigl|
        \mathcal Q_p(\nu)
        -
        \mathcal Q_p(\nu_\delta)
    \bigr|
    &\leq
    \omega_K(2\delta)
    \sum_{i,j=1}^N
    |\nu|(Q_i)|\nu|(Q_j) \\
    &=
    \omega_K(2\delta)\,|\nu|(X)^2.
\end{aligned}
\]
Hence
\[
    \mathcal Q_p(\nu_\delta)
    \longrightarrow
    \mathcal Q_p(\nu)>0.
\]
For sufficiently small \(\delta\), the atomic measure \(\nu_\delta\) therefore has positive quadratic form.
\end{proof}

We now obtain the metric consequence of the cut-locus theorem.

\begin{theorem}[Vanishing generalized roundness]
\label{thm:generalized-roundness-zero}
Let \(\Lambda\subset\mathbb R^d\) be a full-rank lattice with \(d\geq2\). Then
\begin{equation}
    \wp(T_\Lambda)=0
    \qquad\text{and}\qquad
    \operatorname{gr}(T_\Lambda)=0.
    \label{eq:generalized-roundness-zero}
\end{equation}
Equivalently, \(T_\Lambda\) fails to have \(p\)-negative type for every \(p>0\).
\end{theorem}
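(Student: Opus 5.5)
The plan is to chain the results already established: positive Fourier coefficient, then a continuous zero-mass witness, then a finite witness, then the definitions. Fix \(p>0\). Since \(V\) is a full-dimensional polytope, it has at least one facet, so a Voronoi relevant vector \(v\) exists. By \cref{cor:positive-fourier-distance-powers} applied with \(\alpha=p\), there is a sequence \(\xi_j\in2\Lambda^*\subset\Lambda^*\) with \(|\xi_j|\to\infty\) and
\[
    (2\pi|\xi_j|)^2\widehat{\rho_\Lambda^p}(\xi_j)\to\frac{p|v|}{|V|}\int_{F_v}|x|^{p-2}\,dS(x)>0 .
\]
Hence for some large \(j\) we have \(\xi=\xi_j\neq0\) and \(\widehat{\rho_\Lambda^p}(\xi)>0\).

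Next, \cref{prop:fourier-negative-type-witness} turns this coefficient into the signed measure \(\nu_\xi=\cos(2\pi\xi\cdot x)\,m_\Lambda\). This measure has total mass zero and \(\mathcal Q_p(\nu_\xi)=\tfrac12\widehat{\rho_\Lambda^p}(\xi)>0\). Since \(T_\Lambda\) is a compact metric space, \cref{lem:atomic-negative-type-approximation} then gives points \(x_1,\dots,x_N\) and weights with \(\sum c_i=0\) and \(\sum c_ic_j\rho_\Lambda(x_i,x_j)^p>0\). So \(T_\Lambda\) fails \(p\)-negative type.

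Because \(p>0\) was arbitrary, the set \(\{p>0: T_\Lambda \text{ has } p\text{-negative type}\}\) is empty, and the definition gives \(\wp(T_\Lambda)=0\). The Lennard–Tonge–Weston identity \eqref{eq:roundness-negative-type} then gives \(\operatorname{gr}(T_\Lambda)=0\). One could also argue directly: generalized roundness \(q\) is equivalent to \(q\)-negative type, since the roundness inequality is the case \(c_i=\pm1\) and by density and scaling it yields the general rational-weight case. Citing \eqref{eq:roundness-negative-type} is cleaner.

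There is no real obstacle here, since all the analytic work sits in \cref{thm:positive-fourier-tail}. The only points that need a line of justification are:
\begin{itemize}
\item the existence of a Voronoi relevant vector;
\item \(\xi_j\neq0\) for large \(j\), which follows from \(|\xi_j|\to\infty\);
\item the inclusion \(2\Lambda^*\subset\Lambda^*\), which is needed to apply \cref{prop:fourier-negative-type-witness}.
\end{itemize}
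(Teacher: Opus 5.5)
Your proposal is correct and follows the paper's proof step for step: a positive coefficient from \cref{cor:positive-fourier-distance-powers}, the zero-mass witness of \cref{prop:fourier-negative-type-witness}, atomic approximation via \cref{lem:atomic-negative-type-approximation}, and then the Lennard--Tonge--Weston identity to pass from \(\wp\) to \(\operatorname{gr}\). Your three bookkeeping points (existence of a relevant vector, \(\xi_j\neq0\), and \(2\Lambda^*\subset\Lambda^*\)) are exactly what the paper leaves implicit.
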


\begin{proof}
Fix \(p>0\), and choose a Voronoi relevant vector \(v\). By \cref{cor:positive-fourier-distance-powers}, there is a nonzero \(\xi\in\Lambda^*\) such that
\[
    \widehat{\rho_\Lambda^p}(\xi)>0.
\]
The measure from \cref{prop:fourier-negative-type-witness} then satisfies
\[
    \nu_\xi(T_\Lambda)=0,
    \qquad
    \mathcal Q_p(\nu_\xi)>0.
\]
Applying \cref{lem:atomic-negative-type-approximation}, we obtain finitely many points \(x_1,\ldots,x_N\in T_\Lambda\) and real coefficients \(c_1,\ldots,c_N\), with
\[
    \sum_{i=1}^N c_i=0,
\]
such that
\[
    \sum_{i,j=1}^N
    c_ic_j\,\rho_\Lambda(x_i,x_j)^p>0.
\]
Thus \(T_\Lambda\) does not have \(p\)-negative type.

Since \(p>0\) was arbitrary,
\[
    \wp(T_\Lambda)=0.
\]
Equation \eqref{eq:roundness-negative-type} then gives
\[
    \operatorname{gr}(T_\Lambda)=0.
\]
\end{proof}

\begin{remark}[Relation to Venet's cylinder obstruction]
\label{rem:venet-alternative-proof}
The conclusion of \cref{thm:generalized-roundness-zero} also follows from Venet's theorem that every product cylinder has fractional index zero \cite[Theorem~1]{Venet2019}. Choose a shortest nonzero vector \(v\in\Lambda\), and let \(e\perp v\) be a unit vector. For every \(w\in\Lambda\setminus\{0\}\) and \(|t|\leq1/2\),
\[
    2t\,v\cdot w
    \leq
    2|t|\,|v\cdot w|
    \leq
    |v|\,|w|
    \leq
    |w|^2,
\]
so \([-v/2,v/2]\subset V\). At \(v/2\), equality in the Voronoi inequality can occur only for \(w=v\): if \(v\cdot w=|w|^2\), then \(|w|\leq|v|\), and the minimality of \(|v|\) together with equality in Cauchy--Schwarz forces \(w=v\). Thus \(v/2\in\operatorname{relint}F_v\), and similarly \(-v/2\in\operatorname{relint}F_{-v}\).

Let \(\mathcal R\) be the finite set of Voronoi relevant vectors. For every \(w\in\mathcal R\setminus\{v,-v\}\), the corresponding Voronoi inequality is strict on the compact segment \(\{av:|a|\leq1/2\}\). Hence, for some \(\varepsilon>0\),
\[
    av+se\in V
    \qquad
    \text{whenever}
    \qquad
    |a|\leq\frac12,
    \quad
    |s|<2\varepsilon.
\]
It follows that
\[
    (\mathbb R/\mathbb Z)\times(-\varepsilon,\varepsilon)
    \longrightarrow T_\Lambda,
    \qquad
    (t,s)\longmapsto tv+se\pmod\Lambda,
\]
is an isometric embedding when the circle has circumference \(|v|\). Indeed, after reducing the difference of two circle parameters to \(a\in[-1/2,1/2]\), the corresponding vector \(av+(s_1-s_2)e\) lies in \(V\), and its quotient distance is
\[
    \sqrt{a^2|v|^2+(s_1-s_2)^2}.
\]
Negative type passes to metric subspaces, and rescaling does not change supremal negative type. Venet's cylinder theorem therefore gives \(\wp(T_\Lambda)=0\).
\end{remark}

\subsection{Instability of Haar measure}
\label{subsec:haar-instability}

Haar instability now follows immediately from the variation formula in \cref{subsec:energy-haar}.

\begin{corollary}[Haar instability]
\label{cor:haar-instability-distance-powers}
Let \(\Lambda\subset\mathbb R^d\) be a full-rank lattice with \(d\geq2\), and let \(p>0\). For every \(\varepsilon>0\), there exists a smooth probability density \(r\) on \(T_\Lambda\) such that
\[
    \|r-1\|_{L^\infty(m_\Lambda)}<\varepsilon
\]
and
\[
    I_p(r\,m_\Lambda)>I_p(m_\Lambda).
\]
In particular, Haar measure is not a local maximizer of \(I_p\), even among smooth probability densities in the \(L^\infty\) topology.
\end{corollary}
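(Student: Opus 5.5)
The plan is to combine \cref{cor:positive-fourier-distance-powers} with \cref{cor:positive-fourier-mode}, choosing the amplitude of the trigonometric perturbation small enough to meet the \(L^\infty\) constraint. First, fix any Voronoi relevant vector \(v\). Such vectors exist because \(V\) is a full-dimensional polytope, so it has facets, and each facet has the form \(F_v\). By \cref{cor:positive-fourier-distance-powers} applied with \(\alpha=p\), there is a frequency \(\xi_j\in2\Lambda^*\subset\Lambda^*\) with \(|\xi_j|\) large, hence \(\xi_j\neq0\), and \(\widehat{\rho_\Lambda^p}(\xi_j)>0\). Fix one such \(\xi=\xi_j\).

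Next, define
\[
    r(x)=1+t\cos(2\pi\xi\cdot x),
    \qquad
    0<t<\min\{\varepsilon,1\}.
\]
The function \(r\) is smooth on \(T_\Lambda\), since \(\xi\in\Lambda^*\) makes it \(\Lambda\)-periodic. It is nonnegative because \(t\leq1\). It has mean one by orthogonality of characters, since \(\xi\neq0\). Finally, \(\|r-1\|_{L^\infty}\leq t<\varepsilon\).

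The kernel \(k=\rho_\Lambda^p\) is real, even, continuous and bounded, so it lies in \(L^1(m_\Lambda)\). Therefore \cref{cor:positive-fourier-mode}, or equivalently \cref{prop:haar-second-variation} with \(f=\cos(2\pi\xi\cdot x)\), gives
\[
    I_p(r\,m_\Lambda)-I_p(m_\Lambda)=\frac{t^2}{2}\,\widehat{\rho_\Lambda^p}(\xi)>0.
\]
The final assertion follows because \(t\) can be taken arbitrarily small, which produces smooth densities arbitrarily close to \(1\) in \(L^\infty\) with strictly larger energy.

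No step is a real obstacle, since all the analytic work was done in \cref{thm:positive-fourier-tail}. The only points to check are that the chosen frequency is nonzero and lies in \(\Lambda^*\), and that \(t\leq1\) keeps the density nonnegative.
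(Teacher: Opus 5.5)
Your proof is correct and matches the paper's argument: the paper likewise takes a nonzero \(\xi\in\Lambda^*\) with \(\widehat{\rho_\Lambda^p}(\xi)>0\) from \cref{cor:positive-fourier-distance-powers} and then applies the perturbation \(1+t\cos(2\pi\xi\cdot x)\) of \cref{cor:positive-fourier-mode}. Your explicit choice \(0<t<\min\{\varepsilon,1\}\), together with your remark that Voronoi relevant vectors exist, supplies details that the paper leaves implicit.
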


\begin{proof}
By \cref{cor:positive-fourier-distance-powers}, there is a nonzero \(\xi\in\Lambda^*\) with
\[
    \widehat{\rho_\Lambda^p}(\xi)>0.
\]
The conclusion follows from \cref{cor:positive-fourier-mode}.
\end{proof}

\section{Rectangular Tori}
\label{sec:rectangular-tori}

The general argument of the preceding sections does not determine the global maximizers of the distance energy. For orthogonal product tori in dimension \(d\geq2\), however, the full problem can be solved for every \(\alpha\geq1\). The proof combines separate convexity with a family of positive-definite product kernels. The maximization of positive powers of distance has a classical history going back to Björck \cite{Bjorck1956}; here the product geometry leads to an exact classification of the maximizing measures.

Let
\[
    T_{\mathbf L}^d
    =
    \prod_{i=1}^d \mathbb R/(2L_i\mathbb Z),
    \qquad
    L_i>0,
\]
and write
\[
    d_i(x_i,y_i)
    =
    \min_{k\in\mathbb Z}|x_i-y_i-2kL_i|
    \in[0,L_i]
\]
for the distance in the \(i\)-th coordinate. Then
\[
    \rho(x,y)^2
    =
    \sum_{i=1}^d d_i(x_i,y_i)^2.
\]
Set
\[
    D^2=\sum_{i=1}^d L_i^2,
\]
so that \(D=\operatorname{diam}(T_{\mathbf L}^d)\).

We shall use the finite subgroup
\[
    H_2
    =
    \prod_{i=1}^d\{0,L_i\}
    \subset T_{\mathbf L}^d.
\]
For \(a\in T_{\mathbf L}^d\), let
\[
    \mu_a
    =
    2^{-d}
    \sum_{\varepsilon\in\{0,1\}^d}
    \delta_{a+(\varepsilon_1L_1,\ldots,\varepsilon_dL_d)}
\]
be normalized counting measure on the translate \(a+H_2\).

The three regimes are as follows.

\begin{theorem}[Exact maximizers on rectangular tori]
\label{thm:rectangular-phase-diagram}
Let \(d\geq2\).

\begin{enumerate}
\item[\textup{(i)}]
If \(1\leq\alpha<2\), then
\begin{equation}
    \max_{\mu\in\mathcal P(T_{\mathbf L}^d)}
    I_\alpha(\mu)
    =
    2^{-d}
    \sum_{\varepsilon\in\{0,1\}^d}
    \left(
        \sum_{i=1}^d\varepsilon_iL_i^2
    \right)^{\alpha/2}.
    \label{eq:rectangular-subquadratic-maximum}
\end{equation}
The maximizers are precisely the measures \(\mu_a\), \(a\in T_{\mathbf L}^d\).

\item[\textup{(ii)}]
For \(\alpha=2\),
\begin{equation}
    \max_{\mu\in\mathcal P(T_{\mathbf L}^d)}
    I_2(\mu)
    =
    \frac{D^2}{2}.
    \label{eq:rectangular-quadratic-maximum}
\end{equation}
A probability measure is maximizing if and only if it is supported on a translate of \(H_2\) and every coordinate marginal is equally distributed between the two corresponding antipodal coordinate values.

\item[\textup{(iii)}]
If \(\alpha>2\), then
\begin{equation}
    \max_{\mu\in\mathcal P(T_{\mathbf L}^d)}
    I_\alpha(\mu)
    =
    \frac{D^\alpha}{2}.
    \label{eq:rectangular-superquadratic-maximum}
\end{equation}
The maximizers are precisely
\[
    \frac12
    \bigl(
        \delta_a+
        \delta_{a+(L_1,\ldots,L_d)}
    \bigr),
    \qquad
    a\in T_{\mathbf L}^d.
\]
\end{enumerate}
\end{theorem}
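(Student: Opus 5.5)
The proof will follow the route sketched in the introduction: majorize the distance power by a multilinear corner expression, write that majorant through positive-definite product kernels, and read off the equality cases from the Fourier signs.

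\emph{Corner majorization.} Write \(s_i=d_i(x_i,y_i)^2/L_i^2\in[0,1]\) and \(G(s)=\bigl(\sum_i L_i^2 s_i\bigr)^{\alpha/2}\). For \(\alpha\geq2\), \(G\) is convex in each \(s_i\) separately. Hence \(G\) lies below its multilinear interpolation on the corners of \([0,1]^d\):
\[
    G(s)\leq\sum_{\varepsilon\in\{0,1\}^d}G(\varepsilon)\prod_i s_i^{\varepsilon_i}(1-s_i)^{1-\varepsilon_i}.
\]
For \(1\leq\alpha<2\) I would instead use the variables \(u_i=d_i/L_i\). In these variables \(\bigl(\sum_i L_i^2u_i^2\bigr)^{\alpha/2}\) is convex in each \(u_i\) when \(\alpha\geq1\), since it is a norm raised to a power at least one. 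This gives the multilinear corner majorant in \(u_i\).

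\emph{Product kernels.} The one-dimensional functions \(u_i=d_i/L_i\) and \(s_i=u_i^2\) are the building blocks. On \(\mathbb R/2L_i\mathbb Z\), the function \(d_i/L_i\) has Fourier coefficients \(\le0\) at nonzero frequencies; they equal \(-c/k^2\) for odd \(k\) and \(0\) for even \(k\). Consequently \(1-u_i\) is positive definite, and \(u_i\) has the form constant minus positive definite. I would expand each corner monomial \(\prod u_i^{\varepsilon_i}(1-u_i)^{1-\varepsilon_i}\) and regroup so that \(\mathbb E[\text{majorant}]\) equals a constant minus \(\sum_\xi c_\xi|\widehat\mu(\xi)|^2\) with \(c_\xi\geq0\). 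The coefficients should only involve frequencies that are odd in some coordinates. This is where I would use the Bernstein representation: the corner values \(G(\varepsilon)\), viewed as a function on the Boolean cube, must have the right sign for all mixed differences. Concretely, I would need the inclusion–exclusion differences \(\sum_{\varepsilon\le\eta}(-1)^{|\eta-\varepsilon|}G(\varepsilon)\) to alternate in sign appropriately. For \(\alpha\le2\), \(t\mapsto t^{\alpha/2}\) is a Bernstein function, i.e.\ \(\int(1-e^{-\lambda t})\,d\sigma(\lambda)\). Then \(G(\varepsilon)\) is a superposition of \(1-\prod_i e^{-\lambda L_i^2\varepsilon_i}\), and each product term factorizes into one-dimensional positive-definite pieces with nonnegative weights. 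The value at the uniform measure on \(a+H_2\) equals the right side of \eqref{eq:rectangular-subquadratic-maximum}, because \(\widehat{\mu_a}\) vanishes at every frequency odd in some coordinate. Combining these gives the upper bound. Equality forces \(\widehat\mu(\xi)=0\) at all such frequencies, as in \cref{cor:positive-definite-energy}, together with equality in the convexity step, so that almost surely each \(d_i\in\{0,L_i\}\).

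\emph{Cases.} For (ii), \(\alpha=2\): \(I_2=\sum_i\mathbb E[d_i^2]\) decouples. Each one-dimensional term is at most \(L_i^2/2\), with equality exactly for a two-point balanced marginal on antipodes. This gives the stated balanced-coupling class. For (iii), I would use the \(s_i\) majorant. Since \(G\) is superadditive-convex for \(\alpha>2\), the corner weights concentrate on \(\varepsilon=0\) and \(\varepsilon=\mathbf 1\) once the per-coordinate constraint \(\mathbb E[s_i]\le1/2\) from (ii) is imposed. This gives \(\mathbb E G\le D^\alpha\,\mathbb P(\text{all }s_i=1)\le D^\alpha/2\). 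Equality then forces diametral pairs.

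\emph{Main obstacle.} For \(1<\alpha<2\), excluding non-\(H_2\) supports is straightforward: strict convexity plus the Fourier vanishing yields marginals supported on antipodal pairs, and the vanishing of mixed odd coefficients forces independence, hence \(\mu_a\). The hardest step is the endpoint \(\alpha=1\). There the norm is not strictly convex along coordinate directions, so equality in the convexity step allows \(d_i\) to take intermediate values whenever the other coordinates vanish. I would first try to show that the Fourier constraints make the marginals balanced on two antipodes. Then I would run a rigidity argument: any mass off \(a+H_2\) in a one-dimensional direction produces a strict loss in a mixed corner term with \(|\varepsilon|\geq2\). That strict loss is available because \(d\geq2\) and \(\sqrt{L_i^2+L_j^2}<L_i+L_j\).
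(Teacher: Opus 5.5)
Your treatment of $1<\alpha<2$ and of $\alpha=2$ is essentially the paper's argument. That means the corner majorant in $u_i=d_i/L_i$, the Bernstein factorization into the positive-definite products $\prod_i(1-\beta_i d_i)$, spectral vanishing at frequencies whose entries are zero or odd, strict convexity, and Walsh inversion. There is, however, a genuine gap at the endpoint $\alpha=1$.

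\textbf{The gap at $\alpha=1$.} Your first step, showing from the Fourier constraints that the marginals are balanced on two antipodes, cannot work. Haar measure satisfies every one of those spectral conditions. By themselves they give only $\widehat\lambda_j(n)=0$ for odd $n$, i.e.\ invariance of each coordinate marginal $\lambda_j$ under $t\mapsto t+L_j$. Two-point support of the marginals is essentially what has to be proved.

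Your second step (``mass off $a+H_2$ produces a strict loss in a mixed corner term'') is also incomplete. Nothing in it shows that a pair with one interior coordinate distance and another nonzero coordinate distance actually lies in $\operatorname{supp}\mu\times\operatorname{supp}\mu$. The missing idea is a three-point argument that uses exactly the half-period invariance:
\begin{enumerate}
\item Suppose $x^0,y\in\operatorname{supp}\mu$ with $d_i(x^0_i,y_i)=a\in(0,L_i)$. By the equality set at $\alpha=1$, all other coordinate distances of this pair vanish. You describe that equality set correctly but must prove it, e.g.\ by writing $M_1(\mathbf a)=\mathbb E|Z|$ for independent two-point $Z_i$ and analysing equality in $|\mathbb EZ|\le\mathbb E|Z|$.
\item Fix $j\neq i$. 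Invariance of $\lambda_j$ together with $\operatorname{supp}\lambda_j=\pi_j(\operatorname{supp}\mu)$ yields $z\in\operatorname{supp}\mu$ with $d_j(x^0_j,z_j)=L_j$.
\item The equality lemma applied to $(x^0,z)$ forces $d_i(x^0_i,z_i)\in\{0,L_i\}$.
\item The pair $(y,z)$ then has $d_j(y_j,z_j)=L_j$ and $d_i(y_i,z_i)\in\{a,L_i-a\}\subset(0,L_i)$, which the lemma forbids.
\end{enumerate}
This contradiction gives $\operatorname{supp}\mu\subset x^0+H_2$, after which the Walsh argument applies as for $1<\alpha<2$.

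\textbf{A false inequality in (iii).} The claim $\mathbb E G\le D^\alpha\,\mathbb P(\text{all }s_i=1)$ fails for general $\mu$, whether read literally or for the corner-weighted majorant. Take $\mu_a$ with $d=2$, $L_1=L_2=1$, $\alpha=3$: the left side is $(2+2^{3/2})/4$ and the right side is $2^{3/2}/4$. What is true is a bound on the corner-weighted sum, obtained from $G(\varepsilon)\le D^{\alpha-2}\sum_i\varepsilon_iL_i^2$ together with $\mathbb Es_i\le\frac12$. But applying that inequality directly to the distance, $\rho^\alpha\le D^{\alpha-2}\rho^2$, gives (iii) at once from (ii). Equality then forces $\rho\in\{0,D\}$ on $\operatorname{supp}\mu$, hence a diametral pair with equal weights. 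This is the paper's argument, and the $s_i$-majorant is unnecessary.
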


\subsection{Corner majorization}
\label{subsec:corner-majorization}

We begin with the range \(1\leq\alpha<2\). For
\[
    \mathbf a=(a_1,\ldots,a_d)
    \in
    \prod_{i=1}^d[0,L_i],
\]
set
\[
    F_\alpha(\mathbf a)
    =
    \left(
        \sum_{i=1}^d a_i^2
    \right)^{\alpha/2}.
\]
For fixed values of all variables except \(a_i\), differentiation away from the origin gives
\[
\begin{aligned}
    \frac{\partial^2 F_\alpha}{\partial a_i^2}
    =
    \alpha
    \left(
        \sum_{j=1}^d a_j^2
    \right)^{\alpha/2-2}
    \left(
        \sum_{j\neq i}a_j^2
        +
        (\alpha-1)a_i^2
    \right)
    \geq0.
\end{aligned}
\]
The limiting one-variable function at the origin is \(a_i^\alpha\), which is convex as well. Thus \(F_\alpha\) is separately convex.

Write
\[
    u_i=\frac{a_i}{L_i}.
\]
Successive interpolation between the endpoints of each coordinate gives
\begin{equation}
    F_\alpha(\mathbf a)
    \leq
    M_\alpha(\mathbf a),
    \label{eq:rectangular-corner-majorant}
\end{equation}
where
\[
\begin{aligned}
    M_\alpha(\mathbf a)
    =
    \sum_{\varepsilon\in\{0,1\}^d}
    F_\alpha(\varepsilon_1L_1,\ldots,\varepsilon_dL_d)
    \prod_{i=1}^d
    u_i^{\varepsilon_i}(1-u_i)^{1-\varepsilon_i}.
\end{aligned}
\]
Thus \(M_\alpha\) is the multilinear interpolant of the values of \(F_\alpha\) at the \(2^d\) corners.

The usefulness of this majorant comes from a positive-definite factorization. Put
\[
    p=\frac{\alpha}{2}\in\left[\frac12,1\right).
\]
The Bernstein representation
\[
    s^p
    =
    c_p
    \int_0^\infty
    \bigl(1-e^{-ts}\bigr)t^{-p-1}\,dt,
    \qquad
    c_p=\frac{p}{\Gamma(1-p)},
\]
is standard; see, for example, \cite{SchillingSongVondracek2012}. At a corner,
\[
\begin{aligned}
    F_\alpha(\varepsilon_1L_1,\ldots,\varepsilon_dL_d)
    =
    c_p
    \int_0^\infty
    \left[
        1-
        \prod_{i=1}^d e^{-t\varepsilon_iL_i^2}
    \right]
    t^{-p-1}\,dt.
\end{aligned}
\]
Since multilinear interpolation commutes with the integral, we obtain
\begin{equation}
    M_\alpha(\mathbf a)
    =
    c_p
    \int_0^\infty
    \left[
        1-
        \prod_{i=1}^d
        \bigl(1-\beta_i(t)a_i\bigr)
    \right]
    t^{-p-1}\,dt,
    \label{eq:bernstein-corner-interpolant}
\end{equation}
where
\[
    \beta_i(t)
    =
    \frac{1-e^{-tL_i^2}}{L_i}.
\]
In particular,
\[
    0<\beta_i(t)L_i<1.
\]

We next examine the Fourier coefficients of the one-dimensional factors. On \(\mathbb R/(2L_i\mathbb Z)\), index the characters by \(n\in\mathbb Z\):
\[
    x_i\longmapsto
    e^{\pi i n x_i/L_i}.
\]
For the circle distance
\[
    d_i(x_i,0)\in[0,L_i],
\]
a direct integration gives
\[
    \widehat d_i(0)=\frac{L_i}{2}
\]
and, for \(n\neq0\),
\begin{equation}
    \widehat d_i(n)
    =
    \begin{cases}
        0,
        & n\ \text{even},\\[1mm]
        -\dfrac{2L_i}{\pi^2n^2},
        & n\ \text{odd}.
    \end{cases}
    \label{eq:circle-distance-fourier}
\end{equation}

For \(t>0\), define
\[
    q_{i,t}(x_i)
    =
    1-\beta_i(t)d_i(x_i,0).
\]
Then
\[
    \widehat q_{i,t}(0)
    =
    1-\frac{\beta_i(t)L_i}{2}
    =
    \frac{1+e^{-tL_i^2}}{2}>0,
\]
while
\[
    \widehat q_{i,t}(n)
    =
    \begin{cases}
        0,
        & n\neq0\ \text{even},\\[1mm]
        \dfrac{2\beta_i(t)L_i}{\pi^2n^2},
        & n\ \text{odd}.
    \end{cases}
\]
Hence every Fourier coefficient of \(q_{i,t}\) is nonnegative.

Set
\[
    Q_t(x)
    =
    \prod_{i=1}^d q_{i,t}(x_i).
\]
Its Fourier coefficients are products of the one-dimensional coefficients and are therefore nonnegative. Moreover,
\begin{equation}
    \widehat Q_t(\mathbf n)>0
    \label{eq:rectangular-positive-frequency-set}
\end{equation}
whenever every \(n_i\) is either zero or odd and \(\mathbf n\neq0\).

Here and below we identify the dual lattice with \(\mathbb Z^d\) through
\[
    \mathbf n
    \longleftrightarrow
    \left(
        \frac{n_1}{2L_1},\ldots,
        \frac{n_d}{2L_d}
    \right).
\]
Since the coefficients in \eqref{eq:circle-distance-fourier} are absolutely summable, so are the Fourier coefficients of \(Q_t\). Thus \cref{prop:measure-energy-diagonalization} applies to every probability measure.

Let
\[
    \mathbf d(x,y)
    =
    \bigl(
        d_1(x_1,y_1),\ldots,d_d(x_d,y_d)
    \bigr).
\]
Using \eqref{eq:rectangular-corner-majorant} and then \eqref{eq:bernstein-corner-interpolant}, we obtain
\[
\begin{aligned}
    I_\alpha(\mu)
    &\leq
    \iint
    M_\alpha(\mathbf d(x,y))\,d\mu(x)\,d\mu(y)\\
    &=
    c_p
    \int_0^\infty
    \left[
        1-
        \iint Q_t(x-y)\,d\mu(x)\,d\mu(y)
    \right]
    t^{-p-1}\,dt.
\end{aligned}
\]
Since all Fourier coefficients of \(Q_t\) are nonnegative, \cref{cor:positive-definite-energy} gives
\[
    \iint Q_t(x-y)\,d\mu(x)\,d\mu(y)
    \geq
    \widehat Q_t(0).
\]
Consequently,
\begin{equation}
\begin{aligned}
    I_\alpha(\mu)
    &\leq
    c_p
    \int_0^\infty
    \bigl[1-\widehat Q_t(0)\bigr]
    t^{-p-1}\,dt\\
    &=
    2^{-d}
    \sum_{\varepsilon\in\{0,1\}^d}
    \left(
        \sum_{i=1}^d\varepsilon_iL_i^2
    \right)^{\alpha/2}.
\end{aligned}
    \label{eq:rectangular-global-upper-bound}
\end{equation}
Indeed, the last expression is simply the value of the multilinear function \(M_\alpha\) at
\[
    \left(
        \frac{L_1}{2},\ldots,\frac{L_d}{2}
    \right),
\]
and a multilinear function takes at the center of a box the average of its corner values.

The measure \(\mu_a\) attains \eqref{eq:rectangular-global-upper-bound}. If \(X,Y\) are independent with law \(\mu_a\), then for each \(i\),
\[
    d_i(X_i,Y_i)
    =
    \begin{cases}
        0,&\text{with probability }1/2,\\
        L_i,&\text{with probability }1/2,
    \end{cases}
\]
and these \(d\) choices are independent. This proves the value in \eqref{eq:rectangular-subquadratic-maximum}.

For later use, it is helpful to keep both nonnegative gaps in the preceding argument explicit. Write
\[
    B_\alpha
    =
    2^{-d}
    \sum_{\varepsilon\in\{0,1\}^d}
    \left(\sum_{i=1}^d \varepsilon_iL_i^2\right)^{\alpha/2},
\]
and set
\[
    \Delta_{\mathrm{corner}}(\mu)
    =
    \iint
    \bigl(M_\alpha(\mathbf d(x,y))-F_\alpha(\mathbf d(x,y))\bigr)
    \,d\mu(x)\,d\mu(y)
    \geq0
\]
and
\[
    \Delta_{\mathrm{spec}}(\mu)
    =
    c_p\int_0^\infty
    \left(
        \iint Q_t(x-y)\,d\mu(x)\,d\mu(y)
        -\widehat Q_t(0)
    \right)
    t^{-p-1}\,dt
    \geq0.
\]
The inequalities leading to \eqref{eq:rectangular-global-upper-bound} therefore give the exact deficit decomposition
\begin{equation}
    B_\alpha-I_\alpha(\mu)
    =
    \Delta_{\mathrm{corner}}(\mu)
    +
    \Delta_{\mathrm{spec}}(\mu).
    \label{eq:rectangular-deficit-decomposition}
\end{equation}
Consequently, an extremizer makes both deficits vanish.

The vanishing of the corner deficit gives the first equality condition. If \(1\leq\alpha<2\) and \(\mu\) attains \eqref{eq:rectangular-global-upper-bound}, then
\begin{equation}
    F_\alpha(\mathbf d(x,y))
    =
    M_\alpha(\mathbf d(x,y))
    \qquad
    \text{for all }x,y\in\operatorname{supp}\mu.
    \label{eq:corner-equality-on-support}
\end{equation}
Indeed, the difference in \eqref{eq:rectangular-corner-majorant} is continuous and nonnegative, and its integral against \(\mu\otimes\mu\) vanishes.

The vanishing of the spectral deficit gives the Fourier equality condition. By \cref{prop:measure-energy-diagonalization}, for every \(t>0\),
\[
    \iint Q_t(x-y)\,d\mu(x)\,d\mu(y)-\widehat Q_t(0)
    =
    \sum_{\mathbf n\neq0}
    \widehat Q_t(\mathbf n)
    |\widehat\mu(\mathbf n)|^2
    \geq0.
\]
Hence \(\Delta_{\mathrm{spec}}(\mu)=0\) implies that the series on the right is zero for almost every \(t>0\). For every \(\mathbf n=(n_1,\ldots,n_d)\neq0\) such that each \(n_i\) is either zero or odd, choose one \(t_0>0\) in this full-measure set. Every term in the series is nonnegative, and \eqref{eq:rectangular-positive-frequency-set} gives \(\widehat Q_{t_0}(\mathbf n)>0\). Therefore
\begin{equation}
    \widehat\mu(\mathbf n)=0.
    \label{eq:rectangular-fourier-equality}
\end{equation}
This proves the claimed Fourier equality condition.

\subsection{The subquadratic regime}
\label{subsec:rectangular-subquadratic}

We first treat \(1<\alpha<2\). In this range the separate convexity used above is strict. More precisely, for fixed values of the other coordinates,
\[
    s\longmapsto
    \left(
        s^2+\sum_{j\neq i}a_j^2
    \right)^{\alpha/2}
\]
is strictly convex on \([0,L_i]\). Hence equality in the corner interpolation is possible only at a corner. From \eqref{eq:corner-equality-on-support},
\begin{equation}
    d_i(x_i,y_i)\in\{0,L_i\}
    \qquad
    \text{for all }
    x,y\in\operatorname{supp}\mu
    \text{ and all }i.
    \label{eq:rectangular-endpoint-distances}
\end{equation}

Fix \(x^0\in\operatorname{supp}\mu\). It follows that
\[
    \operatorname{supp}\mu
    \subset
    x^0+H_2.
\]
Write
\[
    \mu
    =
    \sum_{\varepsilon\in\{0,1\}^d}
    w_\varepsilon\,
    \delta_{x^0+(\varepsilon_1L_1,\ldots,\varepsilon_dL_d)}.
\]
For
\[
    \chi\in\{0,1\}^d\setminus\{0\},
\]
apply \eqref{eq:rectangular-fourier-equality} to the frequency \(\mathbf n=\chi\). Up to a nonzero phase depending only on \(x^0\), this gives
\[
    \sum_{\varepsilon\in\{0,1\}^d}
    (-1)^{\chi\cdot\varepsilon}w_\varepsilon
    =
    0.
\]
Together with
\[
    \sum_\varepsilon w_\varepsilon=1,
\]
the Fourier inversion formula on the group \((\mathbb Z/2\mathbb Z)^d\) yields
\[
    w_\varepsilon=2^{-d}
\]
for every \(\varepsilon\). Thus \(\mu=\mu_{x^0}\).

At the endpoint \(\alpha=1\), the coordinatewise convexity used above is not always strict, so the equality set in \eqref{eq:rectangular-corner-majorant} must be identified more carefully.

\begin{lemma}[Equality in the corner interpolation at \(\alpha=1\)]
\label{lem:alpha-one-corner-equality}
Let
\[
    F_1(\mathbf a)=|\mathbf a|
\]
on \(\prod_i[0,L_i]\), and let \(M_1\) be its multilinear corner interpolant. Then
\[
    F_1(\mathbf a)=M_1(\mathbf a)
\]
if and only if one of the following holds:
\begin{enumerate}
\item[\textup{(a)}]
every \(a_i\) belongs to \(\{0,L_i\}\);

\item[\textup{(b)}]
there is exactly one index \(i\) for which \(a_i\in(0,L_i)\), and \(a_j=0\) for every \(j\neq i\).
\end{enumerate}
\end{lemma}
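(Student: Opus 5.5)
Both directions will use the successive one-variable interpolation that produced \eqref{eq:rectangular-corner-majorant}, tracking where that chain can be tight.

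\emph{Sufficiency.} In case (a) the point is a corner, and $M_1$ agrees with $F_1$ at every corner. In case (b) both functions depend only on $a_i$: $F_1(\mathbf a)=a_i$, while $M_1$ restricted to the edge $\{a_j=0,\ j\ne i\}$ is the linear interpolation between $F_1(0)=0$ and $F_1(L_ie_i)=L_i$, which is also $a_i$.

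\emph{Necessity.} I would write the chain as $F_1\le M^{(1)}\le\cdots\le M^{(d)}=M_1$. Here $M^{(k)}$ is obtained by linearly interpolating the first $k$ coordinates between their endpoints. Equality $F_1(\mathbf a)=M_1(\mathbf a)$ forces equality at every stage. The key one-variable fact concerns, for $c\ge0$, the function $s\mapsto\sqrt{s^2+c^2}$ on $[0,L]$. It is strictly convex when $c>0$ and linear when $c=0$. So its linear interpolant at an interior point $s\in(0,L)$ equals the function only if $c=0$.

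The step I expect to be the main obstacle is that the later stages act on interpolants, which are no longer $F_1$ itself. I would avoid this by ordering the coordinates suitably, or better by a symmetric argument. Suppose two indices $i\neq j$ have $a_i\in(0,L_i)$ and $a_j\in(0,L_j)$, or one index $i$ has $a_i\in(0,L_i)$ and some $j$ has $a_j>0$. Interpolate coordinate $i$ first, with the other coordinates frozen. The first stage then already has strict inequality, because the frozen orthogonal part $c^2=\sum_{k\ne i}a_k^2$ is positive.

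It remains to check that the later stages preserve strictness. Each later stage replaces a function $G$ by a convex combination of its values at the endpoints of one coordinate, and $G\le$ that combination. Since all the intermediate functions (the partial interpolants of a separately convex function) remain separately convex, this gives $M^{(k)}\ge M^{(k-1)}$ pointwise at every point, so strict inequality at stage one persists to $M_1$. To be careful, I would verify separate convexity of the partial interpolants as a routine observation: a convex combination, with nonnegative multilinear weights in the already interpolated variables, of functions convex in each remaining variable is again convex in each remaining variable.

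\emph{Conclusion.} Equality therefore fails in both configurations above. What remains is the following: either no coordinate lies in an open interval, which is case (a); or exactly one coordinate $a_i$ lies in $(0,L_i)$ and all other $a_j=0$, which is case (b).
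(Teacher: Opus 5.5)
Your proof is correct, but it takes a different route from the paper's. The paper represents the interpolant probabilistically. Let $Z$ have independent coordinates $Z_i\in\{0,L_i\}$ with $\mathbb P(Z_i=L_i)=a_i/L_i$. Then $\mathbb E Z=\mathbf a$ and $M_1(\mathbf a)=\mathbb E|Z|$, so the question becomes the equality case of $|\mathbb E Z|\le\mathbb E|Z|$. Equality holds exactly when all nonzero atoms of $Z$ lie on one ray. An interior $a_i$ together with some $a_j>0$ gives two non-collinear atoms: fix $Z_j=L_j$ and toggle $Z_i$ between $0$ and $L_i$.

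You instead keep the iterated one-variable interpolation. You put the interior coordinate first, so strict convexity of $s\mapsto\sqrt{s^2+c^2}$ with $c>0$ makes the first step strict. You then carry the strictness through the remaining steps using separate convexity of the partial interpolants: their weights are nonnegative and depend only on the already-interpolated variables, not on the next one.

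What each route buys:
\begin{itemize}
\item The paper's argument is order-free and reads off the equality set geometrically in one line. It uses joint convexity and the equality case of the triangle inequality for the norm.
\item Yours needs only separate convexity and a one-dimensional strictness check. The same bookkeeping also rigorously justifies the paper's brief claim, in the $1<\alpha<2$ regime, that equality in the corner interpolation forces a corner.
\end{itemize}

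Two small points should be stated explicitly:
\begin{itemize}
\item The fully interpolated function does not depend on the order in which coordinates are interpolated, since it is the unique multilinear function with the prescribed corner values. This is what allows you to put index $i$ first and still conclude $M^{(d)}=M_1$.
\item Your first configuration, two interior coordinates, is already contained in the second, so the case split reduces to one condition: some $a_i\in(0,L_i)$ and some $a_j>0$ with $j\neq i$.
\end{itemize}
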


\begin{proof}
Let \(Z=(Z_1,\ldots,Z_d)\) have independent coordinates with
\[
    Z_i
    =
    \begin{cases}
        L_i,
        &\text{with probability }a_i/L_i,\\
        0,
        &\text{with probability }1-a_i/L_i.
    \end{cases}
\]
Then
\[
    \mathbb EZ=\mathbf a,
    \qquad
    M_1(\mathbf a)=\mathbb E|Z|.
\]
Thus
\[
    |\mathbf a|
    =
    |\mathbb EZ|
    \leq
    \mathbb E|Z|.
\]

Suppose first that \(\mathbb EZ\neq0\), and put
\[
    u=\frac{\mathbb EZ}{|\mathbb EZ|}.
\]
Then
\[
    |\mathbb EZ|
    =
    \mathbb E\langle u,Z\rangle
    \leq
    \mathbb E|Z|.
\]
Equality holds precisely when every nonzero value of \(Z\) having positive probability lies on the ray \(\mathbb R_+u\). If \(\mathbb EZ=0\), then \(\mathbf a=0\), which is already a corner.

Conditions \textup{(a)} and \textup{(b)} plainly give equality. Conversely, suppose that some \(a_i\in(0,L_i)\) and that \(a_j>0\) for some \(j\neq i\). By fixing \(Z_j=L_j\) and all other coordinates at values of positive probability, the two possibilities \(Z_i=0\) and \(Z_i=L_i\) produce two nonzero values of \(Z\) that are not collinear. Equality is therefore impossible. This exhausts all cases.
\end{proof}

The exceptional one-dimensional directions in \cref{lem:alpha-one-corner-equality} disappear once two support pairs are compared.

\begin{lemma}[Endpoint rigidity]
\label{lem:alpha-one-two-pair-rigidity}
Assume \(d\geq2\), and let \(\mu\) maximize \(I_1\) on \(T_{\mathbf L}^d\). Then
\[
    d_i(x_i,y_i)\in\{0,L_i\}
\]
for every \(x,y\in\operatorname{supp}\mu\) and every \(i\).
\end{lemma}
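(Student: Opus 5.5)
The plan is to combine the two equality conditions already established for extremizers in the range \(1\leq\alpha<2\), both of which hold at \(\alpha=1\) (there \(p=\tfrac12\)). The first is the corner equality \eqref{eq:corner-equality-on-support}, which \cref{lem:alpha-one-corner-equality} makes explicit. The second is the Fourier vanishing \eqref{eq:rectangular-fourier-equality}. By \cref{lem:alpha-one-corner-equality}, for every pair \(x,y\in\operatorname{supp}\mu\) the distance vector \(\mathbf d(x,y)\) is of one of two kinds:
\begin{enumerate}
\item[\textup{(a)}] a corner, meaning every \(d_k(x_k,y_k)\in\{0,L_k\}\);
\item[\textup{(b)}] a single-axis vector, meaning exactly one coordinate lies in the open interval \((0,L_k)\) and all other coordinates vanish.
\end{enumerate}
So it suffices to rule out type (b) on \(\operatorname{supp}\mu\times\operatorname{supp}\mu\). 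This uses the hypothesis \(d\geq2\) together with a third support point.

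Suppose, for contradiction, that \(x,y\in\operatorname{supp}\mu\) satisfy \(d_i(x_i,y_i)=a\in(0,L_i)\) and \(x_k=y_k\) for all \(k\neq i\). Since \(d\geq2\), fix an index \(j\neq i\). Apply \eqref{eq:rectangular-fourier-equality} with \(\mathbf n=e_j\), which is admissible because its only nonzero entry is \(1\), hence odd. It gives \(\widehat\mu(e_j)=0\). If every point of \(\operatorname{supp}\mu\) had \(j\)-th coordinate equal to \(x_j\), then \(|\widehat\mu(e_j)|=1\), which is impossible. Hence there is \(z\in\operatorname{supp}\mu\) with \(d_j(x_j,z_j)>0\). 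Because \(y_j=x_j\), we also have \(d_j(y_j,z_j)=d_j(x_j,z_j)>0\).

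Now apply the dichotomy to the pair \((x,z)\).
\begin{enumerate}
\item[\textup{(1)}] If \(\mathbf d(x,z)\) is of type (b), its unique nonzero coordinate must be the \(j\)-th one, so \(z_i=x_i\). Then \(d_i(y_i,z_i)=a\in(0,L_i)\) and \(d_j(y_j,z_j)>0\) with \(j\neq i\). So \(\mathbf d(y,z)\) is neither a corner nor single-axis, contradicting \cref{lem:alpha-one-corner-equality}.
\item[\textup{(2)}] If \(\mathbf d(x,z)\) is a corner, then \(z_i-x_i\in\{0,L_i\}\) modulo \(2L_i\). Writing \(y_i-z_i=(y_i-x_i)+(x_i-z_i)\) shows that \(d_i(y_i,z_i)\in\{a,L_i-a\}\subset(0,L_i)\). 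Again \(d_j(y_j,z_j)>0\), so \(\mathbf d(y,z)\) violates the lemma.
\end{enumerate}
In both cases we reach a contradiction, so no pair in the support is of type (b), which is the claim.

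The only step needing care is justifying that both equality conditions apply at \(\alpha=1\). The deficit decomposition \eqref{eq:rectangular-deficit-decomposition} and the Bernstein representation are valid for \(p=\tfrac12\in[\tfrac12,1)\). The corner difference \(M_1-F_1\) is continuous and nonnegative, so it vanishes on \(\operatorname{supp}\mu\times\operatorname{supp}\mu\). I expect the main obstacle to be locating the auxiliary point \(z\): the Fourier condition at a single-axis frequency \(e_j\) with \(j\neq i\) is exactly what forces it, and this is where \(d\geq2\) enters. The rest is the elementary case check above.
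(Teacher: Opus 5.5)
Your argument is correct and has the same skeleton as the paper's proof. You produce a third support point \(z\) displaced in a coordinate \(j\neq i\), then show that the pair \((y,z)\) has an interior coordinate together with a second nonzero coordinate, which contradicts \cref{lem:alpha-one-corner-equality}.

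The difference lies in how \(z\) is found. The paper uses all odd frequencies supported in the \(j\)-th coordinate, together with Fourier--Stieltjes uniqueness, to show that the marginal \(\lambda_j=(\pi_j)_\#\mu\) is invariant under the half-period shift \(t\mapsto t+L_j\). Since \(\operatorname{supp}\lambda_j=\pi_j(\operatorname{supp}\mu)\), this yields \(z\) with \(d_j(x_j,z_j)=L_j\) exactly. That coordinate is then a nonzero endpoint, so a single application of the corner lemma to \((x,z)\) forces \(d_i(x_i,z_i)\in\{0,L_i\}\), and no case split is needed.

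You use only the one coefficient \(\widehat\mu(e_j)=0\). This gives merely \(d_j(x_j,z_j)>0\), possibly with an interior value. You compensate with the extra case in which \((x,z)\) is single-axis along \(j\); there \(z_i=x_i\), and the same contradiction follows.

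Your route needs less Fourier input: one frequency, no uniqueness theorem, and no identification of the support of a pushforward. The paper's route extracts a stronger structural fact, namely that every coordinate marginal is half-period invariant. This is the same balancing that characterizes the \(\alpha=2\) maximizers, and it shortens the combinatorial step to a single case.
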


\begin{proof}
For each coordinate \(j\), let
\[
    \lambda_j=(\pi_j)_\#\mu
\]
be the \(j\)-th marginal. Taking in \eqref{eq:rectangular-fourier-equality} frequencies supported only in the \(j\)-th coordinate shows that
\[
    \widehat\lambda_j(n)=0
    \qquad
    \text{for every odd }n.
\]
If \(\tau_j(t)=t+L_j\), then
\[
    \widehat{(\tau_j)_\#\lambda_j}(n)
    =
    (-1)^n\widehat\lambda_j(n).
\]
Thus \(\lambda_j\) and \((\tau_j)_\#\lambda_j\) have the same Fourier coefficients, and Fourier--Stieltjes uniqueness on the circle gives
\begin{equation}
    (\tau_j)_\#\lambda_j=\lambda_j.
    \label{eq:half-period-marginal-invariance}
\end{equation}

Suppose, toward a contradiction, that there exist \(x^0,y\in\operatorname{supp}\mu\) and an index \(i\) with
\[
    a=d_i(x_i^0,y_i)\in(0,L_i).
\]
By \eqref{eq:corner-equality-on-support} and \cref{lem:alpha-one-corner-equality}, all other coordinate distances must vanish:
\begin{equation}
    d_k(x_k^0,y_k)=0,
    \qquad
    k\neq i.
    \label{eq:alpha-one-first-pair}
\end{equation}

Choose \(j\neq i\). Since \(x_j^0\in\operatorname{supp}\lambda_j\), the invariance \eqref{eq:half-period-marginal-invariance} implies
\[
    x_j^0+L_j
    \in
    \operatorname{supp}\lambda_j.
\]
Because the torus is compact,
\[
    \operatorname{supp}((\pi_j)_\#\mu)
    =
    \pi_j(\operatorname{supp}\mu),
\]
so there exists \(z\in\operatorname{supp}\mu\) with
\begin{equation}
    d_j(x_j^0,z_j)=L_j.
    \label{eq:alpha-one-antipodal-coordinate}
\end{equation}

Apply \cref{lem:alpha-one-corner-equality} to the pair \((x^0,z)\). Since its \(j\)-th coordinate distance is the nonzero endpoint \(L_j\), the \(i\)-th coordinate distance cannot lie in \((0,L_i)\). Hence
\[
    d_i(x_i^0,z_i)\in\{0,L_i\}.
\]
By \eqref{eq:alpha-one-first-pair}, \(y_j=x_j^0\), and therefore
\[
    d_j(y_j,z_j)=L_j.
\]
On the \(i\)-th circle,
\[
    d_i(y_i,z_i)
    \in
    \{a,L_i-a\}
    \subset(0,L_i).
\]
Thus the pair \((y,z)\) has one interior coordinate distance and a second nonzero coordinate distance, contradicting \cref{lem:alpha-one-corner-equality}. This proves the lemma.
\end{proof}

Fixing \(x^0\in\operatorname{supp}\mu\) and applying \cref{lem:alpha-one-two-pair-rigidity}, we again obtain
\[
    \operatorname{supp}\mu\subset x^0+H_2.
\]
The Fourier conditions \eqref{eq:rectangular-fourier-equality} then annihilate every nontrivial Walsh character, exactly as in the case \(1<\alpha<2\), and hence
\[
    \mu=\mu_{x^0}.
\]
This completes the proof of \cref{thm:rectangular-phase-diagram}\textup{(i)}.

The restriction \(d\geq2\) is essential at the endpoint. When \(d=1\) and \(\alpha=1\), the maximum is \(L_1/2\), but every probability measure invariant under the half-period translation \(x\mapsto x+L_1\) is maximizing.

\subsection{The quadratic and superquadratic regimes}
\label{subsec:rectangular-quadratic-superquadratic}

Let \(d_L\) denote the geodesic distance on \(\mathbb R/(2L\mathbb Z)\). From \eqref{eq:circle-distance-fourier}, every probability measure \(\lambda\) on this circle satisfies
\begin{equation}
    \iint d_L(x,y)\,d\lambda(x)\,d\lambda(y)
    =
    \frac{L}{2}
    -
    \frac{2L}{\pi^2}
    \sum_{\substack{n\in\mathbb Z\\ n\ \mathrm{odd}}}
    \frac{|\widehat\lambda(n)|^2}{n^2}
    \leq
    \frac{L}{2}.
    \label{eq:circle-mean-distance-bound}
\end{equation}
Equality holds if and only if all odd Fourier coefficients vanish, equivalently if and only if \(\lambda\) is invariant under the half-period translation.

Let \(X,Y\) be independent with law \(\mu\in\mathcal P(T_{\mathbf L}^d)\), and set
\[
    A_i=d_i(X_i,Y_i).
\]
Since \(0\leq A_i\leq L_i\),
\[
    A_i^2\leq L_iA_i.
\]
Applying \eqref{eq:circle-mean-distance-bound} to the \(i\)-th marginal of \(\mu\) gives
\begin{equation}
\begin{aligned}
    I_2(\mu)
    &=
    \sum_{i=1}^d\mathbb EA_i^2\\
    &\leq
    \sum_{i=1}^d L_i\mathbb EA_i\\
    &\leq
    \frac12\sum_{i=1}^dL_i^2
    =
    \frac{D^2}{2}.
\end{aligned}
    \label{eq:rectangular-quadratic-chain}
\end{equation}

We now characterize equality. If \(I_2(\mu)=D^2/2\), then
\[
    A_i(L_i-A_i)=0
    \qquad
    \mu\otimes\mu\text{-almost everywhere}
\]
for every \(i\). Since
\[
    (x,y)\longmapsto
    d_i(x_i,y_i)\bigl(L_i-d_i(x_i,y_i)\bigr)
\]
is continuous and nonnegative, it vanishes on \(\operatorname{supp}\mu\times\operatorname{supp}\mu\). Fixing \(x^0\in\operatorname{supp}\mu\), we obtain
\[
    \operatorname{supp}\mu\subset x^0+H_2.
\]
Equality in the second line of \eqref{eq:rectangular-quadratic-chain} also forces every coordinate marginal to be half-period invariant. Since each such marginal is now supported on two antipodal points, it assigns mass \(1/2\) to each.

Conversely, suppose that \(\mu\) is supported on a translate of \(H_2\) and each coordinate marginal is balanced between its two coordinate values. Then \(A_i\in\{0,L_i\}\) almost surely and
\[
    \mathbb P(A_i=L_i)=\frac12,
\]
so
\[
    \mathbb EA_i^2=\frac{L_i^2}{2}.
\]
Hence
\[
    I_2(\mu)=\frac{D^2}{2}.
\]
This proves \cref{thm:rectangular-phase-diagram}\textup{(ii)}. Notice that the coordinates need not be independent: the quadratic endpoint admits an entire family of balanced couplings on \(H_2\).

Finally let \(\alpha>2\). Since \(0\leq\rho(x,y)\leq D\),
\[
    \rho(x,y)^\alpha
    \leq
    D^{\alpha-2}\rho(x,y)^2.
\]
Together with \eqref{eq:rectangular-quadratic-maximum}, this gives
\[
    I_\alpha(\mu)
    \leq
    D^{\alpha-2}I_2(\mu)
    \leq
    \frac{D^\alpha}{2}.
\]
The equally weighted measure on a diametral antipodal pair attains this value.

Suppose now that equality holds. Equality in the pointwise estimate forces
\[
    \rho(x,y)\in\{0,D\}
    \qquad
    \text{for all }
    x,y\in\operatorname{supp}\mu,
\]
again by continuity. On an orthogonal rectangular torus, a point has a unique point at distance \(D\): each coordinate must simultaneously be at its antipode. Thus, for some \(a\),
\[
    \operatorname{supp}\mu
    \subset
    \bigl\{
        a,\,
        a+(L_1,\ldots,L_d)
    \bigr\}.
\]
Writing the two masses as \(s\) and \(1-s\), we obtain
\[
    I_\alpha(\mu)
    =
    2s(1-s)D^\alpha
    \leq
    \frac{D^\alpha}{2},
\]
with equality if and only if \(s=1/2\). This proves \cref{thm:rectangular-phase-diagram}\textup{(iii)}.

\section{The Regular Hexagonal Torus}
\label{sec:hexagonal-torus}

Unlike the rectangular case, the hexagonal problem is not separable. We use an explicit positive-definite function \(P\) lying below a geometric majorant \(h\), with contact set exactly a cyclic subgroup of order three.

Let
\[
    \Lambda_{\mathrm{hex}}
    =
    \mathbb Z b_1+\mathbb Z b_2,
    \qquad
    |b_1|=|b_2|=1,
    \qquad
    b_1\cdot b_2=\frac12,
\]
and set
\[
    T_{\mathrm{hex}}
    =
    \mathbb R^2/\Lambda_{\mathrm{hex}}.
\]
We use lattice coordinates
\[
    x=ub_1+vb_2,
    \qquad
    (u,v)\in\mathbb R^2/\mathbb Z^2.
\]
Let
\[
    q=\frac{b_1+b_2}{3},
    \qquad
    H_3=\{0,q,2q\}.
\]
Then \(H_3\) is a cyclic subgroup of order three. We denote normalized counting measure on \(H_3\) by
\[
    \nu_{H_3}
    =
    \frac13
    \bigl(
        \delta_0+\delta_q+\delta_{2q}
    \bigr).
\]

\begin{theorem}[Exact maximizers on the regular hexagonal torus]
\label{thm:hexagonal-extremizers}
For every \(\alpha\geq1\),
\begin{equation}
    \max_{\mu\in\mathcal P(T_{\mathrm{hex}})}
    I_\alpha(\mu)
    =
    \frac23\,3^{-\alpha/2}.
    \label{eq:hexagonal-energy-maximum}
\end{equation}
The maximizers are precisely the translates of \(\nu_{H_3}\).
\end{theorem}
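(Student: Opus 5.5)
The plan is to reduce to the case \(\alpha=1\) and then prove the linear case with a positive-definite minorant argument of the type in \cref{cor:positive-definite-energy}.

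Normalize first. The Voronoi cell of \(\Lambda_{\mathrm{hex}}\) is a regular hexagon with inradius \(1/2\) and circumradius \(3^{-1/2}\). Hence \(D=\operatorname{diam}(T_{\mathrm{hex}})=3^{-1/2}\), and this value is attained exactly at the deep holes \(\pm q\). All three nonzero differences in \(H_3\) are \(\pm q\), so
\[
I_\alpha(\nu_{H_3})=\tfrac23 D^\alpha=\tfrac23\,3^{-\alpha/2}.
\]
For \(\alpha\ge1\) we have \(\rho^\alpha\le D^{\alpha-1}\rho\). It therefore suffices to show
\[
I_1(\mu)\le \tfrac23 D,
\]
with equality only for translates of \(\nu_{H_3}\). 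The case \(\alpha>1\) then follows as in the proof of \cref{thm:rectangular-phase-diagram}(iii). Equality in \(\rho^\alpha\le D^{\alpha-1}\rho\) forces \(\rho\in\{0,D\}\) on \(\operatorname{supp}\mu\times\operatorname{supp}\mu\), and the maximizers of \(I_\alpha\) are then contained among those of \(I_1\).

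For \(\alpha=1\), let \(\xi_1,\xi_2,\xi_3\) be the three shortest dual vectors up to sign. They sit at \(120^\circ\), sum to zero, and satisfy \(\xi_k\cdot q\equiv\pm\tfrac13\pmod 1\). I would take a \(1\)-periodic even function
\[
\varphi(s)=\sum_{n\ge1,\ 3\nmid n}c_n\bigl(1-\cos 2\pi ns\bigr),\qquad c_n\ge0,
\]
with \(c_n\asymp n^{-2}\), so that \(\varphi\) has a linear cusp at \(s=0\). I would then set
\[
\Psi(x)=\sum_{k=1}^3\varphi(\xi_k\cdot x).
\]
Then \(\Psi=c-P\) with \(c=3\sum c_n\), and \(P\) has nonnegative, absolutely summable Fourier coefficients. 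Those coefficients are strictly positive at \(n\xi_k\) for \(3\nmid n\) whenever \(c_n>0\). By \cref{prop:measure-energy-diagonalization} and \cref{cor:positive-definite-energy},
\[
\mathcal E_\Psi(\mu)\le \widehat\Psi(0)=\sum_k\sum_n c_n,
\]
with equality iff \(\widehat\mu(n\xi_k)=0\) for all those frequencies. Because \(\cos(2\pi n/3)=-\tfrac12\) for \(3\nmid n\), we get \(\Psi(q)=\Psi(2q)=\tfrac32\widehat\Psi(0)\). Equivalently, \(\widehat\Psi(0)=\tfrac23\Psi(q)\), which is exactly the ratio needed. After scaling so that \(\Psi(q)=D\), the claimed bound follows once the pointwise majorization
\[
\rho_{\mathrm{hex}}(x,0)\le\Psi(x)\quad\text{for all }x,\qquad\text{with equality exactly on }H_3,
\]
is established.

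The majorization is the main obstacle. The first task is to choose the coefficients \(c_n\), possibly also adding pure cosine terms with \(3\nmid n\). The slope of \(\Psi\) at the origin must dominate \(|x|\) in every direction; since \(\sum_k|\xi_k\cdot x|\) is a hexagonal norm, this only requires the cusp slope of \(\varphi\) to be large enough. At the same time, \(\Psi\) must stay above \(|x|\) along the segments from \(0\) toward \(q\), where \(|x|\) reaches \(D\) only at the endpoint. A first attempt is to take \(\varphi\) piecewise linear: the triangle wave \(\varphi(s)=3D\,\mathrm{dist}(s,\mathbb Z)\) capped at the value \(D\) on \([\tfrac13,\tfrac23]\). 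This satisfies \(\varphi(0)=0\) and \(\varphi(\tfrac13)=D\), and I would check whether its nonzero Fourier coefficients are nonpositive and vanish at multiples of \(3\), adjusting the profile if they do not.

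With \(\Psi\) fixed, I would verify \(\Psi\ge|x|\) inside one fundamental triangle of the 12-element symmetry group of the hexagon. In that chamber each \(\xi_k\cdot x\) has a known sign, so \(\Psi\) is an explicit piecewise-smooth function of two variables. The comparison with \(|x|\) then reduces to finitely many one-variable inequalities on edges and pieces, with equality checked to occur only at \(0\) and at the deep-hole vertex.

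Granting the majorization, let \(\mu\) be a maximizer. The gap \(\Psi-\rho\) is continuous, nonnegative and integrates to zero against \(\mu\otimes\mu\), so it vanishes on \(\operatorname{supp}\mu\times\operatorname{supp}\mu\). Hence all differences of support points lie in \(H_3\), and \(\operatorname{supp}\mu\subset x^0+H_3\). The Fourier equality condition gives \(\widehat\mu(\xi_1)=0\), and \(\xi_1\) restricts to a nontrivial character of \(H_3\cong\mathbb Z/3\). Writing \(\mu=\sum_j w_j\delta_{x^0+jq}\), we get
\[
\sum_j w_j\omega^j=0=\sum_j w_j\bar\omega^j,\qquad \omega=e^{2\pi i/3}.
\]
Together with \(\sum_j w_j=1\), this forces \(w_j=\tfrac13\), so \(\mu\) is a translate of \(\nu_{H_3}\).
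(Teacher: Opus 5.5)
You have a genuine gap at the step you flag yourself. The majorization \(\rho_{\Lambda_{\mathrm{hex}}}(x,0)\le\Psi(x)\) with \(\Psi(q)=D\) is not merely unproved: it is impossible for every \(\Psi\) of your ridge form.

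For each \(n\) with \(3\nmid n\) one has
\[
1-\cos\tfrac{2\pi n}{3}=\Bigl(1-\cos\tfrac{\pi n}{3}\Bigr)+\tfrac12\bigl(1-\cos\pi n\bigr).
\]
To check this, split by residues mod \(6\): for \(n\equiv\pm1\) it reads \(\tfrac32=\tfrac12+1\), and for \(n\equiv\pm2\) it reads \(\tfrac32=\tfrac32+0\). Hence every admissible profile satisfies \(\varphi(\tfrac13)=\varphi(\tfrac16)+\tfrac12\varphi(\tfrac12)\).

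The six vectors \(\pm\xi_k\) pair with \(b_1\) to give \(\pm1,\pm1,0\), and with \(q\) to give \(\pm\tfrac13\) mod \(1\). So \(\Psi(sb_1)=2\varphi(s)\) and \(\Psi(q)=3\varphi(\tfrac13)\), and the identity becomes
\[
\Psi(q)=\tfrac32\,\Psi\bigl(\tfrac16 b_1\bigr)+\tfrac34\,\Psi\bigl(\tfrac12 b_1\bigr).
\]
The distance equals \(\tfrac16\) and \(\tfrac12\) at these two points. Majorization there forces \(\Psi(q)\ge\tfrac58>\tfrac1{\sqrt3}=D\), so any such certificate gives at best \(I_1(\mu)\le\tfrac5{12}\), not the sharp value \(\tfrac{2}{3\sqrt3}\).

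The obvious repairs do not help:
\begin{itemize}
\item Allowing \(3\mid n\) fails because \(\widehat{\nu_{H_3}}\) then does not vanish on the spectrum of \(P\). Consequently \(\widehat\Psi(0)>\mathcal E_\Psi(\nu_{H_3})\ge\tfrac23D\), and the bound is not sharp.
\item Extra one-variable cosine terms add nothing. Sharpness forces \(\Psi(0)=0\), and with nonpositive nonzero coefficients this already puts \(\varphi\) in your cone \(\sum c_n(1-\cos2\pi ns)\), \(c_n\ge0\).
\end{itemize}
Your trial profile shows the failure concretely. As written it gives \(\Psi(q)=3D\). After rescaling to \(\Psi(q)=D\) it has \(\varphi(\tfrac12)=D/3<\tfrac14\), so \(\Psi<\rho\) at \(\tfrac12b_1\).

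The missing idea is a certificate whose spectrum leaves the three lines \(\mathbb Z\xi_k\). The paper proceeds in two steps.
\begin{itemize}
\item It first replaces the normalized distance by the piecewise-polynomial majorant \(r\le 1-h\), where \(h=a+\tfrac12bc\) in the chamber coordinates \(a=1-2u-v\), \(b=u-v\), \(c=u+2v\).
\item It then builds a positive-definite minorant \(P=\tfrac13+\sum_n w_nC_n\le h\) from the dihedral orbits of \((n+1,-n)\). For \(n\ge1\) these orbits lie off the lines \(\mathbb Z\xi_k\). The weights are \(w_n=6/(\pi^2(2n+1)^2)\) for \(n\ge1\) and \(w_0=6/\pi^2-1/12\).
\end{itemize}
All these frequencies still pair with \(q\) to \(\pm\tfrac13\) mod \(1\), so \(\widehat P(0)=\tfrac13\) and sharpness at \(\nu_{H_3}\) are preserved. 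The explicit closed form of \(P\) then makes a three-region verification of \(h-P\ge0\), with contact set exactly \(H_3\), feasible.

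Your reduction from \(\alpha>1\) to \(\alpha=1\) via \(\rho^\alpha\le D^{\alpha-1}\rho\) matches the paper. So does your final equality argument: support in a coset of \(H_3\), then \(\widehat\mu(\xi_1)=0\) forcing equal weights. Both go through once a valid certificate is in hand.
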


\subsection{Symmetry reduction}
\label{subsec:hexagonal-symmetry}

In lattice coordinates the Euclidean quadratic form is
\[
    N(u,v)=u^2+uv+v^2.
\]
The Voronoi cell of \(\Lambda_{\mathrm{hex}}\) is a regular hexagon, and
\[
    D
    =
    \operatorname{diam}(T_{\mathrm{hex}})
    =
    \frac1{\sqrt3}.
\]
It is convenient to normalize the distance by setting
\[
    r(x)
    =
    \frac{\rho_{\Lambda_{\mathrm{hex}}}(x,0)}{D}
    =
    \sqrt3\,\rho_{\Lambda_{\mathrm{hex}}}(x,0).
\]
Thus \(0\leq r\leq1\).

By symmetry it suffices to work in the chamber
\begin{equation}
    \Delta
    =
    \bigl\{
        (u,v):
        0\leq v\leq u,\;
        2u+v\leq1
    \bigr\}.
    \label{eq:hexagonal-chamber}
\end{equation}
The triangle \(\Delta\) has vertices \((0,0)\), \((1/2,0)\), and \((1/3,1/3)\). Here \((1/3,1/3)\) is a Voronoi vertex and \((1/2,0)\) is the midpoint of an adjacent Voronoi edge. Joining the center of the regular hexagonal Voronoi cell to its six vertices and six edge midpoints partitions \(V\) into twelve congruent closed triangles with disjoint interiors; these are exactly the images of \(\Delta\) under the dihedral symmetry group. Their projections under the quotient map therefore tile \(T_{\mathrm{hex}}\) (up to their common boundaries). The nearest lattice point is the origin throughout \(\Delta\), so
\[
    r(u,v)^2
    =
    3(u^2+uv+v^2).
\]

Introduce the coordinates
\begin{equation}
    a=1-2u-v,
    \qquad
    b=u-v,
    \qquad
    c=u+2v.
    \label{eq:hexagonal-weyl-coordinates}
\end{equation}
On \(\Delta\),
\[
    a,b,c\geq0,
    \qquad
    a+b+c=1,
    \qquad
    b\leq c.
\]
A direct calculation gives
\begin{equation}
    r^2=b^2+bc+c^2.
    \label{eq:hexagonal-normalized-distance}
\end{equation}

Define on \(\Delta\)
\[
    h=a+\frac12bc.
\]
The chamber formula extends continuously across the walls. In lattice coordinates the three reflections in the sides of \(\Delta\) are
\[
\begin{aligned}
    \sigma_1(u,v)&=(u+v,-v), &&\text{across }v=0,\\
    \sigma_2(u,v)&=(v,u), &&\text{across }u=v,\\
    \sigma_3(u,v)&=(1-u-v,v)\pmod{\mathbb Z^2},
        &&\text{across }2u+v=1.
\end{aligned}
\]
Their linear parts preserve \(N(u,v)=u^2+uv+v^2\) and preserve \(\mathbb Z^2\), so they induce isometries of \(T_{\mathrm{hex}}\). Each \(\sigma_i\) fixes its corresponding side pointwise. Reflecting the formula \(h=a+bc/2\) from \(\Delta\) to the adjacent chambers therefore gives identical traces on every common side. Since the closed chambers tile the torus, the reflected definitions glue to a well-defined continuous function \(h\) on \(T_{\mathrm{hex}}\). Since
\[
    1-h
    =
    b+c-\frac12bc,
\]
we obtain
\begin{equation}
    (1-h)^2-r^2
    =
    \frac{bc}{4}(4a+bc)
    \geq0.
    \label{eq:hexagonal-geometric-majorization}
\end{equation}
In particular,
\begin{equation}
    0\leq r\leq1-h
    \qquad\text{on }T_{\mathrm{hex}}.
    \label{eq:r-below-one-minus-h}
\end{equation}

The subgroup \(H_3\) is adapted to this geometry. The nonzero differences between its points have normalized distance \(1\), while
\[
    h(0)=1,
    \qquad
    h(q)=h(2q)=0.
\]
Consequently,
\[
    \mathcal E_r(\nu_{H_3})=\frac23,
    \qquad
    \mathcal E_h(\nu_{H_3})=\frac13.
\]
In view of \eqref{eq:r-below-one-minus-h}, the case \(\alpha=1\) will follow once we prove
\[
    \mathcal E_h(\mu)\geq\frac13
    \qquad
    \text{for every }\mu\in\mathcal P(T_{\mathrm{hex}}),
\]
with equality only for translates of \(\nu_{H_3}\).

\subsection{A positive spectral minorant}
\label{subsec:hexagonal-spectral-minorant}

The certificate below is a continuous positive-Fourier minorant. Related linear-programming certificates have recently been used for fixed-cardinality torus energy problems, notably for Fibonacci lattices and tensor-product energies \cite{Nagel2025,BilykNagelRuohoniemi2026}. Those works concern fixed-cardinality tensor-product energies, whereas here the certificate is used to optimize over probability measures.

Let \(b_1^*,b_2^*\) be the basis dual to \(b_1,b_2\), and identify \(\Lambda_{\mathrm{hex}}^*\) with \(\mathbb Z^2\) by
\[
    (m,n)
    \longleftrightarrow
    mb_1^*+nb_2^*.
\]
Thus the frequency \((m,n)\) corresponds in lattice coordinates to the character
\[
    (u,v)\longmapsto e^{2\pi i(mu+nv)}.
\]

For \(n\geq0\), let
\[
    k_n=(n+1,-n)\in\mathbb Z^2.
\]
Let \(\mathcal O_n\) be its orbit under the dihedral symmetry group of \(\Lambda_{\mathrm{hex}}\), and set
\[
    C_n(u,v)
    =
    \frac1{|\mathcal O_n|}
    \sum_{k=(k_1,k_2)\in\mathcal O_n}
    \cos\bigl(2\pi(k_1u+k_2v)\bigr).
\]
Choose the weights
\[
    w_0
    =
    \frac6{\pi^2}-\frac1{12},
    \qquad
    w_n
    =
    \frac6{\pi^2(2n+1)^2},
    \qquad n\geq1.
\]
They are strictly positive and satisfy
\[
    \sum_{n\geq0}w_n=\frac23.
\]
Define
\begin{equation}
    P(x)
    =
    \frac13+\sum_{n\geq0}w_nC_n(x).
    \label{eq:hexagonal-spectral-minorant}
\end{equation}
The series converges absolutely and uniformly. Since each \(C_n\) has nonnegative Fourier coefficients, \(P\) is positive definite, and
\[
    \widehat P(0)=\frac13.
\]

We next put \(P\) into a form suitable for pointwise comparison with \(h\). Pairing opposite frequencies in \(\mathcal O_n\) gives, on \(\Delta\),
\[
\begin{aligned}
    C_n(u,v)
    =
    \frac13\bigl\{&
    \cos\pi(u+v)\cos((2n+1)\pi b)\\
    &+
    \cos\pi u\cos((2n+1)\pi c)\\
    &+
    \cos\pi v\cos((2n+1)\pi(b+c))
    \bigr\}.
\end{aligned}
\]
Let
\[
    W(t)
    =
    \sum_{n\geq0}
    w_n\cos((2n+1)\pi t),
    \qquad
    0\leq t\leq1.
\]
The classical odd-cosine identity
\[
    \sum_{n\geq0}
    \frac{\cos((2n+1)\pi t)}{(2n+1)^2}
    =
    \frac{\pi^2}{8}(1-2t)
\]
yields
\begin{equation}
    W(t)
    =
    \frac34(1-2t)-\frac1{12}\cos\pi t.
    \label{eq:hexagonal-odd-cosine-kernel}
\end{equation}

Set
\[
    s=b+c,
    \qquad
    d=c-b.
\]
Then
\[
    a=1-s,
    \qquad
    b=\frac{s-d}{2},
    \qquad
    c=\frac{s+d}{2},
\]
and
\begin{equation}
    h
    =
    1-s+\frac{s^2-d^2}{8}.
    \label{eq:hexagonal-h-sd}
\end{equation}
Moreover,
\[
    u=\frac{s}{2}-\frac{d}{6},
    \qquad
    v=\frac{d}{3},
    \qquad
    u+v=\frac{s}{2}+\frac{d}{6}.
\]
Write
\[
    A
    =
    \cos\pi\left(\frac{s}{2}+\frac{d}{6}\right),
    \qquad
    B
    =
    \cos\pi\left(\frac{s}{2}-\frac{d}{6}\right),
    \qquad
    C
    =
    \cos\frac{\pi d}{3},
\]
and set
\[
    L_0
    =
    (1-s+d)A
    +(1-s-d)B
    +(1-2s)C.
\]
Finally, define
\[
    \Theta_1(u,v)
    =
    \frac29
    \bigl[
        3
        -\cos2\pi u
        -\cos2\pi v
        -\cos2\pi(u+v)
    \bigr].
\]
Using
\[
\begin{aligned}
    &A\cos\pi b
    +B\cos\pi c
    +C\cos\pi s\\
    &\hspace{18mm}
    =
    \cos2\pi u
    +\cos2\pi v
    +\cos2\pi(u+v),
\end{aligned}
\]
and substituting \eqref{eq:hexagonal-odd-cosine-kernel} into \eqref{eq:hexagonal-spectral-minorant}, we obtain
\begin{equation}
    P
    =
    \frac14(1+L_0)
    +
    \frac18\Theta_1.
    \label{eq:hexagonal-P-closed-form}
\end{equation}

The central estimate is the following.

\begin{proposition}[Pointwise spectral minorant]
\label{prop:hexagonal-spectral-minorant}
On \(T_{\mathrm{hex}}\),
\begin{equation}
    P(x)\leq h(x).
    \label{eq:hexagonal-P-below-h}
\end{equation}
Moreover,
\[
    P(x)=h(x)
    \quad\Longleftrightarrow\quad
    x\in H_3.
\]
\end{proposition}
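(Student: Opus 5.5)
The plan is to reduce \eqref{eq:hexagonal-P-below-h} to an explicit two-variable trigonometric inequality on the chamber \(\Delta\), and then prove that inequality by edge analysis, local expansions at the two contact corners, and a rigorous interior estimate.

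\textbf{Symmetry reduction.} First I check that \(P\) and \(h\) are both invariant under the dihedral symmetry group of \(\Lambda_{\mathrm{hex}}\) acting on \(T_{\mathrm{hex}}\).

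For \(h\) this holds by construction, since it was defined by reflecting \(a+\tfrac12bc\) across the walls of \(\Delta\).

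For \(P\), each \(C_n\) is an average of characters over a full dihedral orbit \(\mathcal O_n\). The group acts on frequencies by the dual (transpose-inverse) action, so \(C_n\circ g=C_n\) for every symmetry \(g\).

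Since the twelve images of \(\Delta\) tile the torus, it suffices to prove \(P\le h\) on \(\Delta\). It also suffices to identify the contact set there. Then I transport it back: \(H_3\cap\Delta\) consists of the corners \((0,0)\) and \((1/3,1/3)\). The union of their orbits is exactly \(\{0,q,2q\}\), because the Voronoi vertices of the hexagon are the images of \(q\) and \(2q\) modulo \(\Lambda_{\mathrm{hex}}\). I must also check that the remaining corner \((1/2,0)\), the half-lattice point, is \emph{not} a contact point.

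\textbf{Reduction to a function of \((s,d)\).} In the coordinates \(s=b+c\), \(d=c-b\), the chamber becomes the triangle
\[
  0\le d\le s\le 1 .
\]
Here \(d\le s\) encodes \(b\ge0\), and \(s\le1\) encodes \(a\ge0\). Using \eqref{eq:hexagonal-h-sd} and \eqref{eq:hexagonal-P-closed-form}, the quantity to control is
\[
  G(s,d)=1-s+\frac{s^2-d^2}{8}-\frac14\bigl(1+L_0(s,d)\bigr)-\frac18\Theta_1(s,d).
\]
I write \(\Theta_1\) in \((s,d)\) through \(u=s/2-d/6\), \(v=d/3\), \(u+v=s/2+d/6\). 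The goal is \(G\ge0\) on the triangle, with \(G=0\) only at \((0,0)\) and \((1,1)\).

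I first verify the corner values directly:
\begin{itemize}
  \item at \((s,d)=(0,0)\): \(h=1\) and \(P=1\), because \(L_0=3\) and \(\Theta_1=0\);
  \item at \((1,1)\): \(h=0\), and a direct evaluation gives \(P=0\);
  \item at \((s,d)=(1,0)\), the point \((1/2,0)\): I compute \(G>0\).
\end{itemize}

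\textbf{Edges and corner expansions.} Next I restrict \(G\) to the three edges \(d=0\), \(d=s\), and \(s=1\). On each edge \(G\) is a one-variable function combining polynomials with \(\cos\pi(\cdot)\) terms. I would prove positivity in the open edge by convexity or sign-of-derivative arguments, after bounding the cosines by Taylor polynomials with explicit remainder.

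Near the two contact corners I expand \(G\) to second order. The aim is to show that \(G\) vanishes there with a positive-definite (or one-sided linear) leading term on the triangle. For example, at \((0,0)\) the linear term in \(s\) should be positive: \(h\) drops like \(1-s\), while \(P\) drops more slowly.

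\textbf{Interior estimate (main obstacle).} The hard step is interior positivity of \(G\) away from the corners. This is a genuine two-variable trigonometric inequality with two equality points.

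My first attempt would be to show that, for fixed \(s\), the map \(d\mapsto G(s,d)\) is concave, or at least has no interior minimum. Then the minimum is attained on the edges \(d=0\) or \(d=s\), which have already been handled. To check concavity I would compute \(\partial_d^2G\): the \(-d^2/8\) term contributes \(-1/4\), and I would bound the trigonometric contributions from \(L_0\) and \(\Theta_1\).

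If concavity fails somewhere, the fallback is a rigorous grid argument. I would use a Lipschitz bound for \(G\) on a neighborhood excluding small disks around \((0,0)\) and \((1,1)\), and control those disks by the local expansions. This also yields strictness, hence the equality characterization.
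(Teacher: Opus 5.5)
Your symmetry reduction, the passage to the triangle $0\le d\le s\le 1$, and the corner evaluations are fine. The proposal, however, never proves the two-variable inequality, and the one concrete mechanism you offer for the interior is false.

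At the Voronoi vertex $(s,d)=(1,1)$ one computes
\[
\partial_dG=0,\qquad \partial_d^2G=-\tfrac14+\tfrac{\sqrt3\,\pi}{12}-\tfrac{5\pi^2}{432}\approx0.089>0,\qquad \partial_s\partial_dG=-\tfrac{\sqrt3\,\pi}{12}.
\]
Hence the derivative of $s\mapsto\partial_dG(s,s)$ at $s=1$ equals $-\frac14-\frac{5\pi^2}{432}<0$. So for $s$ slightly below $1$, the slice $d\mapsto G(s,d)$ is convex near $d=s$ and strictly increasing at $d=s$, which means it has a genuine interior minimum. Numerically, $G(0.95,0.7)\approx0.0208<G(0.95,0.95)\approx0.0223$. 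Neither concavity in $d$ nor ``no interior minimum'' holds. Your reduction to the edges $d=0$, $d=s$ therefore fails precisely near a contact point, where the margin is smallest.

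The grid fallback is a program rather than an argument. It would also have to rest on a local picture different from the one you describe.

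At the origin the linear terms cancel: $L_0=3-4s+O(s^2+d^2)$ and $\Theta_1=O(s^2+d^2)$. So $P$ and $h$ both equal $1-s$ to first order; $P$ does not drop more slowly, as you predicted. The Hessian of $G$ there is $\operatorname{diag}\bigl(\frac14+\frac{5\pi^2}{72},\,-\frac14+\frac{5\pi^2}{216}\bigr)$, which is indefinite. Positivity near $0$ holds only because of the constraint $d\le s$.

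At $(1,1)$ the linear part is $\frac12(1-s)$, which vanishes along the edge $a=0$. On that edge $G=E(d)$ has a double zero, since $E(1)=E'(1)=0$. You would thus need explicit, sector-adapted second-order remainder bounds at both corners, and none is supplied.

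The paper avoids slicing in $s$ altogether:
\begin{itemize}
\item In the regions $a\ge\frac12$ and $\{a\le\frac12,\ c\le\frac12\}$, it uses one-sided trigonometric bounds, choosing upper or lower bounds according to the signs of the coefficients in $L_0$ and $S$. This reduces $\mathcal D$ to explicit polynomials, closed using $d\le s$ (resp.\ $d\le a$).
\item In $c\ge\frac12$, it slices at fixed $b$, i.e.\ along $\partial_s+\partial_d$, where $\mathcal D_{tt}<0$ can be proved. This pushes the minimum to $c=\frac12$ or to the edge $a=0$.
\item The double zero on $a=0$ is handled by the strict convexity of $F_0$.
\end{itemize}
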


\begin{proof}
By symmetry it suffices to work on \(\Delta\). Put
\[
    \mathcal D=h-P.
\]
If
\[
    S
    =
    A\cos\pi b+B\cos\pi c+C\cos\pi s,
\]
then the definition of \(\Theta_1\) and \eqref{eq:hexagonal-P-closed-form} give
\begin{equation}
    \mathcal D
    =
    h-\frac13-\frac14L_0+\frac1{36}S.
    \label{eq:hexagonal-D-closed-form}
\end{equation}
We divide the chamber into three regions, using different elementary trigonometric bounds in each.

\subsubsection*{Region I: \(a\geq\frac12\)}

Here
\[
    0\leq d\leq s\leq\frac12.
\]
In particular,
\[
    1-s+d\geq0,
    \qquad
    1-s-d\geq0,
    \qquad
    1-2s\geq0.
\]
In the term \(-L_0/4\) of \eqref{eq:hexagonal-D-closed-form} we therefore use the upper bound
\[
    \cos(\pi t)\leq1-4t^2,
    \qquad
    0\leq t\leq\frac12,
\]
at
\[
    t=\frac{s}{2}+\frac{d}{6},
    \qquad
    t=\frac{s}{2}-\frac{d}{6},
    \qquad
    t=\frac{d}{3}.
\]
For the three products in \(S\) we use
\[
    \cos x\geq1-\frac{x^2}{2}.
\]
The quadratic lower bound for the first factor in each product is positive on the present region; if the corresponding lower bound for the second factor is negative, the product estimate is automatic since the actual product is nonnegative. More explicitly,
\[
\begin{aligned}
    L_0
    &\leq
    (1-s+d)\left(1-\frac{(3s+d)^2}{9}\right)
    +(1-s-d)\left(1-\frac{(3s-d)^2}{9}\right)\\
    &\qquad
    +(1-2s)\left(1-\frac{4d^2}{9}\right),
\end{aligned}
\]
and
\[
\begin{aligned}
    S
    &\geq
    \left(1-\frac{\pi^2(3s+d)^2}{72}\right)
    \left(1-\frac{\pi^2(s-d)^2}{8}\right)\\
    &\quad+
    \left(1-\frac{\pi^2(3s-d)^2}{72}\right)
    \left(1-\frac{\pi^2(s+d)^2}{8}\right)\\
    &\quad+
    \left(1-\frac{\pi^2d^2}{18}\right)
    \left(1-\frac{\pi^2s^2}{2}\right).
\end{aligned}
\]
Substituting these two bounds into \eqref{eq:hexagonal-D-closed-form} and collecting terms gives
\begin{equation}
\begin{aligned}
    \mathcal D
    \geq{}&
    \frac1{648}R_1(s,d)\\
    &+
    \frac{\pi^4}{10368}
    (s^2-d^2)(9s^2-d^2)
    \geq
    \frac1{648}R_1(s,d),
\end{aligned}
    \label{eq:hexagonal-region-one-bound}
\end{equation}
where
\[
\begin{aligned}
    R_1(s,d)
    ={}&
    s^2
    \bigl(
        405-18\pi^2-324s
    \bigr)\\
    &+
    d^2
    \bigl(
        \pi^4s^2+36s+27-6\pi^2
    \bigr).
\end{aligned}
\]
The first coefficient is positive on \(0\leq s\leq1/2\), since
\[
    405-18\pi^2-324s
    \geq
    243-18\pi^2>0.
\]
If the coefficient of \(d^2\) is nonnegative, then \(R_1\geq0\), with equality only if \(s=d=0\). If that coefficient is negative, the inequality \(d^2\leq s^2\) gives
\[
    R_1(s,d)
    \geq
    s^2
    \bigl(
        432-24\pi^2-288s+\pi^4s^2
    \bigr).
\]
The quadratic factor is decreasing on \([0,1/2]\), because
\[
    2\pi^4s-288<0
\]
there, and at \(s=1/2\) it equals
\[
    288-24\pi^2+\frac{\pi^4}{4}>0.
\]
Thus
\[
    \mathcal D\geq0
\]
throughout Region I, with equality only at
\[
    s=d=0,
\]
that is, at the origin.

\subsubsection*{Region II: \(a\leq\frac12\) and \(c\leq\frac12\)}

The chamber constraints imply
\[
    0\leq d\leq a\leq\frac12.
\]
Since \(s=1-a\), the trigonometric factors in
\eqref{eq:hexagonal-D-closed-form} become
\[
\begin{aligned}
    A&=\sin\frac{\pi(3a-d)}6,
    &
    B&=\sin\frac{\pi(3a+d)}6,
    &
    C&=\cos\frac{\pi d}{3},\\
    \cos\pi b&=\sin\frac{\pi(a+d)}2,
    &
    \cos\pi c&=\sin\frac{\pi(a-d)}2,
    &
    \cos\pi s&=-\cos\pi a.
\end{aligned}
\]
All sine arguments lie in \([0,\pi/2]\). We use
\[
    \frac{2x}{\pi}\leq\sin x\leq x,
    \qquad
    0\leq x\leq\frac{\pi}{2},
\]
together with
\[
    \cos x\geq1-\frac{x^2}{2}
\]
and
\[
    \cos(\pi t)\leq1-4t^2,
    \qquad
    0\leq t\leq\frac12.
\]
In \(-L_0/4\), the coefficients of \(A\) and \(B\) are nonpositive,
whereas the coefficient of \(C\) is nonnegative. Thus we use the upper
sine bounds for \(A,B\) and the quadratic lower bound for \(C\). In
\(S/36\), the first two products are bounded below using the chord
estimate for each sine. For the last product, \(0\leq C\leq1\) and
\[
    -C\cos\pi a
    \geq
    -\cos\pi a
    \geq
    -(1-4a^2).
\]
Thus
\[
\begin{aligned}
    -\frac14L_0
    &\geq
    -\frac{\pi}{24}
    \bigl[(a+d)(3a-d)+(a-d)(3a+d)\bigr]\\
    &\qquad
    +\frac{1-2a}{4}
    \left(1-\frac{\pi^2d^2}{18}\right),
\end{aligned}
\]
while
\[
    \frac1{36}S
    \geq
    \frac1{36}
    \left[
        \frac{(3a-d)(a+d)+(3a+d)(a-d)}{3}
        -(1-4a^2)
    \right].
\]
Together with
\[
    h=a+\frac{(1-a)^2-d^2}{8},
\]
collecting terms in \eqref{eq:hexagonal-D-closed-form} gives
\begin{equation}
    216\mathcal D
    \geq
    N(a,d),
    \label{eq:hexagonal-region-two-bound}
\end{equation}
where
\[
\begin{aligned}
    N(a,d)
    ={}&
    3+54a+(63-54\pi)a^2\\
    &+
    \bigl(
        6\pi^2a+18\pi-3\pi^2-31
    \bigr)d^2.
\end{aligned}
\]
The coefficient of \(d^2\) is bounded below by its value at \(a=0\),
and
\[
    18\pi-3\pi^2-31>-5.
\]
For instance, the latter follows from \(3<\pi<22/7\). Since
\(d^2\leq a^2\),
\[
    N(a,d)
    \geq
    3+54a+(58-54\pi)a^2.
\]
The quadratic on the right is concave on \([0,1/2]\), so its minimum
there is attained at an endpoint. Its endpoint values are
\[
    3
    \qquad\text{and}\qquad
    \frac{89-27\pi}{2}>0,
\]
the last inequality again following from \(\pi<22/7\). Hence
\[
    \mathcal D>0
\]
throughout Region II.

\subsubsection*{Region III: \(c\geq\frac12\)}

Set
\[
    t=c=1-a-b
\]
and keep \(b\) fixed. The parameter range is
\begin{equation}
    0\leq b\leq\frac12,
    \qquad
    \frac12\leq t\leq1-b.
    \label{eq:hexagonal-region-three-domain}
\end{equation}
Define
\[
    x=\frac{\pi(t-b)}3,
    \qquad
    y=\frac{\pi(2b+t)}3,
    \qquad
    z=x+y.
\]
Then
\[
    0\leq x\leq y\leq\frac\pi2,
    \qquad
    \frac\pi3\leq z\leq\frac{2\pi}3.
\]

Differentiating \(\mathcal D=h-P\) twice with respect to \(t\), using \eqref{eq:hexagonal-P-closed-form}, gives
\begin{align}
    \mathcal D_{tt}
    ={}&
    \frac{\pi^2}{9}(1-2b)\cos z
    -
    \frac\pi3(\sin x+\sin y)
    \notag\\
    &-
    \frac{\pi^2}{9}
    \left[
        \left(\frac t2-\frac14\right)\cos y
        +
        \left(\frac{b+t}{2}-\frac14\right)\cos x
    \right]
    \notag\\
    &-
    \frac{\pi^2}{81}
    \bigl[
        \cos2x+\cos2y+4\cos2z
    \bigr].
    \label{eq:hexagonal-second-derivative}
\end{align}

The range \eqref{eq:hexagonal-region-three-domain} gives
\[
\begin{aligned}
    \sin x+\sin y
    &=
    2
    \sin\frac{\pi(2t+b)}6
    \cos\frac{\pi b}{2}\\
    &\geq
    2
    \sin\frac{\pi(1+b)}6
    \cos\frac{\pi b}{2}.
\end{aligned}
\]
Set
\[
    G(b)
    =
    2
    \sin\frac{\pi(1+b)}6
    \cos\frac{\pi b}{2}.
\]
The addition formula gives
\[
    G(b)
    =
    \sin\frac{\pi(1+4b)}6
    +
    \sin\frac{\pi(1-2b)}6.
\]
Both terms are concave on \(0\leq b\leq1/2\), so \(G\) is concave. Since
\[
    G(0)=G(1/2)=1,
\]
we obtain
\begin{equation}
    \sin x+\sin y\geq1.
    \label{eq:hexagonal-sine-lower-bound}
\end{equation}

Next set
\[
    S_2
    =
    \cos2x+\cos2y+4\cos2z.
\]
Using
\[
    \cos2x+\cos2y
    =
    2\cos z\cos(y-x),
\]
we have
\[
    S_2
    =
    8\cos^2z
    +
    2\cos z\cos(y-x)
    -4.
\]
If \(z\leq\pi/2\), then \(\cos z\geq0\). Moreover
\[
    y-x=\pi b\in[0,\pi/2],
\]
so \(\cos(y-x)\geq0\). The first two terms in the last expression for \(S_2\) are therefore nonnegative, and hence
\[
    S_2\geq-4.
\]
If \(z\geq\pi/2\), write
\[
    q_0=-\cos z\in[0,1/2].
\]
Since \(\cos(y-x)\leq1\),
\[
    S_2
    \geq
    8q_0^2-2q_0-4
    \geq
    -\frac{33}{8}.
\]

The two coefficients in the square brackets of \eqref{eq:hexagonal-second-derivative} are nonnegative by \eqref{eq:hexagonal-region-three-domain}. If \(z\leq\pi/2\), then \(z\in[\pi/3,\pi/2]\), so \(0\leq\cos z\leq1/2\); moreover \(0\leq1-2b\leq1\). Hence
\[
    (1-2b)\cos z\leq\frac12,
\]
and therefore
\[
    \mathcal D_{tt}
    \leq
    -\frac\pi3
    +
    \frac{17\pi^2}{162}
    <0,
\]
since \(17\pi<54\).
If \(z\geq\pi/2\), the first term in \eqref{eq:hexagonal-second-derivative} is nonpositive, and therefore
\[
    \mathcal D_{tt}
    \leq
    -\frac\pi3
    +
    \frac{11\pi^2}{216}
    <0,
\]
since \(11\pi<72\).
Thus, for fixed \(b\), the function \(t\mapsto\mathcal D(b,t)\) is strictly concave. Its minimum on \([1/2,1-b]\) is therefore attained at an endpoint.

The endpoint \(t=1/2\) lies in Region II and gives a strictly positive value. It remains to treat
\[
    t=1-b,
    \qquad\text{equivalently }a=0.
\]
On this boundary let
\[
    d=c-b\in[0,1].
\]
Then
\[
    \mathcal D=E(d),
\]
where
\begin{equation}
\begin{aligned}
    E(d)
    ={}&
    -\frac{d^2}{8}
    +
    \frac d2\sin\frac{\pi d}{6}
    +
    \frac7{36}\cos\frac{\pi d}{3}\\
    &+
    \frac1{36}\cos\frac{2\pi d}{3}
    -
    \frac5{24}.
\end{aligned}
    \label{eq:hexagonal-boundary-function}
\end{equation}
Put
\[
    \theta=\frac{\pi d}{6}.
\]
Differentiating \eqref{eq:hexagonal-boundary-function} gives
\[
\begin{aligned}
    108E'(d)
    ={}&
    54\theta\cos\theta
    -\frac{162}{\pi}\theta
    +54\sin\theta\\
    &-
    7\pi\sin2\theta
    -
    2\pi\sin4\theta
    =:F_0(\theta).
\end{aligned}
\]
One checks that
\[
    F_0(0)=F_0(\pi/6)=0,
\]
while
\[
\begin{aligned}
    F_0''(\theta)
    ={}&
    \sin\theta
    \bigl[
        -162
        +56\pi\cos\theta
        +128\pi\cos\theta\cos2\theta
    \bigr]\\
    &-
    54\theta\cos\theta.
\end{aligned}
\]
For \(0\leq\theta\leq\pi/6\),
\[
    \cos\theta\geq\frac{\sqrt3}{2},
    \qquad
    \cos2\theta\geq\frac12,
    \qquad
    \theta\cos\theta\leq\sin\theta.
\]
Therefore
\[
    F_0''(\theta)
    \geq
    (60\sqrt3\,\pi-216)\sin\theta
    \geq0.
\]
Here \(60\sqrt3\,\pi-216>0\) (equivalently \(\sqrt3\,\pi>18/5\)). The inequality is strict for \(\theta>0\), so \(F_0\) is strictly convex on \([0,\pi/6]\). Since its endpoint values are zero,
\[
    F_0(\theta)<0,
    \qquad
    0<\theta<\frac{\pi}{6}.
\]
Thus
\[
    E'(d)<0,
    \qquad
    0<d<1.
\]
Finally,
\[
    E(1)=0,
\]
and hence
\[
    E(d)>0
    \qquad
    \text{for }0\leq d<1.
\]
The only zero in Region III is therefore
\[
    a=0,
    \qquad
    b=0,
    \qquad
    c=1.
\]

Combining the three regions proves \eqref{eq:hexagonal-P-below-h}. In the chosen chamber the contact set consists of the origin and
\[
    (u,v)=\left(\frac13,\frac13\right).
\]
The six dihedral images of this point are the six Voronoi vertices. In lattice coordinates, modulo \(\mathbb Z^2\), they fall into exactly two classes,
\[
    q=\left(\frac13,\frac13\right)
    \qquad\text{and}\qquad
    -q\equiv\left(\frac23,\frac23\right)=2q.
\]
Hence the global contact set is exactly \(H_3=\{0,q,2q\}\).
\end{proof}

\subsection{The extremal measure}
\label{subsec:hexagonal-extremal-measure}

Since the series in \eqref{eq:hexagonal-spectral-minorant} is absolutely convergent, \cref{prop:measure-energy-diagonalization} gives, for every \(\mu\in\mathcal P(T_{\mathrm{hex}})\),
\begin{align}
    \mathcal E_P(\mu)
    &=
    \frac13
    +
    \sum_{n\geq0}
    \frac{w_n}{|\mathcal O_n|}
    \sum_{k\in\mathcal O_n}
    |\widehat\mu(k)|^2
    \notag\\
    &\geq
    \frac13.
    \label{eq:hexagonal-positive-energy-bound}
\end{align}
Together with \cref{prop:hexagonal-spectral-minorant}, this yields
\[
    \mathcal E_h(\mu)
    \geq
    \mathcal E_P(\mu)
    \geq
    \frac13.
\]
Since
\[
    \mathcal E_h(\nu_{H_3})=\frac13,
\]
we obtain
\begin{equation}
    \min_{\mu\in\mathcal P(T_{\mathrm{hex}})}
    \mathcal E_h(\mu)
    =
    \frac13.
    \label{eq:hexagonal-h-minimum}
\end{equation}

Suppose now that
\[
    \mathcal E_h(\mu)=\frac13.
\]
Then equality holds throughout the preceding chain, so
\[
    \mathcal E_{h-P}(\mu)=0.
\]
The function \(h-P\) is continuous and nonnegative. Hence
\[
    (h-P)(x-y)=0
    \qquad
    \text{for all }
    x,y\in\operatorname{supp}\mu.
\]
By \cref{prop:hexagonal-spectral-minorant},
\[
    x-y\in H_3
    \qquad
    \text{for all }
    x,y\in\operatorname{supp}\mu.
\]
Fixing \(x_0\in\operatorname{supp}\mu\), we conclude that
\[
    \operatorname{supp}\mu
    \subset
    x_0+H_3.
\]

Translate the measure so that \(x_0=0\), and write
\[
    \mu
    =
    \lambda_0\delta_0
    +
    \lambda_1\delta_q
    +
    \lambda_2\delta_{2q},
    \qquad
    \lambda_0+\lambda_1+\lambda_2=1.
\]
Equality in \eqref{eq:hexagonal-positive-energy-bound} and the strict positivity of \(w_0\) give
\[
    \widehat\mu(k)=0
    \qquad
    \text{for every }k\in\mathcal O_0.
\]
Taking \(k_0=(1,0)\) and writing
\[
    \omega=e^{-2\pi i/3},
\]
we obtain
\[
    0
    =
    \widehat\mu(k_0)
    =
    \lambda_0+\lambda_1\omega+\lambda_2\omega^2.
\]
Together with
\[
    \lambda_0+\lambda_1+\lambda_2=1,
\]
this forces
\[
    \lambda_0=\lambda_1=\lambda_2=\frac13.
\]
Thus the minimizers in \eqref{eq:hexagonal-h-minimum} are precisely the translates of \(\nu_{H_3}\).

We now return to the distance kernel. From \eqref{eq:r-below-one-minus-h},
\[
\begin{aligned}
    \mathcal E_r(\mu)
    &\leq
    1-\mathcal E_h(\mu)\\
    &\leq
    \frac23.
\end{aligned}
\]
The measure \(\nu_{H_3}\) attains \(2/3\), so
\[
    \max_{\mu\in\mathcal P(T_{\mathrm{hex}})}
    \mathcal E_r(\mu)
    =
    \frac23.
\]
Moreover, equality forces \(\mathcal E_h(\mu)=1/3\), and hence the maximizers are precisely the translates of \(\nu_{H_3}\).

Finally, for every \(\alpha\geq1\),
\[
    0\leq r^\alpha\leq r
\]
because \(0\leq r\leq1\). Therefore
\[
    \mathcal E_{r^\alpha}(\mu)
    \leq
    \mathcal E_r(\mu)
    \leq
    \frac23.
\]
For \(\nu_{H_3}\), every pairwise normalized distance is either \(0\) or \(1\), so
\[
    \mathcal E_{r^\alpha}(\nu_{H_3})
    =
    \frac23
\]
for every \(\alpha\geq1\). If equality holds for some \(\mu\), then
\[
    \frac23
    =
    \mathcal E_{r^\alpha}(\mu)
    \leq
    \mathcal E_r(\mu)
    \leq
    \frac23,
\]
so \(\mu\) is again a translate of \(\nu_{H_3}\).

Since
\[
    \rho_{\Lambda_{\mathrm{hex}}}
    =
    D r,
    \qquad
    D=\frac1{\sqrt3},
\]
we conclude that
\[
    I_\alpha(\mu)
    =
    3^{-\alpha/2}
    \mathcal E_{r^\alpha}(\mu),
\]
and hence
\[
    \max_{\mu\in\mathcal P(T_{\mathrm{hex}})}
    I_\alpha(\mu)
    =
    \frac23\,3^{-\alpha/2}.
\]
This proves \cref{thm:hexagonal-extremizers}.

\section{Concluding Remarks}
\label{sec:concluding-remarks}

For flat tori of dimension at least two, the cut locus leaves a quantitative Fourier signature in every positive power of the distance: each Voronoi facet produces positive high-frequency coefficients along frequencies approaching its normal, with a leading term determined by the facet. These coefficients yield finite negative-type obstructions and smooth perturbations that increase the distance energy away from Haar measure. The exact maximization results show, in contrast, that the global extremizers depend sensitively on the lattice geometry: rectangular tori exhibit a transition at \(\alpha=2\), whereas the regular hexagonal torus has the same three-point extremizers throughout \(\alpha\geq1\).

Several natural questions remain. For an arbitrary two-dimensional lattice, one may ask for geometric conditions on the Voronoi cell that force a maximizing measure to be supported on a finite subgroup, and whether the order and geometry of that subgroup can be predicted from the lattice.

The Fourier asymptotics also leave open the complete high-frequency sign pattern of
\[
    \widehat{\rho_\Lambda^\alpha}(\xi).
\]
It would be interesting to determine how much of the Voronoi geometry is encoded by these signs and asymptotics. In the hexagonal case, one can further ask whether the infinite positive spectral chain used in \cref{subsec:hexagonal-spectral-minorant} admits a finite positive-definite certificate with the same contact set.

The exact maximization problem for \(0<\alpha<1\) remains open, already for the square torus. The arguments used here are no longer sharp in that regime, and a different analysis appears to be required.

\section*{Acknowledgement}
The author used OpenAI ChatGPT (GPT-5.6 Sol) to discuss proof strategies, check calculations, and assist with manuscript preparation. All mathematical statements, proofs, computations, and citations were independently verified by the author, who takes full responsibility for the content of the manuscript.

\raggedbottom
\bibliographystyle{amsalpha}
\bibliography{references}

\end{document}